\newcolumntype{C}[1]{>{\Centering}m{#1}}
\newcommand{\B}{\vspace*{-\smallskipamount}}
\newtheorem{claim}[theorem]{Claim}
\def\@copyrightspace{\relax}
\begin{document}

\markboth{F. Afrati et al.}{Assignment Problems of Different-Sized Inputs in MapReduce}
\title{Assignment Problems of Different-Sized Inputs in MapReduce}
\author{FOTO AFRATI
\affil{National Technical University of Athens, Greece}
SHLOMI DOLEV, EPHRAIM KORACH, and SHANTANU SHARMA
\affil{Ben-Gurion University, Israel}
JEFFREY D. ULLMAN
\affil{Stanford University, USA}}

\begin{abstract}
A MapReduce algorithm can be described by a \emph{mapping schema}, which assigns inputs to a set of reducers, such that for each required output there exists a reducer that receives all the inputs that participate in the computation of this output. Reducers have a capacity, which limits the sets of inputs that they can be assigned. However, individual inputs may vary in terms of size. We consider, for the first time, mapping schemas where input sizes are part of the considerations and restrictions. One of the significant parameters to optimize in any MapReduce job is communication cost between the map and reduce phases. The communication cost can be optimized by minimizing the number of copies of inputs sent to the reducers. The communication cost is closely related to the number of reducers of constrained capacity that are used to accommodate appropriately the inputs, so that the requirement of how the inputs must meet in a reducer is satisfied. In this work, we consider a family of problems where it is required that each input meets with each other input in at least one reducer. We also consider a slightly different family of problems in which, each input of a list, $X$, is required to meet each input of another list, $Y$, in at least one reducer. We prove that finding an optimal mapping schema for these families of problems is NP-hard, and present a bin-packing-based approximation algorithm for finding a near optimal mapping schema.
\end{abstract}

\category{H.2.4}{Systems}{Parallel Databases} \category{H.2.4}{Systems}{Distributed Databases} \category{C.2.4}{Distributed Systems}{Distributed Databases}

\terms{Design, Algorithms, Performance}

\keywords{Distributed computing, mapping schema, MapReduce algorithms, reducer capacity, and reducer capacity and communication cost tradeoff}

\begin{bottomstuff}
This paper is accepted in \textbf{ACM Transactions on Knowledge Discovery from Data (TKDD), August 2016.} Preliminary versions of this paper have appeared in the proceeding of DISC 2014 and BeyondMR 2015~\cite{A2A}.

This work of F. Afrati is supported by the project Handling Uncertainty in Data Intensive Applications, co-financed by the European Union (European Social Fund) and Greek national funds, through the Operational Program ``Education and Lifelong Learning,'' under the program THALES. This work of S. Dolev is partially supported by Rita Altura Trust Chair in Computer Sciences, Lynne and William Frankel Center for Computer Sciences, Israel Science Foundation (grant number 428/11), Cabarnit Cyber Security MAGNET Consortium, and Ministry of Science and Technology, Infrastructure Research in the Field of Advanced Computing and Cyber Security.

Author's addresses: F. Afrati, School of Electrical and Computing Engineering, National Technical University of Athens, Greece (e-mail: \texttt{afrati@softlab.ece.ntua.gr}), S. Dolev, Department of Computer Science, Ben-Gurion University of the Negev, Beer-Sheva, Israel (e-mail: \texttt{dolev@cs.bgu.ac.il}), E. Korach, Department of Industrial Engineering and Management, Ben-Gurion University of the Negev, Israel (e-mail: \texttt{korach@bgu.ac.il}), S. Sharma, Department of Computer Science, Ben-Gurion University of the Negev, Beer-Sheva, Israel (e-mail: \texttt{sharmas@cs.bgu.ac.il}), J.D. Ullman, Department of Computer Science, Stanford University, USA (e-mail: \texttt{ullman@cs.stanford.edu}).
\end{bottomstuff}

\maketitle

\section{Introduction}
\label{sec:introduction}
MapReduce~\cite{DBLP:conf/osdi/DeanG04} is a programming system used for parallel processing of large-scale data. It has two phases, the {\em map phase} and the {\em reduce phase}. The given input data is processed by the map phase that applies a user-defined map function to produce intermediate data (of the form $\langle key, value \rangle$). Intermediate data is, then, processed by the reduce phase that applies a user-defined reduce function to keys and their associated values. The final output is provided by the reduce phase. A detailed description of MapReduce can be found in Chapter 2 of~\cite{DBLP:books/ullman2011}.

\medskip\medskip\noindent \textbf{Communication Cost and Reducer Capacity.} An important performance measure for MapReduce algorithms is the amount of data transferred from the \emph{mappers} (the processes that implement the map function) to the \emph{reducers} (the processes that implement the reduce function). This is called the \emph{communication cost}. The minimum communication cost is, of course, the size of the desired inputs that provide the final output, since we need to transfer all these inputs from the mappers to the reducers at least once. However, we may need to transfer the same input to several reducers, thus increasing the communication cost.

Depending on various factors of our setting, each reducer may process a larger or smaller amount of data. The amount of data each reducer processes however affects the wall clock time of our algorithms and the degree of parallelization. If we send all data in one reducer, then we have low communication (equal to the size of the data) but we have low degree of parallelization, and thus the wall clock time increases. Thus, the maximum amount of data a reducer can hold is a constraint when we build our algorithm.

\noindent\emph{Reducer capacity.} We define \emph{reducer capacity} to be the upper bound on the sum of the sizes of the $value$s that are assigned to the reducer. For example, we may choose the reducer capacity to be the size of the main memory of the processor on which the reducer runs or we may arbitrarily set a low reducer capacity if we want high parallelization. We always assume in this paper that all the reducers have an identical capacity, denoted by $q$.

There are various works in the field of MapReduce algorithms design (\textit{e}.\textit{g}.,~\cite{DBLP:conf/soda/KarloffSV10,DBLP:journals/crossroads/Ullman12,DBLP:journals/pvldb/AfratiSSU13,DBLP:journals/corr/abs-1004-4708,DBLP:conf/ics/PietracaprinaPRSU12,DBLP:conf/ideas/AfratiU13}) that investigate problems and/or build algorithms with minimum communication cost when the reducer size is bounded by the number of inputs that a reducer is allowed to hold. In this paper, we consider for the first time problems where each input may have a different size and the {\em reducer capacity}
is an upper bound on the sum of the sizes of the inputs in a reducer. Here, we investigate the problem where each input is required to meet in a reducer with any other input. We give now some examples where this problem may appear in practice.

\medskip\medskip
\noindent \textbf{Motivating Examples.} We present three examples.
\begin{example}
\noindent \textit{Computing common friends.} An input is a list of friends. We have such lists for $m$ persons. Each pair of lists of friends corresponds to one output, which will show us the common friends of the respective persons. Thus, it is mandatory that lists of friends of every two persons are compared. Specifically, the problem is: a list $F=\{f_1, f_2, \ldots, f_m\}$ of $m$ friends is given, and each pair of elements $\langle f_i,f_j\rangle$ corresponds to one output, common friends of persons $i$ and $j$; see Figure~\ref{fig:friend}.
\end{example}

\begin{figure}[h]
    \begin{minipage}[t]{0.45\linewidth}
        \centering
          \includegraphics[width=55mm, height=40mm]{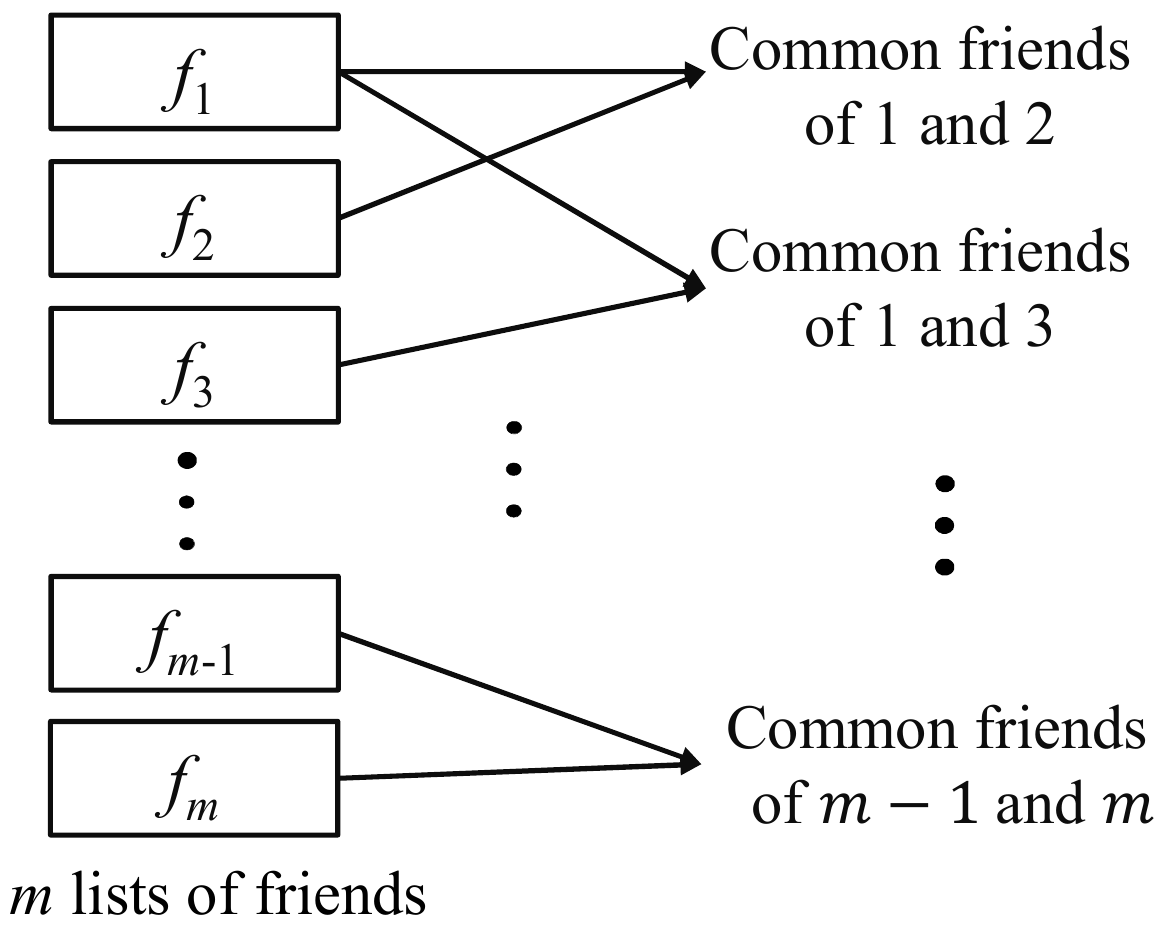}
      \caption{Computing common friends example.}
       \label{fig:friend}
    \end{minipage}
    \quad\quad
    \begin{minipage}[t]{0.49\linewidth}
        \centering
           \includegraphics[width=65mm, height=35mm]{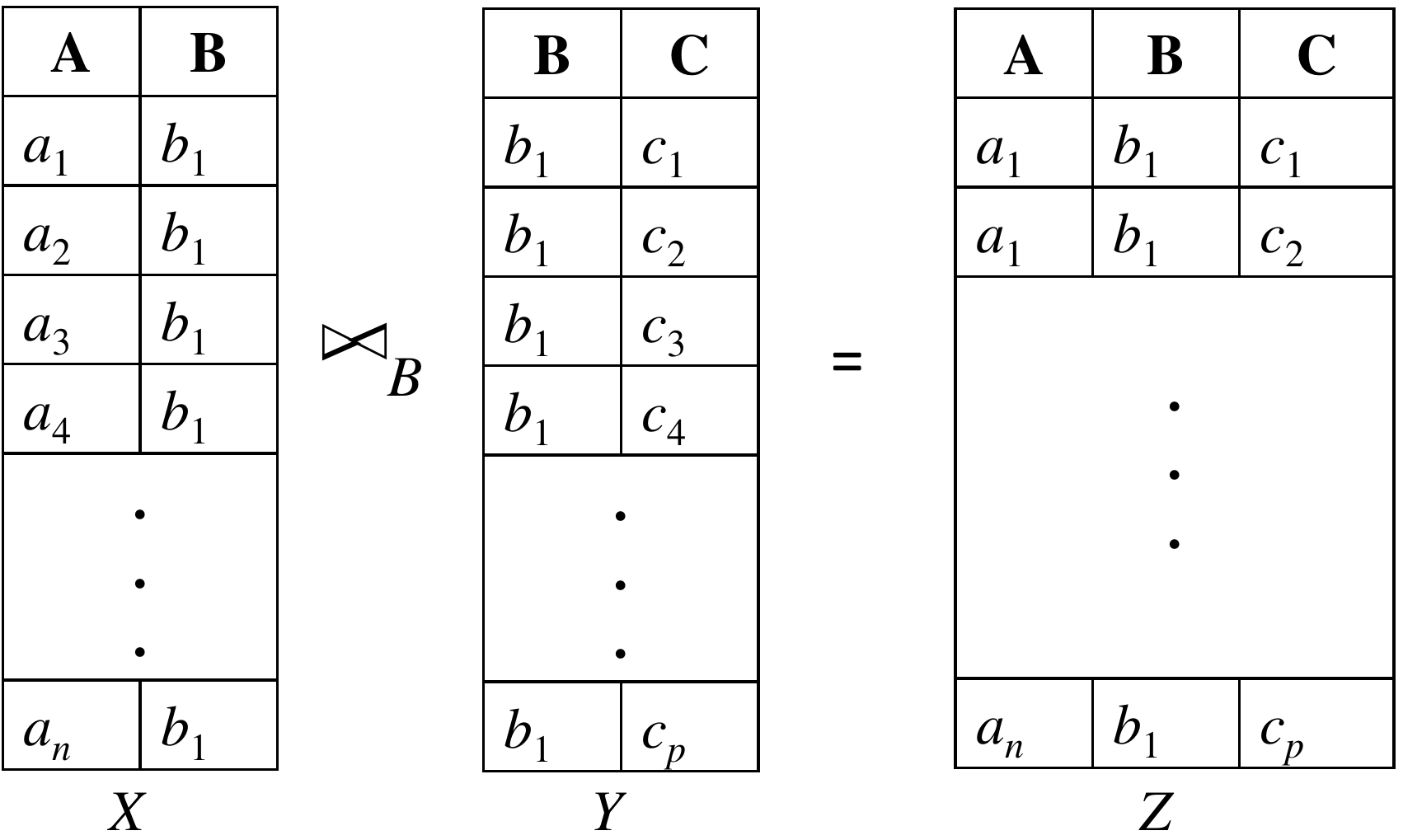}
            \caption{Skew join example for a heavy hitter, $b_1$.}
        \label{fig:twowayjoin}
    \end{minipage}
\end{figure}

\begin{example}
\textit{Similarity-join.} Similarity-join is an example of the \emph{A2A mapping schema problem} that can be used to find the similarity between any two inputs, \textit{e}.\textit{g}., Web pages or documents. A set of $m$ inputs (\textit{e}.\textit{g}., Web pages) $\mathit{WP}= \{wp_1, wp_2, \ldots, wp_m\}$, a similarity function $sim(x, y)$, and a similarity threshold $t$ are given, and each pair of inputs $\langle wp_x, wp_y\rangle$ corresponds to one output such that $sim(wp_x, wp_y) \geq t$.

It is necessary to compare all-pairs of inputs when the similarity measure is sufficiently complex that shortcuts like locality-sensitive hashing are not available. Therefore, it is mandatory that every two inputs (Web pages) of the given input set ($\mathit{WP}$) are compared. The similarity-join is useful in various applications, mentioned in~\cite{Bayardo:2007:SUP:1242572.1242591}, \emph{e}.\emph{g}., near-duplicate document detection, collaborative filtering, and query refinement for Web search.
\end{example}

\begin{example}
\noindent \textit{The drug-interaction problem.} The drug-interaction problem is given in~\cite{DBLP:journals/crossroads/Ullman12}, where a list of inputs consists of 6,500 drugs and a drug $i$ holds information about the medical history of patients who had taken the drug $i$. The objective is to find pairs of drugs that had particular side effects. In order to achieve the objective, it is mandatory that each pair of drugs is compared.
\end{example}

\begin{example}
\noindent \textit{Skew join of two relations $X(A,B)$ and $Y(B,C)$.} The join of relations $X(A,B)$ and $Y(B,C)$, where the joining attribute is $B$, provides output tuples $\langle a, b, c\rangle$, where $(a,b)$ is in $A$ and $(b,c)$ is in $C$. One or both of the relations $X$ and $Y$ may have a large number of tuples with an identical $B$-value. A value of the joining attribute $B$ that occurs many times is known as a {\em heavy hitter}. In skew join of $X(A,B)$ and $Y(B,C)$, all the tuples of both the relations with an identical heavy hitter should appear together to provide the output tuples.

In Figure~\ref{fig:twowayjoin}, $b_1$ is considered as a heavy hitter; hence, it is required that all the tuples of $X(A,B)$ and $Y(B,C)$ with the heavy hitter, $B= b_1$, should appear together to provide the desired output tuples, $\langle a, b_1, c\rangle$ ($a\in A, b_1\in B, c\in C$), which depend on exactly two inputs. \end{example}

\medskip\medskip
\noindent \textbf{Problem Statements.}
We define two problems where exactly two inputs are required for computing an output:
\begin{description}
\item[\emph{All-to-All problem}.] In the \emph{all-to-all} (\textit{A2A}) problem, a list of inputs is given, and each pair of inputs corresponds to one output.

  \item [\emph{X-to-Y problem}.] In the \emph{X-to-Y} (\textit{X2Y}) problem, two disjoint lists $X$ and $Y$ are given, and each pair of elements $\langle x_i, y_j\rangle$, where $x_i \in X, y_j \in Y, \forall i, j$, of the lists $X$ and $Y$ corresponds to one output.
\end{description}
Computing common friends on a social networking site, and the drug-interaction problem are examples of A2A problems.
Skew join is an example of a X2Y problem.

A {\em mapping schema} defines a MapReduce algorithm. A {\em mapping schema} assigns input to reducers, so that no reducer exceeds the reducer capacity and all pairs of inputs (in A2A problem) or all pairs of X-to-Y inputs (in X2Y problem) meet in the same reducer.\footnote{For more general problems, we are given the graph which defines which pairs of inputs should meet in the same reducer to solve the problem and this is what the mapping schema should achieve -- but
we do not consider such problems here.}

The {communication cost, is a significant factor in the performance of a MapReduce algorithm. The communication cost comes with a tradeoff in the degree of parallelism, as we mentioned. A mapping schema is {\em optimal} if there is no other mapping schema with a lower communication cost. In this paper, we investigate how to construct optimal mapping schemas or good approximations of them.

\medskip\medskip \noindent \textbf{Outline of Paper and Our Contribution.}
In this paper, we investigate the problem of finding an optimal or near optimal mapping schema for the case we have inputs of different sizes.
\begin{itemize}
\item In Section~\ref{section:the_system_setting}, we warm up to the problem with discussing how the tradeoffs appear.

\item In Section~\ref{sec:Intractability of Finding a Mapping Schema}, we prove that finding an optimal mapping schema is intractable.

\item In Section~\ref{sec:Approximation Algorithms Preliminary Results}, we present preliminary results and present one of our techniques to obtain near optimal mapping schemas. The technique is to do bin-packing first and collect inputs in bins, then treat bins as inputs, possibly all of equal size.

\item In Section~\ref{sec:Equal-Sized Inputs Optimal Algorithms}, we present algorithms to construct optimal mapping schemas in certain cases where the inputs are all of equal size.

\item In Sections~\ref{subsec:Generalizing the Technique from section q is equal to 3} and~\ref{subsec:A hybrid approach}, we combine bin-packing and algorithmic techniques from Section~\ref{sec:Equal-Sized Inputs Optimal Algorithms} to build algorithms that construct mapping schemas that are good approximations to the optimal. For each algorithm, we argue in the end how good an approximation this is.

\item In Section~\ref{subsec:Generalizing Techniques from the AU method}, we extend the idea presented in Section~\ref{subsec:AU method} for equal size inputs.

\item In Sections~\ref{subsec:Generalizing the Technique from section q is equal to 3} and~\ref{subsec:A hybrid approach}, we only considered the case when there is no input of size $>\frac{q}{2}$ (remember we denote with $q$ the reducer capacity). Thus in Section~\ref{sec:A big input of size greater than q2}, we investigate the case where there is an input of size $>\frac{q}{2}$. We mainly use similar techniques as in Section~\ref{sec:Approximation Algorithms Preliminary Results}.

\item So far we have investigated the A2A problem. In Section~\ref{sec:a heuristic for the X-meets-Y Mapping Schema Problem}, we take the X2Y problem to provide algorithms for this too.
\end{itemize}

\medskip\medskip\noindent \textbf{Related Work.}
MapReduce was introduced by Dean and Ghemawat in 2004~\cite{DBLP:conf/osdi/DeanG04}. Karloff et al.~\cite{DBLP:conf/soda/KarloffSV10} presents a model for comparing MapReduce with the Parallel Random Access Machine (PRAM) model and states that a large class of PRAM algorithms can be simulated by MapReduce. However, parallel and sequential computations (used in MapReduce) differentiate MapReduce and PRAM model. Another model considers the efficiency of MapReduce algorithms in terms of algorithm's running time, suggested in~\cite{DBLP:journals/corr/abs-1004-4708}. The author simulates PRAM algorithms by MapReduce and defines memory-bound for MapReduce algorithms in terms of reducer I/O sizes for each round and each reducer.

Following~\cite{DBLP:conf/soda/KarloffSV10,DBLP:journals/corr/abs-1004-4708}, a filtering technique for MapReduce is suggested in~\cite{DBLP:conf/spaa/LattanziMSV11}. This technique removes some of nonessential data and results in fewer rounds than in both the previous stated models~\cite{DBLP:conf/soda/KarloffSV10,DBLP:journals/corr/abs-1004-4708}. Essentially, the models, in~\cite{DBLP:conf/soda/KarloffSV10,DBLP:journals/corr/abs-1004-4708,DBLP:conf/spaa/LattanziMSV11}, provide a way to simulate a large family of PRAM algorithms by MapReduce.

Afrati et al.~\cite{DBLP:journals/pvldb/AfratiSSU13} presents a model for MapReduce algorithms where an output depends on two inputs, and shows a tradeoff between the communication cost and parallelism. In \cite{DBLP:conf/ideas/AfratiU13}, the authors consider a case where each pair of inputs produces an output and present an upper bound that meets the lower bound on the communication cost as a function of the number of inputs sent to a reducer. However, both in \cite{DBLP:journals/pvldb/AfratiSSU13} and \cite{DBLP:conf/ideas/AfratiU13} the authors regard the reducer capacity in terms of the number of inputs (assuming each input is of an identical size) sent to a reducer.

Our setting is closely related to the settings given by Afrati et al.~\cite{DBLP:journals/pvldb/AfratiSSU13}, but we allow the input sizes to be different. To the best of our knowledge, we for the first time do not restrict the input sizes to be identical. Thus, we consider a more realistic settings for MapReduce algorithms that can be used in various practical scenarios.

\medskip
\section{Mapping Schema and Tradeoffs}
\label{section:the_system_setting}
Our system setting is an extension of the standard system setting~\cite{DBLP:journals/pvldb/AfratiSSU13} for MapReduce algorithms, where we consider, for the first time, inputs of different sizes. In this section, we provide formal definitions and some examples to show the tradeoff between communication cost
and degree of parallelization.

\medskip \noindent \textbf{Mapping Schema.} A mapping schema is an assignment of the set of inputs to some given reducers so that the following two constraints are satisfied:

\begin{itemize}
  \item A reducer is assigned inputs whose sum of the sizes is less than or equal to the reducer capacity $q$.
  \item For each output, we must assign its corresponding inputs to at least one reducer in common.
\end{itemize}
A mapping schema is optimal when the communication cost is minimum. The number of reducers we use often is minimal for an optimal mapping schema but this may not always be the case. It is desirable to minimize the number of reducers too. We offer insight about communication cost and number of reducers uses in Examples~\ref{example:a2a example} and~\ref{example:x2y example}.

\medskip \noindent \textbf{Tradeoffs.}
The following tradeoffs appear in MapReduce algorithms and in particular in our setting:
\begin{itemize}
  \item A tradeoff between the reducer capacity and the number of reducers. For example, large reducer capacity allows the use of a smaller number of reducers.
  \item A tradeoff between the reducer capacity and the parallelism. For example, if we want to achieve a high degree of parallelism, we set low reducer capacity.
  \item A tradeoff between the reducer capacity and the communication cost. For example, in the case reducer capacity is equal to the total size of the data then we can use one reducer and have minimum communication (of course, this goes at the expense of parallelization).
\end{itemize}
In the subsequent subsections, we present the A2A mapping schema problem and the X2Y mapping schema problem with fitting examples and explain the tradeoffs.

\medskip
\subsection{The \emph{A2A Mapping Schema Problem}}
\label{subsec:All-to-All Mapping Schema Problem_system_setting}

An instance of the \emph{A2A mapping schema problem} consists of a list of $m$ inputs whose input size list is $W=\{w_1, w_2, \ldots, w_m\}$ and a set of $z$ identical reducers of capacity $q$. A solution to the \emph{A2A mapping schema problem} assigns every pair of inputs to at least one reducer in common, without exceeding $q$ at any reducer.

\begin{figure}[!h]
 \centering
 \includegraphics{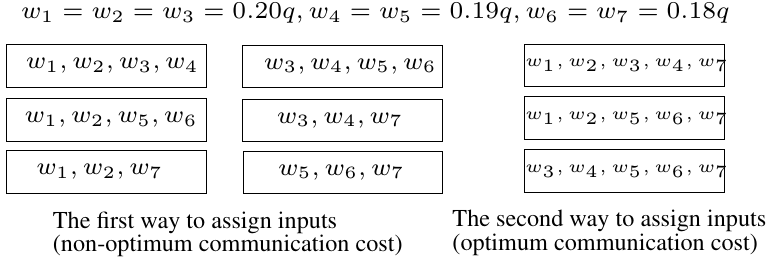}
 \caption{An example to the \emph{A2A mapping schema problem}.}
 \label{fig:A2A_assignment_example}
\end{figure}

\begin{example}\label{example:a2a example}
\noindent We are given a list of seven inputs $I=\{i_1, i_2, \ldots, i_7\}$ whose size list is $W=\{0.20q, 0.20q, 0.20q, 0.19q, 0.19q, 0.18q, 0.18q\}$ and reducers of capacity $q$. In Figure~\ref{fig:A2A_assignment_example}, we show two different ways that we can assign the inputs to reducers. The best we can do to minimize the communication cost is to use three reducers. However, there is less parallelism at the reduce phase as compared to when we use six reducers. Observe that when we use six reducers, then all reducers have a lighter load, since each reducer may have capacity less than $0.8q$.

The communication cost for the second case (3 reducers) is approximately $3q$, whereas for the first case
(6 reducers) it is approximately $4.2q$. Thus, in tradeoff, in the 3-reducers case we have low communication cost but also lower degree of parallelization, whereas in the 6-reducers case we have
high parallelization at the expense of the communication cost.

\end{example}

\begin{figure}[!h]
 \centering
 \includegraphics[scale=0.4]{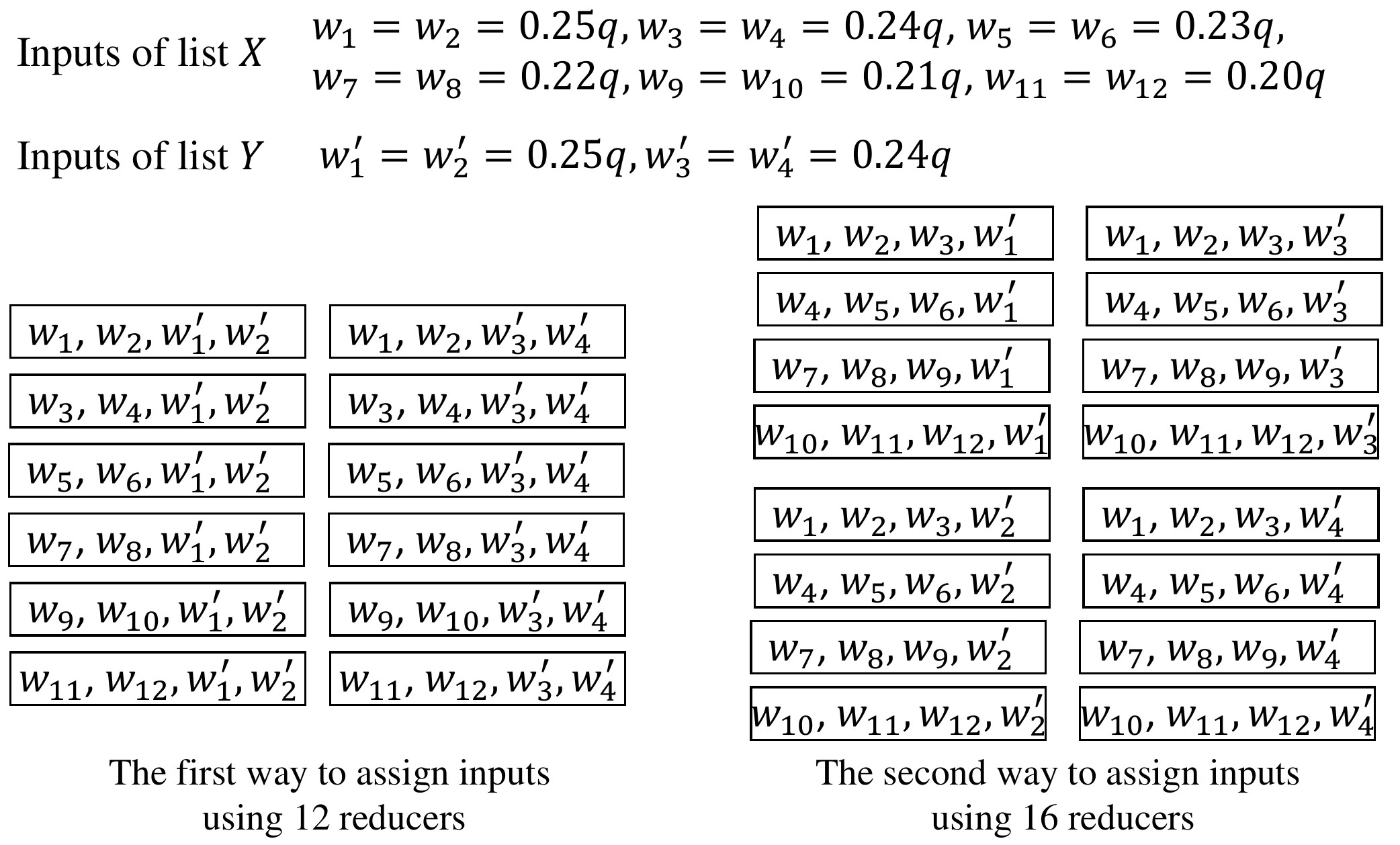}
 \caption{An example to the \emph{X2Y mapping schema problem}.}
 \label{fig:X2Y_assignment_example}
\end{figure}

\medskip
\subsection{The \emph{X2Y Mapping Schema Problem}}
\label{subsec:X-meets-Y mapping schema problem_system_setting}
An instance of the \emph{X2Y mapping schema problem} consists of two disjoint lists $X$ and $Y$ and a set of identical reducers of capacity $q$. The inputs of the list $X$ are of sizes $w_1, w_2, \ldots, w_m$, and the inputs of the list $Y$ are of sizes $w_1^{\prime}, w_2^{\prime}, \ldots, w_n^{\prime}$. A solution to the \emph{X2Y mapping schema problem} assigns every two inputs, the first from one list, $X$, and the second from the other list, $Y$, to at least one reducer in common, without exceeding $q$ at any reducer.

\begin{example}\label{example:x2y example}
\noindent We are given two lists, $X$ of 12 inputs, and $Y$ of 4 inputs (see Figure~\ref{fig:X2Y_assignment_example}) and reducers of capacity $q$. We show that we can assign each input of the list $X$ with each input of the list $Y$ in two ways. In order to minimize the communication cost, the best way is to use 12 reducers. Note that we cannot obtain a solution for the given inputs using less than 12 reducers. However, the use of 12 reducers results in less parallelism at the reduce phase as compared to when we use 16 reducers.
\end{example}

\begin{itemize}
\item In this paper, we assume we have made a decision on the degree of parallelization we want (by setting the reducer capacity $q$).
\end{itemize}

\medskip
\section{Intractability of Finding a Mapping Schema}
\label{sec:Intractability of Finding a Mapping Schema}
In this section, we will show that the \emph{A2A} and the \emph{X2Y} mapping schema problems do not possess a polynomial solution. In other words, we will show that the assignment of \textit{two required inputs} to the minimum number of identical-capacity reducers to find solutions to the \emph{A2A} and the \emph{X2Y} mapping schema problems cannot be achieved in polynomial time.

\medskip
\subsection{NP-hardness of the \emph{A2A Mapping Schema Problem}}
\label{subsection:the_all-to-all_mapping_schema_problem}

A list of inputs $I=\{i_1,i_2,\ldots,i_m\}$ whose input size list is $W=\{w_1, w_2, \ldots, w_m\}$ and a set of identical reducers $R=\{r_1,r_2,\ldots,r_z\}$, are an input instance to the \emph{A2A mapping schema problem}. The \emph{A2A mapping schema problem} is a decision problem that asks whether or not there exists a mapping schema for the given input instance such that every input, $i_x$, is assigned with every other input, $i_y$, to at least one reducer in common. An answer to the \emph{A2A mapping schema problem} will be \enquote{yes,}
if for each pair of inputs ($\langle i_x,i_y\rangle$), there is at least one reducer that holds them.

In this section, we prove that the \emph{A2A mapping schema problem} is NP-hard in the case of $z>2$ identical reducers. In addition, we prove that the \emph{A2A mapping schema problem} has a polynomial solution to one and two reducers.

If there is only one reducer, then the answer is \enquote{yes} if and only if the sum of the input sizes $\sum_{i=1}^m w_i$ is at most $q$. On the other hand, if $q< \sum_{i=1}^m w_i$, then the answer is \enquote{no.} In case of two reducers, if a single reducer is not able to accommodate all the given inputs, then there must be at least one input that is assigned to only one of the reducers, and hence, this input is not paired with all the other inputs. In that case, the answer is \enquote{no.} Therefore, we achieve a polynomial solution to the \emph{A2A mapping schema problem} for one and two identical-capacity reducers.

We now consider the case of $z>2$ and prove that the \emph{A2A mapping schema problem} for $z>2$ reducers is at least as hard as the partition problem.

\medskip\begin{theorem}
\label{th:all-to-all}
The problem of finding whether a mapping schema of $m$ inputs of different input sizes exists, where every two inputs are assigned to at least one of $z\geq 3$ identical-capacity reducers, is NP-hard.
\end{theorem}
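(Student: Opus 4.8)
The plan is to reduce from the classical \textsc{Partition} problem, which the paper has already signaled as the intended source of hardness. Recall that an instance of \textsc{Partition} is a multiset of positive integers $a_1, a_2, \ldots, a_n$ with total sum $S = \sum_{k=1}^n a_k$, and the question is whether there is a subset summing to exactly $S/2$. The task is to build, in polynomial time, an instance of the \emph{A2A mapping schema problem} (a list of input sizes, a reducer capacity $q$, and a number of reducers $z$) that admits a valid mapping schema if and only if the \textsc{Partition} instance is a \enquote{yes} instance.

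First I would set the reducer capacity to $q = S/2$ and take the input sizes to be exactly $w_k = a_k$ for $k = 1, \ldots, n$, so that the total size of all inputs is $S = 2q$. The natural target is $z = 3$ reducers (which also establishes the claim for all $z \geq 3$, since extra reducers cannot hurt and the construction can be padded to force any larger fixed $z$ if needed). The key structural observation driving the reduction is this: since the A2A requirement forces \emph{every} pair of inputs to share a reducer, no single input can be \enquote{isolated} on a reducer that no other input touches. With total size $2q$ and capacity $q$ per reducer, a single reducer cannot hold everything; so the inputs must be spread across at least two reducers, and any input must appear on a reducer together with every input it is not co-located with elsewhere.

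Next I would argue the equivalence in both directions. For the forward direction, if a balanced partition into sets $A$ and $B$ each of size $q$ exists, I would assign $A$ to reducer $r_1$, $B$ to reducer $r_2$, and then use $r_3$ to cover the cross pairs --- but here the obstacle surfaces: the pairs straddling $A$ and $B$ must also meet, and a single reducer of capacity $q$ cannot hold both $A$ and $B$. This means the straightforward three-reducer assignment does not immediately close, so I would need a sharper capacity bookkeeping: the intended construction likely adds two large \enquote{anchor} inputs, each of size $q/2$ (or a comparable gadget), that must be paired with everything, thereby forcing the small inputs to split evenly across the reducers holding the two anchors. I expect the real design to introduce such auxiliary inputs so that a valid schema exists precisely when the $a_k$ can be split into two halves of equal sum.

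The main obstacle, and the part demanding the most care, is getting the gadget exactly right so that \textbf{feasibility of the mapping schema is equivalent to, not merely implied by, a balanced partition}. I would verify the reverse direction with equal rigor: assuming a valid mapping schema on the constructed instance, I must show the forced structure (each anchor co-located with all small inputs, capacity saturated) compels the small inputs to divide into two groups of sum exactly $q$, recovering a solution to \textsc{Partition}. Throughout, I would confirm that the reduction runs in polynomial time (input sizes and $q$ are polynomially bounded in the \textsc{Partition} encoding) and that the number of reducers is a fixed constant $z \geq 3$, matching the theorem's hypothesis. Nailing the precise sizes of the anchor gadgets so that the capacity constraints bite in exactly the right way is where I anticipate spending the bulk of the effort.
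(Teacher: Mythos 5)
There is a genuine gap: your proposal identifies the right source problem (\textsc{Partition}) and the right general idea (auxiliary large ``anchor'' inputs plus a tight capacity count), but it never actually produces the construction, and this construction is the entire content of the proof. Your base setup ($q=S/2$, the $a_k$ as the only inputs) is infeasible regardless of the partition: each input $i$ must be replicated to at least $\lceil (S-w_i)/(q-w_i)\rceil \geq 3$ reducers, so the required communication exceeds the total capacity $3q$ of three reducers. You recognize this and propose to fix it with ``two anchors of size $q/2$ or a comparable gadget,'' but that candidate also fails: with two anchors of size $q/2$ the total input size becomes $3q$, exactly the total capacity of three reducers, so every input would appear exactly once and no two inputs on different reducers would ever meet. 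Saying that the right gadget ``likely'' exists and that nailing its sizes is ``where the bulk of the effort'' lies is precisely the part of the proof that cannot be deferred.

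For comparison, the paper's construction (for general $z\geq 3$) keeps the original inputs at their sizes $w_1,\dots,w_m$ with sum $s$, adds $z-3$ \emph{medium} inputs of size $s/2$ each and one \emph{big} input of size $(z-2)s/2$, and sets the capacity to $q=(z-1)s/2$. For $z=3$ this means: one big input of size $s/2$ and capacity $q=s$ (not $q=s/2$). Then one reducer holds all original inputs (total exactly $s$), and the other two each hold the big input together with one side of the partition (each side summing to $s/2$), so feasibility follows from a balanced partition. The converse is a saturation argument: each original and medium input must appear at least twice and the big input at least $z-1$ times, and the resulting lower bound on total occupied capacity equals the total capacity $z\cdot(z-1)s/2$ exactly, so the two reducers containing the big input must each be filled by original inputs summing to exactly $s/2$, recovering the partition. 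Your sketch gestures at both directions but supplies neither the sizes that make the count tight nor the counting argument itself.
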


The proof appears in Appendix~\ref{app:proof of th}.

\medskip
\subsection{NP-hardness of the \emph{X2Y Mapping Schema Problem}}
\label{subsection:the_x-meets-y_mapping_schema_problem}
Two lists of inputs, $X=\{i_1,i_2,\ldots,i_m\}$ whose input size list is $W_x=\{w_1, w_2, \ldots, w_m\}$ and $Y=\{i_1^{\prime}, i_2^{\prime}, \ldots, i_n^{\prime}\}$ whose input size list is $W_y=\{w_1^{\prime}, w_2^{\prime}, \ldots, w_n^{\prime}\}$, and a set of identical reducers $R=\{r_1,r_2,\ldots,r_z\}$ are an input instance to the \emph{X2Y mapping schema problem}. The \emph{X2Y mapping schema problem} is a decision problem that asks whether or not there exists a mapping schema for the given input instance such that each input of the list $X$ is assigned with each input of the list $Y$ to at least one reducer in common. An answer to the \emph{X2Y mapping schema problem} will be \enquote{yes,} if for each pair of inputs, the first from $X$ and the second from $Y$, there is at least one reducer that has both those inputs.

The \emph{X2Y mapping schema problem} has a polynomial solution for the case of a single reducer. If there is only one reducer, then the answer is \enquote{yes} if and only if the sum of the input sizes $\sum_{i=1}^m w_i+\sum_{i=1}^n w_i^{\prime}$ is at most $q$. On the other hand, if $q< \sum_{i=1}^m w_i+\sum_{i=1}^n w_i^{\prime}$, then the answer is \enquote{no.} Next, we will prove that the \emph{X2Y mapping schema problem} is an NP-hard problem for $z>1$ identical reducers.

\medskip\begin{theorem}
\label{th:x-meets-y}
The problem of finding whether a mapping schema of $m$ and $n$ inputs of different input sizes that belongs to list $X$ and list $Y$, respectively, exists, where every two inputs, the first from $X$ and the second from $Y$, are assigned to at least one of $z\geq2$ identical-capacity reducers, is NP-hard.
\end{theorem}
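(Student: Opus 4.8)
The plan is to prove NP-hardness by a reduction from the \emph{Partition} problem, mirroring the strategy used for the A2A problem in Theorem~\ref{th:all-to-all}. Recall that an instance of Partition is a multiset of positive integers $a_1,\ldots,a_k$ with $\sum_i a_i = 2S$, and the question is whether some subset sums to exactly $S$. The structural fact I would exploit is that the set of $X$-inputs a reducer holds, together with the set of $Y$-inputs it holds, covers exactly the complete bipartite collection of $X$--$Y$ pairs between them. Consequently, if $Y$ is a singleton, covering all required pairs is equivalent to \emph{packing} all of $X$ into those reducers that contain the single $Y$-input, which turns the mapping-schema question into a bin-packing feasibility question.

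For the construction I would take $Y=\{i_1^{\prime}\}$ with size $w_1^{\prime}=1$ (any fixed size $<q$ works), and create genuine $X$-inputs $i_1,\ldots,i_k$ with $w_i = a_i$. I would set the reducer capacity to $q = S+1$, so that a reducer containing the lone $Y$-input can additionally hold $X$-inputs of total size at most $S$. For the base case $z=2$ this already suffices; to reach a general target $z>2$, I would pad $X$ with $z-2$ \emph{dummy} inputs $d_1,\ldots,d_{z-2}$, each of size $q-1=S$, so that any dummy, together with the mandatory copy of $i_1^{\prime}$, exactly fills one reducer. All sizes are polynomial in the numeric size of the Partition instance, so the reduction runs in polynomial time and establishes (weak) NP-hardness, matching the ``at least as hard as Partition'' phrasing used for A2A.

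The forward direction is routine: if some subset of the $a_i$ sums to $S$, I place $i_1^{\prime}$ in all $z$ reducers, give each dummy its own reducer, and split the $i_j$ between the two remaining reducers according to the partition; every reducer then has load at most $q$ and every $X$--$Y$ pair is covered. The reverse direction is where the argument must be made carefully. Since $Y=\{i_1^{\prime}\}$, every $X$-input (genuine or dummy) must share a reducer with $i_1^{\prime}$, so $i_1^{\prime}$ occupies every nonempty reducer. Each dummy of size $S$ then fills its reducer completely, so it can neither share with a genuine input nor coexist with another dummy (two dummies plus $i_1^{\prime}$ would need $2S+1>q$); hence the $z-2$ dummies consume $z-2$ distinct reducers. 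The $k$ genuine inputs, of total size $2S$, must therefore fit in the two remaining reducers, each able to hold only $S$ worth of $X$, forcing an exact split into two halves of size $S$ and thus a valid partition.

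I expect the main obstacle to be exactly this reverse direction, namely ruling out all alternative placements that might satisfy the pairing requirement without inducing an equal partition: showing that a dummy reducer cannot absorb any genuine input, that dummies cannot be consolidated, and that replicating inputs across reducers buys no extra slack (since the binding constraint is the total $X$-mass that can accompany $i_1^{\prime}$ across only two non-dummy reducers). Pinning down this forced structure rigorously, and confirming that the degenerate cases (e.g.\ some $a_i>S$, where both the Partition instance and the constructed instance are simultaneously infeasible) are handled consistently, is the part that requires genuine care rather than routine checking.
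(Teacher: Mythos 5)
Your proposal is correct and matches the paper's own proof essentially exactly: the paper also reduces from Partition by making $Y$ a singleton of size $1$, setting $q = 1 + \frac{s}{2}$ (your $S+1$), and padding $X$ with $z-2$ inputs of size $\frac{s}{2}$ (your dummies of size $S$) so that the two remaining reducers force an equal split of the original inputs. The reverse-direction structure you flag as needing care (the $Y$-input occupying every used reducer, dummies each exactly filling their own reducer, and the residual two reducers each admitting only $S$ of genuine $X$-mass) is precisely the argument the paper gives.
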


The proof appears in Appendix~\ref{app:proof of th}.

\medskip
\section{Approximation Algorithms: Preliminary Results}
\label{sec:Approximation Algorithms Preliminary Results}
Since the \emph{A2A Mapping Schema Problem} is NP-hard, we start looking at special cases and developing approximation algorithm to solve it. We propose several approximation algorithms for the \emph{A2A mapping schema problem} that are based on bin-packing algorithms, selection of a prime number $p$, and division of inputs into two sets based on their sizes.

Each algorithm takes the number of inputs, their sizes, and the reducer capacity (see Table~\ref{table:Reducer capacity and input constraints for different algorithms for the mapping schema problems}). The approximation algorithms have two cases depending on the sizes of the inputs, as follows:
\begin{enumerate}
  \item Input sizes are upper bounded by $\frac{q}{2}$.
  \item One input is of size, say $w_i$, greater than $\frac{q}{2}$, but less than $q$, and all the other inputs have size less than or equal to $q- w_i$. In this case most of the communication cost comes from having to pair the large input with every other input.
\end{enumerate}

Of course, if the two largest inputs are greater than the given reducer capacity $q$, then there is no solution to the \emph{A2A mapping schema problem} because these two inputs cannot be assigned to a single reducer in common.

\medskip\medskip\noindent\textbf{Parameters for analysis.} We analyze our approximation algorithms on the following parameters of the mapping schema created by those algorithms:
\begin{enumerate}
  \item\emph{Number of reducers}. This is the number of reducers used by the mapping schema to send all inputs to.
\item \emph{The communication cost}, $c$. The communication cost is defined to be the sum of all the bits that are required, according to the mapping schema, to transfer from the map phase to the reduce phase.
\end{enumerate}
Table~\ref{table:Bounds} summarizes all the results in this paper. Before describing the algorithms, we look at lower bounds for the above parameters as they are expressed in terms of the reducer capacity $q$ and sum of sizes of all inputs $s$.

\medskip
\begin{theorem}\label{th:a2a_communication_cost}
\textnormal{\textsc{(Lower bounds on the communication cost and number of reducers)}} For a list of inputs and a given reducer capacity $q$, the communication cost and the number of reducers, for the \textit{A2A mapping schema problem}, are at least $\frac{s^2}{q}$ and $\frac{s^2}{q^2}$, respectively, where $s$ is the sum of all the input sizes.
\end{theorem}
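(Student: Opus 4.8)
The plan is to reduce both lower bounds to a single counting inequality about the per-reducer loads. Write $q_j$ for the total size of inputs assigned to reducer $r_j$, so that the capacity constraint reads $q_j \le q$, the communication cost is $c = \sum_j q_j$, and the number of (nonempty) reducers is $z$. The entire argument will rest on the claim that $\sum_j q_j^2 \ge s^2$, after which both statements fall out in two lines.

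First I would establish that claim by expanding each square and double counting. Writing $q_j^2 = \sum_{x,y \in r_j} w_x w_y$ as an ordered sum that \emph{includes} the diagonal terms $x=y$, and summing over $j$, one gets $\sum_j q_j^2 = \sum_{(x,y)} w_x w_y \cdot N_{x,y}$, where $N_{x,y}$ is the number of reducers containing both $i_x$ and $i_y$. The A2A requirement guarantees $N_{x,y} \ge 1$ for every off-diagonal ordered pair, and each input is assigned to at least one reducer, so $N_{x,x} \ge 1$ as well; hence every coefficient is at least $1$ and $\sum_j q_j^2 \ge \sum_{(x,y)} w_x w_y = (\sum_x w_x)^2 = s^2$. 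The reason for keeping the diagonal is that it absorbs the $\sum_x w_x^2$ correction term that a naive pairs-only count would leave behind, which is exactly what makes the final bound come out as a clean $s^2$ rather than $s^2 - \sum_x w_x^2$.

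Given the claim, both bounds are immediate. For the communication cost I use $q_j \le q$ to write $q_j^2 \le q\,q_j$, so $s^2 \le \sum_j q_j^2 \le q \sum_j q_j = q c$, giving $c \ge s^2/q$. For the number of reducers I instead bound $q_j^2 \le q^2$, so $s^2 \le \sum_j q_j^2 \le z q^2$, giving $z \ge s^2/q^2$.

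The only real obstacle is spotting the correct quantity to count: working directly with the unordered pair weight $\sum_{x<y} w_x w_y$ forces one to carry the $\sum_x w_x^2$ term and yields a slightly weaker constant, whereas passing to the ordered sum $\sum_j q_j^2$ and exploiting that every input occupies some reducer turns the diagonal to one's advantage. Once that observation is in hand, the remainder is a routine application of $q_j \le q$. I would also record the implicit assumption $m \ge 2$, so that each input genuinely must meet some other input and is therefore assigned to a reducer; for $m \le 1$ the statement is vacuous.
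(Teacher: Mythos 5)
Your proof is correct, and it takes a genuinely different route from the paper's. The paper argues input by input: input $i$ must meet the remaining inputs of total size $s-w_i$, and any reducer containing $i$ has only $q-w_i$ capacity left for them, so $i$ is replicated to at least $\frac{s-w_i}{q-w_i}$ reducers; summing $w_i\cdot\frac{s-w_i}{q-w_i}$ and using $\frac{s-w_i}{q-w_i}\geq\frac{s}{q}$ (valid when $s\geq q$) gives $c\geq\frac{s^2}{q}$, and the reducer count then follows by dividing the communication cost by $q$. You instead run a single global double count of $\sum_j q_j^2$ against the weighted pair-coverage requirement and finish with $q_j^2\leq q\,q_j$ and $q_j^2\leq q^2$. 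Both yield exactly the same constants, but your argument is arguably tighter in its bookkeeping: it avoids the floor in the paper's replication count (which, as written with $\lfloor\cdot\rfloor$, is not literally a valid lower bound before it is dropped), it does not need the side condition $s\geq q$, and the diagonal terms $N_{x,x}\geq 1$ cleanly absorb the $\sum_x w_x^2$ correction that a pairs-only count would leave behind. What the paper's per-input formulation buys in exchange is an explicit replication lower bound for each individual input, a quantity it reuses when analyzing the approximation algorithms; your aggregate inequality does not directly provide that. Your caveat about $m\geq 2$ (so that every input must indeed occupy some reducer) is a legitimate hypothesis the paper leaves implicit.
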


\begin{proof}
Since an input $i$ is replicated to at least $\big\lfloor\frac{s-w_i}{q-w_i}\big\rfloor$ reducers, the communication cost for the input $i$ is $w_i\times \lfloor\frac{s-w_i}{q-w_i}\big\rfloor$. Hence, the communication cost for all the inputs will be at least $\sum_{i=1}^ m w_i \frac{s-w_i}{q-w_i}$. Since
$s\geq q$, we can conclude $\frac{s-w_i}{q-w_i}\geq\frac{s}{q}$. Thus, the communication cost is at least $\sum_{i=1}^ m w_i \frac{s}{q} = \frac{s^2}{q}$.

Since the communication cost, the number of bits to be assigned to reducers, is at least $\frac{s^2}{q}$, and a reducer can hold inputs whose sum of the sizes is at most $q$, the number of reducers must be at least $\frac{s^2}{q^2}$.
\end{proof}

\begin{table}[t]

\begin{center}
\bgroup
\def\arraystretch{1.75}
\scriptsize
   \begin{tabular}{ | p{5.5cm} |  l | l | l | } \hline

    Cases & Theorems & Communication cost & Approximation ratio\\ \hline \hline

    \multicolumn{4}{c}{The \textbf{lower bounds} for the \textit{A2A mapping schema problem}} \\\hline

    Different-sized inputs &~\ref{th:a2a_communication_cost} & $\frac{s^2}{q}$ & \\ \hline

    Equal-sized inputs &~\ref{th:a2a_2-step-The total communication cost equal input} & $m\big\lfloor\frac{m-1}{q-1}\big\rfloor$ & \\ \hline

    \multicolumn{4}{c}{The \textbf{lower bounds} for the \textit{X2Y mapping schema problem}} \\\hline

    Different-sized inputs &~\ref{th:x2y_communication_cost}  & $\frac{2\cdot sum_x \cdot sum_y}{q}$ &\\ \hline

    \multicolumn{4}{c}{\textbf{Optimal} algorithms for the \emph{A2A mapping schema problem} ($^{\ast}$ equal-sized inputs) } \\\hline

    Algorithm for reducer capacity $q = 2$  & \ref{opt-thm4} &$m(m-1)$ & optimal \\ \hline

    Algorithm for reducer capacity $q = 3$ & \ref{opt-thm4} &$m\big\lfloor\frac{m-1}{2}\big\rfloor$ & optimal \\ \hline

    The \emph{AU} method: When $q$ is a prime number &  \ref{opt-thm4} & $m\big\lfloor\frac{m-1}{q-1}\big\rfloor$& optimal \\ \hline

    \multicolumn{4}{c}{Non-optimal algorithms for the \emph{A2A mapping schema problem} and their \textbf{upper bounds}} \\ \hline

    Bin-packing-based algorithm, not including an input of size $> \frac{q}{2}$ &~\ref{th:our_bounds} & $\frac{4s^2}{q}$ & $\frac{1}{4}$ \\ \hline

    Algorithm~\ref{alg:2stepmethodforoddq} & ~\ref{th:a2a_2-step-The total communication cost1} & $\frac{q}{2k}\big\lceil\frac{sk}{q(k-1)}\big\rceil(\big\lceil\frac{sk}{q(k-1)}\big\rceil-1)$ & $1/k-1$\\ \hline

    Algorithm 2: The first extension of the \emph{AU method} &~\ref{th:first ext to the au method reducer_communication_cost} & $qp(p+1)+z^{\prime}$ & $q/(q+1)$\\ \hline

    Algorithm 3: The second extension of the \emph{AU method} & ~\ref{th:second ext to the au method reducer_communication_cost} & $q^2\times(q(q+1))^{l-1}$ & $(q^l-1)/q(q-1)(q+1)^{l-1}$ \\ \hline

Bin-packing-based algorithm considering an input of size $> \frac{q}{2}$ & ~\ref{th:a2a_big_input}  & $(m-1)\cdot q+\frac{4s^2}{q}$ & $\frac{s^2}{mq^2}$ \\ \hline

    \multicolumn{4}{c}{A non-optimal algorithm for the \emph{X2Y mapping schema problem} and their \textbf{upper bounds}} \\ \hline

    Bin-packing-based algorithm, $q=2b$ & ~\ref{th:our_bounds_x2y}  & $\frac{4\cdot sum_x \cdot sum_y}{b}$ & $\frac{1}{4}$\\ \hline

 \multicolumn{4}{|p{14cm}|}{\textit{Approximation ratio.} The ratio between the optimal communication cost and the communication cost obtained from an algorithm.

 Notations: $s$: sum of all the input sizes. $q$: the reducer capacity. $m$: the number of inputs. $\mathit{sum}_x$: sum of input sizes of the list $X$. $\mathit{sum}_y$: sum of input sizes of the list $Y$. $p$: the nearest prime number to $q$. $l>2$. $k>1$.
 } \\\hline

    \end{tabular}
    \egroup
    \caption{The bounds for heuristics for the \emph{A2A} and the \emph{X2Y} mapping schema problems.}
\label{table:Bounds}

    \end{center}
\end{table}

\begin{table}[t]
\begin{center}
\bgroup
\def\arraystretch{1.75}
\scriptsize
   \begin{tabular}{ | l |  l | } \hline

    Algorithms &  Inputs \\ \hline

\multicolumn{2}{c}{Non-optimal algorithms for the \emph{A2A mapping schema problem}} \\ \hline

Bin-packing-based algorithm &  Any number of inputs of any size \\\hline

Algorithm~\ref{alg:2stepmethodforoddq} &  Any number of inputs of size at most $\frac{q}{k}$, $k>3$\\ \hline

Algorithm 2: The first extension of the \emph{AU method} & $p^2 + p \cdot l + l$, $p+l=q$,  $l>2$\\ \hline

Algorithm 3: The second extension of the \emph{AU method} & $q^l$, $l > 2$ and $q$ is a prime number\\ \hline

\multicolumn{2}{c}{A non-optimal algorithm for the \emph{X2Y mapping schema problem}} \\ \hline

Bin-packing-based algorithm,  $>\frac{q}{2}$  & Any number of inputs of any size \\\hline

\multicolumn{2}{|p{14cm}|}{Notations: $w_i$ and $w_j$: the two largest size inputs of a list. $p$: the nearest prime number to $q$. $w_k$: the largest input of a list $X$. $w_k^{\prime}$: the largest input of a list $Y$.} \\\hline

    \end{tabular}
    \egroup
    \caption{Reducer capacity and input constraints for different algorithms for the mapping schema problems.}
    \label{table:Reducer capacity and input constraints for different algorithms for the mapping schema problems}
\end{center}
\end{table}

\medskip
\subsection{Bin-packing-based Approximation}
\label{subsubsec:Different sized inputs}
Our general strategy for building approximation algorithms is as follows: we use a known bin-packing algorithm to place the given $m$ inputs to bins of size $\frac{q}{k}$, $k\geq 2$. Assume that we need $x$ bins to place $m$ inputs. Now, each of these bins is considered as a single input of size $\frac{q}{k}$ for our problem of finding an optimal mapping schema. Of course, the assumption is that all inputs are of size at most $\frac{q}{k}$, $k\geq 2$.

First-Fit Decreasing (FFD) and Best-Fit Decreasing (BFD)~\cite{Coffman:1996:AAB:241938.241940} are most notable bin-packing algorithms. FFD or BFD bin-packing algorithm ensures that all the bins (except only one bin) are at least half-full. There also exists a pseudo polynomial bin-packing algorithm, suggested by Karger and Scott~\cite{DBLP:conf/approx/KargerS08}, that can place the $m$ inputs in as few bins as possible of certain size.

\begin{example}
Let us discuss in more detail the case $k=2$. In this case, since the reducer capacity is $q$, any two bins can be assigned to a single reducer. Hence, the approximation algorithm uses at most $\frac{x(x-1)}{2}$ reducers, where $x$ is the number of bin; see Figure~\ref{fig:Bin-packing-based heuristic} for an example.
\end{example}

\begin{figure}[h]
\begin{center}
 \includegraphics[scale=1.2]{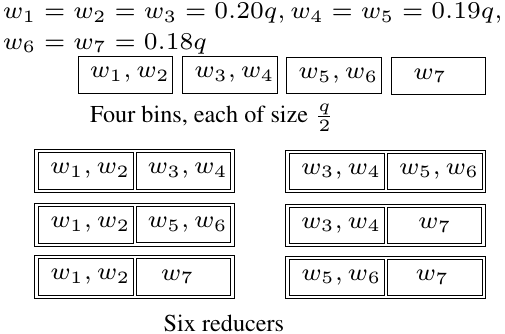}
 \end{center}\B
 \caption{Bin-packing-based approximation algorithm.}
 \label{fig:Bin-packing-based heuristic}
\end{figure}

For this strategy a lower bound on communication cost depends also on $k$ as follows:

\medskip
\begin{theorem}[Lower bound on the communication cost]
\label{th:a2a_2-step-The total communication cost}
Let $q>1$ be the reducer capacity, and let $\frac{q}{k}$, $k>1$, is the bin size. Let the sum of the given inputs is $s$. The communication cost, for the \textit{A2A mapping schema problem}, is at least $s\big\lfloor\frac{\frac{sk}{q}-1}{k-1}\big\rfloor $.
\end{theorem}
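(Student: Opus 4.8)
The plan is to reduce the statement to two independent countings on the induced instance over bins, and then multiply them, mirroring the replication argument already used in the proof of Theorem~\ref{th:a2a_communication_cost}. After the bin-packing step, the original inputs are grouped into some number $x$ of bins, and each bin is henceforth regarded as a single atomic input of size $\frac{q}{k}$; every pair of bins must now be co-located in at least one reducer of capacity $q$. I would first lower bound the number of bins $x$, then lower bound (per bin) the number of reducers to which that bin must be sent, and finally combine the two to bound the total number of transmitted bits.

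For the number of bins: the bins jointly contain content of total size $s$ and each bin has capacity $\frac{q}{k}$, so $x\cdot\frac{q}{k}\ge s$, giving $x\ge\frac{sk}{q}$. For the per-bin replication: since every bin is treated as an input of size $\frac{q}{k}$ and the reducer capacity is $q$, each reducer holds at most $k$ bins, so a reducer containing a given bin pairs it with at most $k-1$ other bins. To meet all of the remaining $x-1$ bins, that bin must therefore be assigned to at least $\lceil\frac{x-1}{k-1}\rceil\ge\lfloor\frac{x-1}{k-1}\rfloor$ reducers. This is the same remaining-capacity accounting as in Theorem~\ref{th:a2a_communication_cost}, specialized to equal bin sizes: substituting $w_i=\frac{q}{k}$ into $\lfloor\frac{s-w_i}{q-w_i}\rfloor$ and using $x\ge\frac{sk}{q}$ recovers precisely this factor.

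To combine the two: one copy of every bin carries total content exactly $s$ (the bins partition the original inputs), and every bin is replicated at least $\lfloor\frac{x-1}{k-1}\rfloor$ times, so the communication cost is at least $s\cdot\lfloor\frac{x-1}{k-1}\rfloor$. Because $\lfloor\frac{x-1}{k-1}\rfloor$ is nondecreasing in $x$ and $x\ge\frac{sk}{q}$, this is at least $s\lfloor\frac{\frac{sk}{q}-1}{k-1}\rfloor$, which is the claimed bound. As in Theorem~\ref{th:a2a_communication_cost}, I would flag the standing assumption $s\ge q$, which is what makes the per-bin factor nontrivial.

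The step I expect to need the most care is the per-bin replication count, and specifically keeping two accountings separate. The \emph{capacity/fitting} count must use the reserved bin size $\frac{q}{k}$, which is what guarantees ``at most $k$ bins per reducer'' verbatim, regardless of how full each bin actually is; without this convention a reducer could hold more than $k$ lightly filled bins and the factor $\lfloor\frac{x-1}{k-1}\rfloor$ would be lost. The \emph{cost} count, by contrast, must use the actual transferred content, whose per-layer total is exactly $s$. Once these are distinguished, the argument is robust: partially filled bins only increase $x$, hence only increase the replication factor, so the worst case is attained with full bins and the stated bound remains a valid lower bound. The only remaining point is the arithmetic interchange between the real quantity $\frac{sk}{q}$ and the integer $x$ inside the floor, which is dispatched by the monotonicity of $\lfloor\cdot\rfloor$ noted above.
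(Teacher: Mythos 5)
Your proposal is correct and follows essentially the same route as the paper's proof: lower bound the number of bins by $x\ge\frac{sk}{q}$, observe that each reducer holds at most $k$ bins so each bin must be replicated to at least $\big\lfloor\frac{x-1}{k-1}\big\rfloor$ reducers, and multiply by the total content per layer of copies to get $s\big\lfloor\frac{\frac{sk}{q}-1}{k-1}\big\rfloor$. Your explicit separation of the capacity accounting (reserved bin size $\frac{q}{k}$) from the cost accounting (actual content $s$), and the monotonicity step justifying the substitution of $\frac{sk}{q}$ for $x$ inside the floor, are minor tightenings of the paper's argument rather than a different approach.
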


\begin{proof}
A bin can hold inputs whose sum of the sizes is at most $\frac{q}{k}$. Since the total sum of the sizes is $s$, it is required to divide the inputs into at least $x=\frac{sk}{q}$ bins. Now, each bin can be considered as an identical sized input.

Since a bin $i$ is required to be sent to at least $\big\lfloor\frac{x-1}{k-1}\big\rfloor$ reducers (to be paired with all the other bins), the sum of the number of copies of ($x$) bins sent to reducers is at least $x\big\lfloor\frac{x-1}{k-1}\big\rfloor$. We need to multiply this by $\frac{q}{k}$ (the size of each bin) to find the communication cost. Thus, we have at least
$$x\Big\lfloor\frac{x-1}{k-1}\Big\rfloor \frac{q}{k}= \frac{sk}{q}\Big\lfloor\frac{\frac{sk}{q}-1}{k-1}\Big\rfloor \frac{q}{k} = s\Big\lfloor\frac{\frac{sk}{q}-1}{k-1}\Big\rfloor$$
communication cost.
\end{proof}
This communication cost in the above theorem, as expected, is larger than the one in Theorem~\ref{th:a2a_communication_cost}, where no restriction in a specific strategy was taken into account.

\begin{example}
\textit{Example for $k=2$.}
Let us apply our strategy to the case where $k=2$, \textit{i}.\textit{e}., we have the algorithm: (\textit{i}) we do bin-packing to put the inputs in bins of size $\frac{q}{2}$; and (\textit{ii}) we provide a mapping schema for assigning each pair of bins to at least one reducer. Such a schema is easy and has been discussed in the literature (\textit{e}.\textit{g}., \cite{DBLP:journals/crossroads/Ullman12}).
\end{example}

FFD and BFD bin-packing algorithms provide an $\frac{11}{9}\cdot\textsc{Opt}$ approximation ratio~\cite{johnson1973near}, \textit{i}.\textit{e}., if any optimal bin-packing algorithm needs $\textsc{Opt}$ bins to place ($m$) inputs in the bins of a given size $\frac{q}{2}$, then FFD and BFD bin-packing algorithms always use at most $\frac{11}{9}\cdot\textsc{Opt}$ bins of an identical size (to place the given $m$ inputs). Since we require at most $\frac{x(x-1)}{2}$ reducers for a solution to the \emph{A2A mapping schema problem}, the algorithm requires at most ${(\frac{11}{9}\cdot\textsc{Opt})}^2/2$ reducers.

Note that, here in this case, \textsc{Opt} does not indicate the optimal number of reducers to assign $m$ inputs that satisfy the \emph{A2A mapping schema problem}; \textsc{Opt} indicates the optimal number of bins of size $\frac{q}{2}$ that are required to place $m$ inputs.

The following theorem gives the upper bounds that this approximation algorithm achieves on the communication cost and the number of reducers.

\medskip\begin{theorem}
\label{th:our_bounds}
\textnormal{\textsc{(Upper bounds on communication cost and number of reducers for $k=2$)}} The above algorithm using a bin size $b=\frac{q}{2}$ where $q$ is the reducer capacity achieves the following upper bounds: the number of reducers and the communication cost, for the \textit{A2A mapping schema problem}, are at most $\frac{8s^2}{q^2}$, and at most $4\frac{s^2}{q}$, respectively, where $s$ is the sum of all the input sizes.
\end{theorem}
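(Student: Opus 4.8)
The plan is to express both quantities through $x$, the number of bins that FFD/BFD returns when packing the $m$ inputs into bins of size $b=q/2$, and then to bound $x$ in terms of $s$ and $q$. First I would invoke the half-full guarantee recalled above: in an FFD (or BFD) packing at most one bin is filled to less than half its capacity, so at least $x-1$ of the bins hold content at least $\tfrac12\cdot\tfrac q2=\tfrac q4$. Since the bins' contents sum to the total input size $s$, this gives $s>(x-1)\tfrac q4$, i.e.\ $x-1<\tfrac{4s}{q}$, which I will use throughout; up to the resulting unit of slack this reads $x\le \tfrac{4s}{q}$.

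Next I would count reducers. As noted for the case $k=2$, two bins of size at most $q/2$ always fit together, so taking one reducer per unordered pair of bins is a valid A2A schema and uses $\binom{x}{2}=\tfrac{x(x-1)}{2}$ reducers. Substituting $x\le \tfrac{4s}{q}$ yields $\tfrac{x(x-1)}{2}<\tfrac{x^2}{2}\le\tfrac12\bigl(\tfrac{4s}{q}\bigr)^2=\tfrac{8s^2}{q^2}$, the claimed reducer bound.

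For the communication cost I would charge each bin its true content rather than its nominal size $q/2$; this is what produces the constant $4$ rather than $8$. In the pairwise schema every bin lies in exactly $x-1$ reducers, and each time it is shipped we transmit exactly its content, so the total number of transmitted bits is $\sum_{\text{bins }b}\mathrm{size}(b)\cdot(x-1)=(x-1)\sum_{b}\mathrm{size}(b)=(x-1)\,s$. The inequality $x-1<\tfrac{4s}{q}$ then gives communication cost $<\tfrac{4s^2}{q}$ directly.

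The delicate step is the first one: the half-full property only delivers $x-1<\tfrac{4s}{q}$, whereas the stated reducer bound $\tfrac{8s^2}{q^2}$ really wants $x\le\tfrac{4s}{q}$, and carrying the extra unit through $\binom{x}{2}$ leaves a lower-order term of order $\tfrac{2s}{q}$. I would dispose of this either by reading the bounds in the intended regime $s\gg q$ (where the $+1$ is negligible and both bounds hold asymptotically), by handling the degenerate cases $x\le 1$ separately (where the schema uses no reducers and the bounds hold trivially), or by a sharper accounting of the single under-full bin against a half-full partner so that $x\le \tfrac{4s}{q}$ holds outright. The communication-cost half of the theorem needs no such care, since it uses $x-1<\tfrac{4s}{q}$ as is.
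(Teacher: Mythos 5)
Your proposal is correct and follows essentially the same route as the paper's proof: use the FFD/BFD half-full guarantee to bound the number of bins by roughly $\frac{4s}{q}$, count $\binom{x}{2}$ pairwise reducers for the reducer bound, and charge each bin its actual content times its replication $x-1$ for the communication bound. In fact you are more careful than the paper, which silently ignores the one possibly under-half-full bin and asserts $x\le\frac{4s}{q}$ outright; your explicit handling of that unit of slack is a genuine (if minor) improvement rather than a deviation.
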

\begin{proof}
A bin $i$ can hold inputs whose sum of the sizes is at most $b$. Since the total sum of the sizes is $s$, it is required to divide the inputs into at least $\frac{s}{b}$ bins. Since the FFD or BFD bin-packing algorithm ensures that all the bins (except only one bin) are at least half-full, each bin of size $\frac{q}{2}$ has at least inputs whose sum of the sizes is at least $\frac{q}{4}$. Thus, all the inputs can be placed in at most $\frac{s}{q/4}$ bins of size $\frac{q}{2}$. Since each bin is considered as a single input, we can assign every two bins to a reducer, and hence, we require at most $\frac{8s^2}{q^2}$ reducers.
Since each bin is replicated to at most $4\frac{s}{q}$ reducers, the communication cost is at most $\sum_{1\leq i\leq m} w_i \times 4\frac{s}{q} = 4\frac{s^2}{q}$.
\end{proof}

\medskip
\section{Equal-Sized Inputs Optimal Algorithms}
\label{sec:Equal-Sized Inputs Optimal Algorithms}
As we explained, looking at inputs of same size makes sense because we imagine the inputs are being bin-packed into bins of size $\frac{q}{k}$, for $k\geq 2$ (using bin-packing-based algorithm Section~\ref{subsubsec:Different sized inputs}), and that once this is done, we can treat the bins themselves as things of unit size to be sent to the reducers. Thus, in this section, we will shift the notation so that all inputs are of unit size, and $q$ is some small integer, \textit{e}.\textit{g}., 3.

In this section, we provide optimal algorithms for $q=2$ (in Section~\ref{q2-sec}) and $q=3$ (in Section~\ref{q3-sec}). Afrati and Ullman~\cite{DBLP:conf/ideas/AfratiU13} provided an optimal algorithm for the \emph{A2A mapping schema problem} where $q$ is a prime number and the number of inputs is $m=q^2$. We extend this algorithm for $m=q^2+q+1$ inputs (in Section~\ref{subsec:AU method}), and this extension also meets the lower bound on the communication cost. We will generalize these three algorithms in the Sections~\ref{subsec:Generalizing the Technique from section q is equal to 3} and~\ref{subsec:Generalizing Techniques from the AU method}.

In this setting, by minimizing the number of reducers, we minimize communication, since each reducer is more-or-less filled to capacity. So we define
\begin{itemize}
\item $r(m,q)$ to be the minimum number of reducers of capacity q that can solve the all-pairs problem for m inputs.
\end{itemize}

The following theorem sets a lower bound on $ r(m,q)$ and the communication cost for this setting.

\medskip
\begin{theorem}\label{th:a2a_2-step-The total communication cost equal input}
\textnormal{\textsc{(Lower bounds on the communication cost and number of reducers)}} For a given reducer capacity $q>1$ and a list of $m$ inputs, each input is of size one, the communication cost and the number of reducers ($r(m,q)$), for the \textit{A2A mapping schema problem}, are at least $m\big\lfloor\frac{m-1}{q-1}\big\rfloor$ and at least $\big\lfloor\frac{m}{q}\big\rfloor\big\lfloor\frac{m-1}{q-1}\big\rfloor$, respectively.
\end{theorem}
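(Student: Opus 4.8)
The plan is to derive both bounds from a single per-input replication count, exactly mirroring the proof of Theorem~\ref{th:a2a_communication_cost} but specialized to the unit-size setting where the counting can be made clean. Fix any input $i$. Since $i$ must share a reducer with each of the remaining $m-1$ inputs, and any reducer containing $i$ holds at most $q-1$ inputs besides $i$ (the capacity is $q$ and every input has size one), each reducer that contains $i$ can account for at most $q-1$ of the pairings required for $i$. Hence $i$ must be assigned to at least $\big\lceil\frac{m-1}{q-1}\big\rceil$ reducers, and in particular to at least $\big\lfloor\frac{m-1}{q-1}\big\rfloor$ of them, which is the (weaker but stated) bound we need.

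Next I would sum this replication count over all $m$ inputs. Because each input has size one, the communication cost equals the total number of $(\text{input},\text{reducer})$ incidences, i.e. the sum over inputs of the number of reducers holding that input. This gives immediately that the communication cost is at least $m\big\lfloor\frac{m-1}{q-1}\big\rfloor$, the first claimed bound. Observe that this is precisely Theorem~\ref{th:a2a_communication_cost} evaluated at $w_i=1$ and $s=m$, which is a useful consistency check.

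For the reducer bound I would reuse the same incidence count from the other side. Since each reducer holds at most $q$ inputs, the number of $(\text{input},\text{reducer})$ incidences is at most $q$ times the number of reducers; equivalently, the communication cost is at most $q\cdot r(m,q)$. Combining this with the communication-cost bound already established yields $r(m,q)\ge \frac{1}{q}\,m\big\lfloor\frac{m-1}{q-1}\big\rfloor = \frac{m}{q}\big\lfloor\frac{m-1}{q-1}\big\rfloor$.

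The only step requiring care is converting this into the stated integer bound $\big\lfloor\frac{m}{q}\big\rfloor\big\lfloor\frac{m-1}{q-1}\big\rfloor$, and this is the main (if minor) subtlety of the argument. Using $\frac{m}{q}\ge\big\lfloor\frac{m}{q}\big\rfloor$ we get $\frac{m}{q}\big\lfloor\frac{m-1}{q-1}\big\rfloor \ge \big\lfloor\frac{m}{q}\big\rfloor\big\lfloor\frac{m-1}{q-1}\big\rfloor$, and since the right-hand side is already an integer it is a valid lower bound on the integer-valued reducer count. The point to be careful about is that one should replace only the fraction $\frac{m}{q}$ by its floor rather than flooring the whole product, since $\big\lfloor\frac{m}{q}\big\rfloor\big\lfloor\frac{m-1}{q-1}\big\rfloor$ is in general strictly smaller than $\big\lfloor\frac{m}{q}\big\rfloor\big\lfloor\frac{m-1}{q-1}\big\rfloor$ would suggest; the floored-fraction form is the cleanest integer statement that still follows directly from the incidence inequality.
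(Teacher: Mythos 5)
Your proof is correct and follows essentially the same route as the paper's: each input must appear in at least $\big\lfloor\frac{m-1}{q-1}\big\rfloor$ reducers, summing these replication counts gives the communication bound, and dividing the total incidence count by the per-reducer capacity $q$ gives the reducer bound (you additionally supply the pairing justification for the per-input count, which the paper leaves implicit). One cosmetic slip: your final sentence compares $\big\lfloor\frac{m}{q}\big\rfloor\big\lfloor\frac{m-1}{q-1}\big\rfloor$ to itself, but this does not affect the validity of the argument.
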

\begin{proof}
Since an input $i$ is required to be sent to at least $\big\lfloor\frac{m-1}{q-1}\big\rfloor$ reducers, the sum of the number of copies of ($m$) inputs sent to reducers is at least $m\big\lfloor\frac{m-1}{q-1}\big\rfloor$, which result in at least $m\big\lfloor\frac{m-1}{q-1}\big\rfloor$ communication cost.

There are at least $m\big\lfloor\frac{m-1}{q-1}\big\rfloor$ number of copies of ($m$) inputs to be sent to reducers and a reducer can hold at most $q$ inputs; hence, $r(m,q) \geq \big\lfloor\frac{m}{q}\big\rfloor\big\lfloor\frac{m-1}{q-1}\big\rfloor$.
\end{proof}

\medskip
\subsection{Reducer Capacity $q=2$}
\label{q2-sec}
Here, we offer a recursive algorithm and show that this algorithm does not only obtain the bound $r(m,2) \leq \frac{m(m-1)}{2}$, but it does so in a way that divides the reducers into $m-1 $ ``teams'' of $\frac{m}{2}$ reducers, where each team has exactly one occurrence of each input. We will use these properties of the output of this algorithm to build an algorithm for $q=3$ in the next subsection.

\medskip\medskip\noindent\textbf{The recursive algorithm.} We are given a list $A$ of $m$ inputs. The intention is to have all pairs of inputs from list $A$ partitioned into $m-1$ teams with each team containing exactly $\frac{m}{2}$ pairs and each input appearing exactly once within a team. Hence, we will use $\frac{m(m-1)}{2}$ reducers for assigning pairs of each input.

We split $A$ into two sublists $A_1$ and $A_2$ of size $\frac{m}{2}$ each. Suppose, we have the $\frac{m}{2}-1$ teams for a list of size $\frac{m}{2}$. We will take the $\frac{m}{2}-1$ teams of $A_1$, the $\frac{m}{2}-1$ teams of $A_2$ and ``mix them up'' in a rather elaborate way to form the $m-1$ teams for $A$:

Let the teams for $A_1$ and $A_2$ be $\{g_1,g_2,g_3, \ldots, g_{\frac{m}{2}}\}$ and $\{h_1,h_2,h_3, \ldots, h_{\frac{m}{2}}\}$ respectively. We will form two kind of teams, teams of kind I and teams of kind II as follows:

{\sl Teams of kind I.} We will form $\frac{m}{2}$ teams of kind I by taking one input from $A_1$ and one input from $A_2$. For example, the first team for $A$ is $\{(g_1,h_1),(g_2,h_2), (g_3,h_3), \ldots, (g_{\frac{m}{2}},h_{\frac{m}{2}})\}$, the second team for $A$ is $\{(g_1,h_2),(g_2,h_3), (g_3,h_4), \ldots, (g_{\frac{m}{2}},h_1)\}$, and so on.

{\sl Teams of kind II.} We will form the remaining $\frac{m}{2}-1$ teams having $\frac{m}{2}$ reducers in each. In teams of kind I each pair (reducer) contains only inputs from one of the lists $A_1$ or $A_2$. Now we produce pairs, with each pair having both inputs from $A_1$ or $A_2$. In order to do that, we divide recursively divide $A_1$ into two sublists and perform the operation what we performed in the team of kind I. The same procedure is recursively implemented on $A_2$.

\begin{example}
For $m=8$, we form 7 teams. First we form teams of kind I. We divide 8 inputs into two lists $A_1$ and $A_2$. After that, we take one input from $A_1$ and one input from $A_2$, and create 4 teams, see Figure~\ref{fig:team for 8 inputs}. Now, we recursively follow the same rule on each sublist, $A_1$ and $A_2$, and create 3 remaining teams of kind II, see Figure~\ref{fig:team for 8 inputs}.

\begin{figure}[!h]
\centering
\begin{minipage}[b]{0.45\linewidth}
\centering
1,5  ~~~~~~~~  1,6  ~~~~~~~  1,7   ~~~~~~~ 1,8\\
2,6   ~~~~~~~ 2,7   ~~~~~~~ 2,8   ~~~~~~~ 2,5\\
3,7   ~~~~~~~ 3,8   ~~~~~~~ 3,5  ~~~~~~~ 3,6\\
4,8    ~~~~~~~ 4,5   ~~~~~~~ 4,6 ~~~~~~~   4,7 \\
Team 1 ~~ Team 2 ~~ Team 3 ~~ Team 4
\subcaption{Teams of kind I}
\end{minipage}
\begin{minipage}[b]{0.48\linewidth}
\centering
1,3 ~~~~~~~~ 1,4 ~~~~~~~~ 1,2\\
2,4 ~~~~~~~~ 2,3 ~~~~~~~~ 3,4\\
5,7 ~~~~~~~~ 5,8 ~~~~~~~~ 5,6\\
6,8 ~~~~~~~~ 6,7 ~~~~~~~~ 7,8\\
Team 5 ~~ Team 6 ~~ Team 7

     \subcaption{Teams of kind II}
    \end{minipage}

\caption{The teams for $m=8$ and $q=2$.}
\label{fig:team for 8 inputs}
\end{figure}

Actually in Figure~\ref{fig:The 2-step method for q 3}, the teams for this example are shown in non-bold face fonts (two in each triplet in Figure~\ref{fig:The 2-step method for q 3}, notice that they are from 1-8) in teams 1 through 7 in Figure~\ref{fig:The 2-step method for q 3}.

\end{example}
The following theorem is easy to prove.

\medskip
\begin{theorem}
In each team an input appears only once. In each team all inputs appear. There are $m-1$ teams which is the minimum possible. Hence this is an optimal mapping scheme that assigns inputs to reducers.
\end{theorem}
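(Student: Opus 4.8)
The plan is to argue by induction on $m$, assuming (as the halving recursion forces) that $m$ is a power of $2$. The inductive hypothesis is exactly the structural claims applied to a list of size $\frac{m}{2}$: the algorithm returns $\frac{m}{2}-1$ teams, in each of which every one of the $\frac{m}{2}$ inputs occurs in exactly one pair, and collectively these teams realize each of the $\binom{m/2}{2}$ within-half pairs exactly once. The base case $m=2$ is immediate: a single team holding the one pair $\{1,2\}$ satisfies everything.

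For the inductive step I would treat the two kinds of teams separately. For the teams of kind~I, I would observe that team $t$ (for $1 \le t \le \frac{m}{2}$) pairs $g_i$ with $h_j$ where $j-i \equiv t-1 \pmod{m/2}$; this is the standard round-robin family of cyclic-shift matchings of the complete bipartite graph $K_{m/2,m/2}$. Two facts then need checking: each such team is a perfect matching (so every input of $A_1\cup A_2$ appears exactly once, giving the first two claims for kind-I teams), and each cross pair $(g_i,h_j)$ lands in exactly one team, namely the one with $t-1 \equiv j-i$. For the teams of kind~II, I would note that team $k$ is the disjoint union of the $k$-th team of the $A_1$-recursion and the $k$-th team of the $A_2$-recursion; by the inductive hypothesis each covers its own half exactly once, so their union covers all $m$ inputs exactly once, and collectively the $\frac{m}{2}-1$ kind-II teams realize every within-$A_1$ and within-$A_2$ pair exactly once. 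Combining the two kinds, every pair of inputs --- cross, within-$A_1$, or within-$A_2$ --- is covered exactly once, the scheme is valid, and the team count is $\frac{m}{2} + (\frac{m}{2}-1) = m-1$.

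It remains to show $m-1$ is the minimum possible number of teams and that the resulting scheme is optimal. For the lower bound on teams I would use the structural property just proved: since each input occupies exactly one pair per team, it meets exactly one other input per team; as it must meet all $m-1$ other inputs, at least $m-1$ teams are required. Optimality of the scheme then follows by counting: it uses exactly $(m-1)\cdot\frac{m}{2} = \frac{m(m-1)}{2}$ reducers (one per pair, with no pair repeated), which for $q=2$ matches the lower bound $\big\lfloor\frac{m}{q}\big\rfloor\big\lfloor\frac{m-1}{q-1}\big\rfloor = \frac{m}{2}(m-1)$ of Theorem~\ref{th:a2a_2-step-The total communication cost equal input}.

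I expect the main obstacle to be the bookkeeping in the kind-I step: making precise that the cyclic shifts yield genuinely distinct perfect matchings whose union is all of $K_{m/2,m/2}$ with no repeated edge. This is the only place where the combinatorics is non-trivial; everything else is either direct counting or a clean appeal to the inductive hypothesis. A secondary point worth flagging explicitly is the power-of-two assumption, without which the recursive split into two halves of size $\frac{m}{2}$ is not even well-defined.
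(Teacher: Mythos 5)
Your proof is correct, and it is the natural formalization of the paper's recursive construction: the paper itself offers no proof at all (it merely remarks that the theorem ``is easy to prove''), so your induction on $m$ --- verifying that the kind-I teams are the $\frac{m}{2}$ cyclic-shift perfect matchings of $K_{m/2,m/2}$ covering every cross pair exactly once, that the kind-II teams inherit the team property and pair coverage from the two recursive halves, and that the team count $\frac{m}{2}+(\frac{m}{2}-1)=m-1$ matches the lower bound forced by each input meeting only one other input per team --- supplies exactly the argument the paper leaves implicit. Your closing caveat about the power-of-two assumption is also consistent with the paper, which acknowledges the same restriction immediately after the theorem.
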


This works if the number of inputs is a power of two. We can use known techniques to make it work with good approximation in general.

\medskip
\subsection{Reducer Capacity $q=3$}
\label{q3-sec}
Here, we present an algorithm that constructs an optimal mapping schema for $q=3$. Our recursive algorithm starts by taking the mapping schema constructed in previous subsection for $q=2$. We showed there that for $q=2$, we can not only obtain the bound $r(m,2) \leq \frac{m(m-1)}{2}$, but that we can do so in a way that divides the reducers into $m-1$ teams of $\frac{m}{2}$ reducers in each team, where each team has exactly one occurrence of each input.

Now, we split $m$ inputs into two disjoint sets: set $A$ and set $B$. Suppose $m=2n-1$. Set $A$ has $n$ inputs and set $B$ has $n-1$ inputs. We start with the $n$ inputs in set $A$, and create $n-1$ teams of $\frac{n}{2}$ reducers, each reducer getting two of the $n$ inputs in $A$, by following the algorithm given in Section~\ref{q2-sec}. Next, we add to all reducers in one team another input from set $B$. \textit{I}.\textit{e}., in a certain team we add to all $\frac{n}{2}$ reducers of this team a certain input from set $B$, and thus, we form a triplet for each reducer.

Since there are $n-1$ teams, we can handle another $n-1$ inputs. This is the start of a solution for $q=3$ and $m=2n-1$ inputs. To complete the solution, we add the reducers for solving the problem for the $n-1$ inputs of the set $B$. That leads to the following recurrence

\begin{align*}
r(m, 3) &= \frac{n(n-1)}{2} + r(n-1, 3),\: \textnormal{where}\: m=2n-1\\
r(3,3) &= 1
\end{align*}

We solve the recurrence for $m$ a power of 2, and it exactly matches the lower bound of $r(m,3) = \frac{m(m-1)}{6}$. Moreover, notice that we can prove that this case is optimal either by proving that $r(m,3) = m(m-1)/6$ (as we did above) or by observing that every pair of inputs meets exactly in one reducer. This is easy to prove. Hence the following theorem:

\medskip\begin{theorem}
This algorithm constructs an optimal mapping schema for the reducer capacity $3$.
\end{theorem}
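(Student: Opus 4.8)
The plan is to prove two things at once: that the recursive construction yields a valid mapping schema (every pair of inputs shares a reducer, and no reducer holds more than three inputs), and that it is optimal. I would prove both by \emph{strengthening} the validity claim to the statement that every pair of inputs meets in \emph{exactly one} reducer, and then induct on $m$, writing $m=2n-1$ and letting the recursion descend to the set $B$ of size $n-1$. The base case $r(3,3)=1$ places the three inputs in one reducer, which covers its three pairs exactly once and respects capacity $3$.

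For the inductive step I would split the pairs into three classes according to the partition into $A$ (size $n$) and $B$ (size $n-1$), and verify the exactly-once property for each. First, pairs inside $A$: the $q=2$ subroutine applied to $A$ distributes the pairs of $A$ into $n-1$ teams, and by the theorem of Section~\ref{q2-sec} each input of $A$ occurs exactly once per team while every pair of $A$ occupies exactly one reducer; capacity here is $2$. Second, crossing pairs between $A$ and $B$: this is the heart of the argument. Because the $n-1$ teams are in bijection with the $n-1$ inputs of $B$, and because each team already contains every input of $A$ exactly once, attaching a single $b\in B$ to all $\frac{n}{2}$ reducers of its team pairs $b$ with every input of $A$ exactly once and promotes those reducers to triplets (capacity $3$); summing over teams, each crossing pair $(a,b)$ is covered exactly once. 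Third, pairs inside $B$: these are handled by the recursive call on the $n-1$ inputs of $B$, for which the exactly-once property and the capacity bound hold by the induction hypothesis.

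Having established that every pair meets in exactly one reducer with all reducers full triplets, optimality follows by counting: the $\binom{m}{2}$ pairs are partitioned among reducers each covering exactly $\binom{3}{2}=3$ pairs, so the number of reducers is exactly $\binom{m}{2}/3=m(m-1)/6$. Since any schema must cover all $\binom{m}{2}$ pairs and a capacity-$3$ reducer covers at most three of them, $m(m-1)/6$ is also a lower bound, matching the bound recorded in Theorem~\ref{th:a2a_2-step-The total communication cost equal input}; hence the construction is optimal. Equivalently, one can simply solve the recurrence $r(m,3)=\frac{n(n-1)}{2}+r(n-1,3)$ by induction to obtain the closed form $m(m-1)/6$ directly.

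I expect the main obstacle to be the crossing step, where the two invariants proved for the $q=2$ case---``each team contains each input exactly once'' and ``distinct teams are indexed so that each $B$-input attaches to a distinct team''---must be used jointly to guarantee that no crossing pair is either missed or duplicated; this is exactly why the $q=2$ construction was engineered to expose the team structure rather than merely to hit its reducer bound. A secondary technical point is reconciling the recursion's size constraints (the power-of-two requirement inherited from the $q=2$ routine, together with the descent from $n$ to $n-1$) so that every reducer genuinely ends up a full triplet and the clean count $m(m-1)/6$ holds; away from such sizes one should argue optimality through the exactly-once/pair-covering bound rather than through the closed form.
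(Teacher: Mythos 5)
Your proposal is correct and follows essentially the same route as the paper, which only sketches the argument by noting that optimality can be seen either by solving the recurrence $r(m,3)=\frac{n(n-1)}{2}+r(n-1,3)$ to get $m(m-1)/6$ or by observing that every pair meets in exactly one reducer; you flesh out both, including the team-indexed crossing-pair step and the pair-counting lower bound. Your explicit caveat about the power-of-two size constraints is a fair point the paper glosses over with ``we solve the recurrence for $m$ a power of 2.''
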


\begin{figure*}[h]
\begin{center}
  \begin{minipage}{.99\textwidth}
  \centering
  \includegraphics{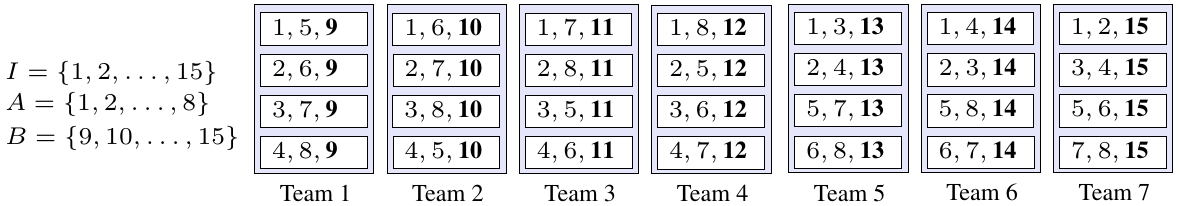}
  \label{fig:2-step-method-odd-q-aa}
  \end{minipage}
  \begin{minipage}{.99\textwidth}
  \centering
  \includegraphics{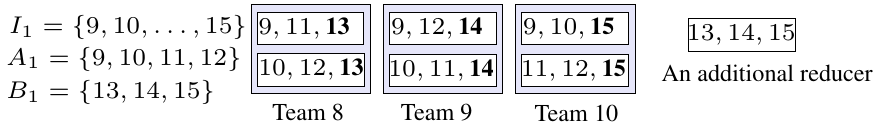}
  \label{fig:2-step-method-odd-q-bb}
  \end{minipage}
\end{center}
\B
\caption{An example of a mapping schema for $q=3$ and $m=15$.}
\label{fig:The 2-step method for q 3}
\end{figure*}

\begin{example}
An example is shown in Figure~\ref{fig:The 2-step method for q 3}. We explained how this figure is constructed for $q=2$ (the non-bold entries). Now we use the algorithm just presented here to construct the 35 ($=15 \times \frac{14}{6}$) reducers. We explain below in detail how we construct these 35 reducers.

We are given 15 inputs ($I=\{1,2,\ldots,15\}$). We create two sets, namely $A$ of $y=8$ inputs and $B$ of $x=7$ inputs, and arrange $(y-1)\times\big\lceil \frac{y}{2} \big\rceil= 28$ reducers in the form of $7$ \emph{teams} of $4$ reducers in each team. These 7 teams assign each input of the set $A$ with all other inputs of the set $A$ and all the inputs in the set $B$ as follows. We pair every two inputs of the set $A$ and assign them to exactly one of 28 reducers as we explained in Section~\ref{q2-sec}. Once every pair of $y=8$ inputs of the set $A$ is assigned to exactly one of 28 reducers, then we assign the $i^{th}$ input of the set $B$ to all the four reducers of $(i-8)^{th}$ team. Thus, \textit{e}.\textit{g}., input 10 is assigned to the four
reducers of Team 2.

Now these 28 reducers have seen that each pair of inputs from set $A$ meet in at least one reducer and each pair of inputs, one from $A$ and one from $B$ meet in at least one reducer. Thus, it remains to build more reducers so that each pair of inputs (both) from set $B$ meet. According to the recursion we explained, we break set $B$ into sets $A_1$ and $B_1$, of size 4 and 3 respectively, and we apply our method again. In particular, we create two sets, $A_1=\{9,10,11,12\}$ of $y_1=4$ inputs and $B_1=\{13,14,15\}$ of $x_1=3$. Then, we arrange $(y_1-1)\times\big\lceil \frac{y_1}{2} \big\rceil= 6$ reducers in the form of $3$ \emph{teams} of $2$ reducers in each team. We assign each pair of inputs of the set $A_1$ to these 6 reducers, and then $i^{th}$ input of the set $B_1$ to all the two reducers of a team, see Team 8 to Team 10.

The last team is constructed so that all inputs in $B_1$ meet at the same reducers (since $B_1$ has only 3 elements and 3 is the size of a reducer, one reducer suffices for this to happen).
\end{example}

\medskip\noindent{\bf Open problem.} Now the interesting observation is that if we can argue that the resulting reducers can be divided into $\frac{m-1}{2}$ teams of $\frac{m}{3}$ reducers each (with each team having one occurrence of each input), then we can extend the idea to $q=4$, and perhaps higher.

\medskip
\subsection{When $q$ or $q-1$ is a prime number}
\label{subsec:AU method}
An algorithm to provide a mapping schema for the reducer capacity $q$, where $q$ is a prime number, and $m=q^2$ inputs is suggested by Afrati and Ullman in~\cite{DBLP:conf/ideas/AfratiU13}. This method meets the lower bounds on the communication cost. We call this algorithm the \emph{AU method}. For the sake of completeness, we provide an overview of the \emph{AU method}. Interested readers may refer to~\cite{DBLP:conf/ideas/AfratiU13}.

\medskip\noindent\textbf{The \emph{AU method}.} We divide the $m$ inputs into $q^2$ equal-sized \emph{subsets} (each with $\frac{m}{q^2}$ inputs) that are arranged in a $Q=q\times q$ square. The subsets in row $i$ and column $j$ are represented by $S_{i,j}$, where $0\leq i<q$ and $0\leq j<q$.

We now organize $q(q+1)$ reducers in the form of $q+1$ \emph{teams} of $q$ \emph{players} (or reducers) in each team. Note that sum of sizes of the inputs in each row and column of the $Q$ square is exactly $q$.

The teams are arranged from 0 to $q$, and the reducers are arranged from 0 to $q-1$. We first arrange inputs to the team $q$. Since the sum of the sizes in each column of the $P$ square is $q$, we place one column of the $P$ square to one reducer of the team $q$. Now we place the inputs to the remaining teams. We use modulo operation for the assignment of each subset to each team. The subset $S_{i, j}$ is assigned to a reducer $r$ of each team $t$, $0\le t < q$, such that $(i+tj)modulo \: q=r$. An example for $q=3$ and $m=9$ is given in Figure~\ref{fig:AU-q-3-m-9}.

\begin{figure*}[h]
  \centering
  \includegraphics[width=150mm]{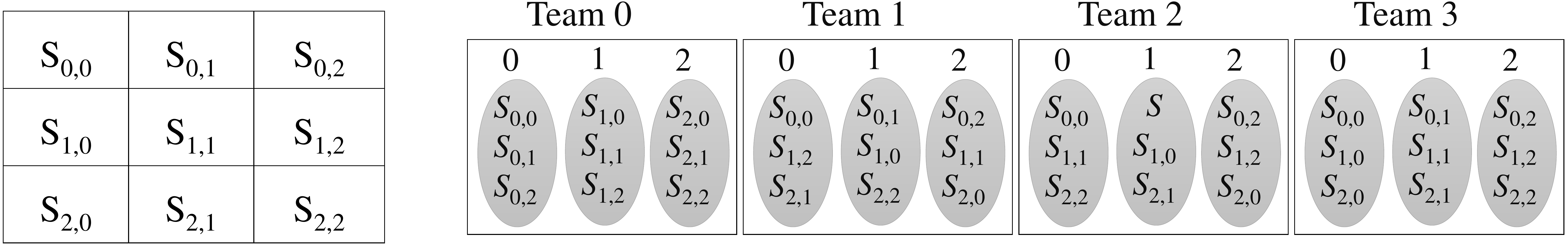}
  \caption{The \emph{AU method} for the reducer capacity $p=3$ and $m=9$.}
\label{fig:AU-q-3-m-9}
\end{figure*}

\medskip\medskip \noindent \emph{Total required reducers.} The \emph{AU method} uses $q(q+1)$ reducers, which are organized in the form of $q+1$ teams of $q$ reducers in each team, and the communication cost is $q^2(q+1)$.

\medskip\medskip\noindent\textbf{A simple extension of the \emph{AU method}}. Now, we can extend the \emph{AU method} as follows: we can add $q+1$ additional inputs, add one to each reducer and add one more reducer that has the $q+1$ new inputs. That gives us reducers of size $q=q+1$ and $m = q^2 + q + 1$, or $r(q^2 + q + 1, q+1) = q(q+1) + 1 = q^2 + q + 1$. If you substitute $m = q^2 + q + 1$ and $p=p+1$, you can check that this also meets the bound of $r = \frac{m(m-1)}{q(q-1)}$. In Figure~\ref{fig:au_example for q plus one}, we show a mapping schema for this extension to the \emph{AU method} for $q=4$ and $m=14$.

\begin{figure*}[h]
\begin{center}
  \begin{minipage}{.99\textwidth}
  \centering
  \includegraphics{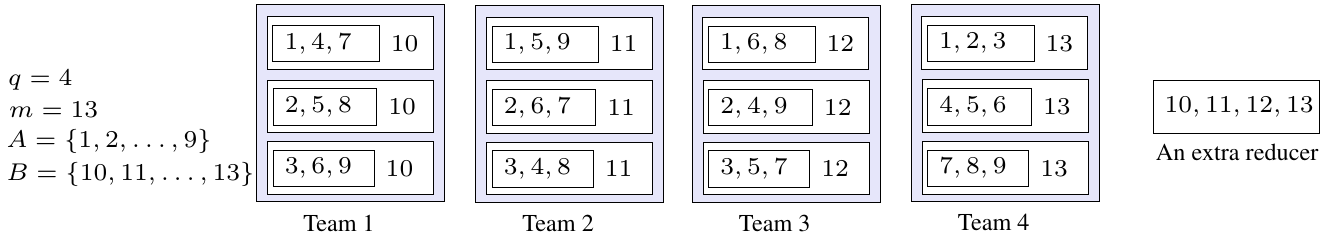}
  \end{minipage}
\end{center}
\caption{An optimum mapping schema for $q=4$ and $m=14$ by extending the \emph{AU method}.}
\label{fig:au_example for q plus one}
\end{figure*}

\medskip In conclusion, in this section we have shown the following:
\medskip\begin{theorem}
\label{opt-thm4}
We can construct optimal mapping schemas for the following cases:
\begin{enumerate}
\item $q=2$.
\item $q=3$.
\item $q$ being a prime number and $m=q^2$.
\item $q-1$ being a prime number and $m=(q-1)^2+q$, where $q$ is the reducer capacity and $m$ is the number of inputs.
\end{enumerate}
\end{theorem}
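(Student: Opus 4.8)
The plan is to treat Theorem~\ref{opt-thm4} as an assembly of the four constructions already developed in Sections~\ref{q2-sec}, \ref{q3-sec}, and~\ref{subsec:AU method}, and in each case to certify optimality by comparing the number of reducers (equivalently the communication cost) produced by the construction against the lower bound of Theorem~\ref{th:a2a_2-step-The total communication cost equal input}, namely that any solution needs at least $m\big\lfloor\frac{m-1}{q-1}\big\rfloor$ communication and at least $\big\lfloor\frac{m}{q}\big\rfloor\big\lfloor\frac{m-1}{q-1}\big\rfloor$ reducers. Since each construction has already been exhibited, the only real work is arithmetic: showing that for the specified values of $q$ and $m$ the floor functions collapse to closed forms that the construction attains exactly.

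For case 1 ($q=2$) I would invoke the recursive ``teams'' algorithm of Section~\ref{q2-sec}, which produces $\frac{m(m-1)}{2}$ reducers with every pair meeting exactly once. Substituting $q=2$ into the bound gives $\big\lfloor\frac{m}{2}\big\rfloor(m-1)=\frac{m(m-1)}{2}$ for even $m$, so the construction is tight. For case 2 ($q=3$) I would use the recursion $r(m,3)=\frac{n(n-1)}{2}+r(n-1,3)$ with $m=2n-1$ from Section~\ref{q3-sec}; it solves to $\frac{m(m-1)}{6}$ reducers, matching $\big\lfloor\frac{m}{3}\big\rfloor\big\lfloor\frac{m-1}{2}\big\rfloor=\frac{m(m-1)}{6}$. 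Alternatively, both cases follow at once from the observation that each pair of inputs meets in exactly one reducer, which forces the reducer count to equal $\binom{m}{2}/\binom{q}{2}$, the smallest value possible when every reducer is filled to capacity.

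For case 3 ($q$ prime, $m=q^2$) I would cite the \emph{AU method}, which uses $q(q+1)$ reducers at communication cost $q^2(q+1)$. The decisive identity is $q^2-1=(q-1)(q+1)$: substituting $m=q^2$ into the bound yields $q^2\big\lfloor\frac{q^2-1}{q-1}\big\rfloor=q^2(q+1)$, so the method is optimal, with primality guaranteeing that the modular assignment $(i+tj)\bmod q$ covers every pair. For case 4 ($q-1$ prime, $m=(q-1)^2+q$) I would use the simple extension of the \emph{AU method}: starting from the prime $p=q-1$ and $m=p^2$, adding $p+1=q$ new inputs (one per reducer) plus one extra reducer gives capacity $q$, $m=p^2+p+1=(q-1)^2+q$, and $p(p+1)+1=q(q-1)+1$ reducers. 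Here the arithmetic that matters is $m-1=q^2-q=q(q-1)$, whence $\frac{m(m-1)}{q(q-1)}=m=q(q-1)+1$, matching the count.

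The main obstacle is not the constructions themselves but confirming that in each case the floor functions in the lower bound of Theorem~\ref{th:a2a_2-step-The total communication cost equal input} become exact equalities; this hinges entirely on the chosen arithmetic structure --- divisibility of $m$ by a power of two in cases 1 and 2, and the factorizations $q^2-1=(q-1)(q+1)$ and $m-1=q(q-1)$ in cases 3 and 4 --- so I would take care to spell out each divisibility condition rather than treat the floors as decorative.
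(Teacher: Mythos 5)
Your proposal is correct and follows essentially the same route as the paper: Theorem~\ref{opt-thm4} is stated there as a summary of the constructions in Sections~\ref{q2-sec}, \ref{q3-sec}, and~\ref{subsec:AU method}, with optimality certified exactly as you do --- by matching the reducer/communication counts against the lower bound of Theorem~\ref{th:a2a_2-step-The total communication cost equal input} (or, equivalently, by noting that every pair meets exactly once), using the identities $q^2-1=(q-1)(q+1)$ and $m-1=q(q-1)$ in cases 3 and 4. Your explicit attention to when the floors collapse to equalities is, if anything, more careful than the paper's own treatment.
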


\medskip \noindent{\bf Open problem:} Can we generalize the last idea to get optimal schemas for more cases?

\medskip\medskip\noindent\textbf{Approximation Algorithms for the \emph{A2A Mapping Schemas Problem}.} We can use the optimal mapping schemas of Section~\ref{sec:Equal-Sized Inputs Optimal Algorithms} to construct good approximation of mappings schemas in many cases. The general techniques, we will use in this section move along the following dimensions/ideas:

\begin{itemize}
\item Assuming that there are no inputs of size greater than $\frac{q}{k}$, construct bins of size $\frac{q}{k}$, and treat each of the bins as a single input of size 1 and assume the reducer capacity is $k$. Then apply one of the optimal techniques of Section~\ref{sec:Equal-Sized Inputs Optimal Algorithms} to construct a mapping schema. These algorithms are presented in Sections~\ref{subsec:Generalizing the Technique from section q is equal to 3} and~\ref{subsec:A hybrid approach}.

\item Getting inspiration from the methods developed (or only presented -- in the case of the \emph{AU method}) in Section~\ref{subsec:AU method}, we extend the ideas to construct good approximation algorithms for inputs that are all of equal size (see Sections~\ref{subsubsec:When m is anything and we take the nearest prime to m} and~\ref{subsec:second ext to the au method}).
\end{itemize}

Thus, in Sections~\ref{subsec:Generalizing the Technique from section q is equal to 3},~\ref{subsec:Generalizing Techniques from the AU method}, and~\ref{subsec:A hybrid approach}, we will give several such techniques and show that some of them construct mapping schemas close to the optimal. To that end, we have already shown a schema based on bin-packing algorithms in Section~\ref{subsubsec:Different sized inputs}.

\medskip
\section{Generalizing the Technique for the Reducer Capacity ${q}>3$ and Inputs of Size $\leq$ $q/k$, $k>3$}
\label{subsec:Generalizing the Technique from section q is equal to 3}

In this section, we will generalize the algorithm for $q=3$ given in Section~\ref{q3-sec} and present an algorithm (Algorithm~\ref{alg:2stepmethodforoddq}) for inputs of size less than or equal to $\frac{q}{k}$ and $k>3$. For simplicity, we assume that $k$ divides $q$ evenly throughout this section.

\medskip
\subsection{Algorithm~\ref{alg:2stepmethodforoddq}A}

We divide Algorithm~\ref{alg:2stepmethodforoddq} into two parts based on the value of $k$ as even or odd. Algorithm~\ref{alg:2stepmethodforoddq}A considers that $k$ is an odd number. Pseudocode of Algorithm~\ref{alg:2stepmethodforoddq}A is given in Appendix~\ref{app:algo_correctness_1}. Algorithm~\ref{alg:2stepmethodforoddq}A works as follows:

First places all the given inputs, say $m^{\prime}$, to some bins, say $m$, each of size $\frac{q}{k}$, $k>3$ is an odd number. Thus, a reducer can hold an odd number of bins. After placing all the $m^{\prime}$ inputs to $m$ bins, we can treat each of the $m$ bins as a single input of size one and the reducer capacity to be $k$. Now, it is easy to turn the problem to a case similar to the case of $q=3$. Hence, we divide the $m$ bins into two sets $A$ and $B$, and follow a similar approach as given in Section~\ref{q3-sec}.

\medskip \noindent\textbf{Aside.} Equivalently, we can consider $q$ to be odd and the inputs to be of unit size. In what follows, we will continue to use $q$, which is an odd number, as the reducer capacity and assume all inputs (that are actually bins containing inputs) are of unit size.

\begin{example}
If $q=30$ and $k=5$, then we can pack given inputs to some bins of size $6$. Hence, a reducer can hold 5 bins. Equivalently, we may consider each of the bins as a single input of size 1 and $q=5$.
\end{example}

For understanding of Algorithm~\ref{alg:2stepmethodforoddq}A, an example for $q=5$ is presented in Figure~\ref{fig:The 2-step method for q 5}, where we obtain $m=23$ bins (that are considered as 23 unit-sized inputs) after implementing a bin-packing algorithm to given inputs.

\begin{figure*}[h]
\begin{center}
  \begin{minipage}{.99\textwidth}
  \centering
  \includegraphics{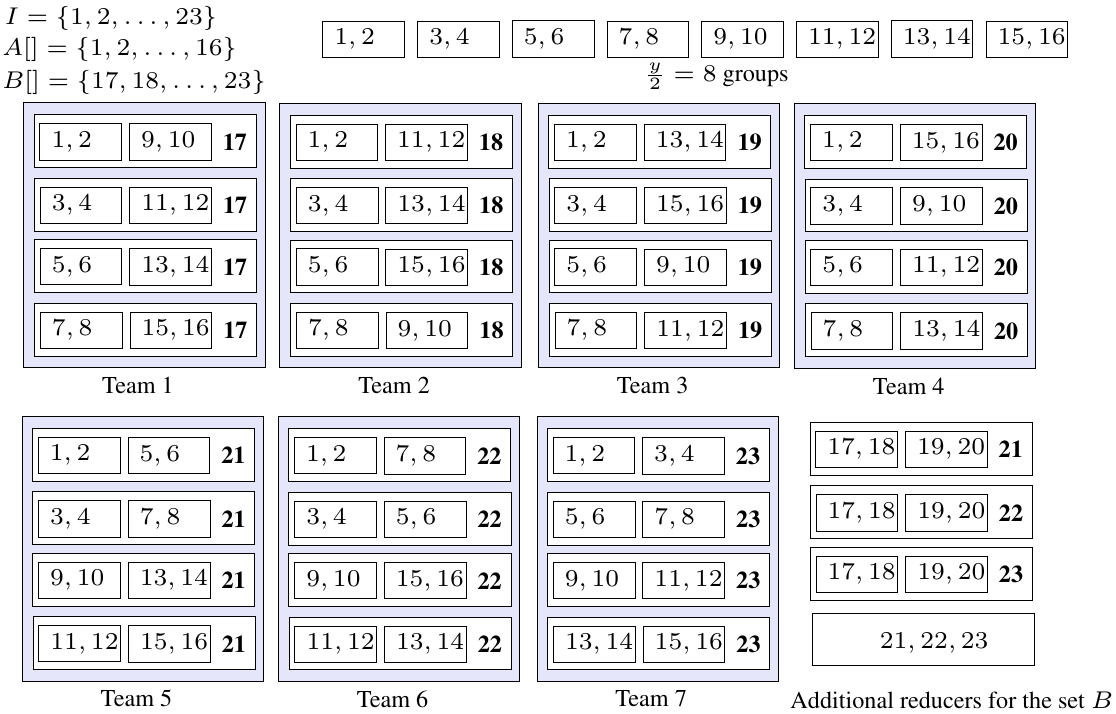}
  \label{fig:2-step-method-odd-q5-aa}
  \end{minipage}
\caption{Algorithm~\ref{alg:2stepmethodforoddq}A -- an example of a mapping schema for $q=5$ and 23 bins.}
\label{fig:The 2-step method for q 5}

\end{center}
\end{figure*}

Algorithm~\ref{alg:2stepmethodforoddq}A consists of six steps as follows:

\begin{enumerate}
\item \emph{Implement a bin-packing algorithm}: Implement a bin-packing algorithm to place all the given $m^{\prime}$ inputs to bins of size $\frac{q}{k}$, where $k>3$ is an odd number and the size of all the inputs is less than or equal to $\frac{q}{k}$. Let $m$ bins are obtained, and now each of the bins is considered as a single input.
  \item \emph{Division of bins (or inputs) to two sets, $A$ and $B$}: Divide $m$ inputs into two sets $A$ and $B$ of size $y = \big\lfloor \frac{q}{2} \big\rfloor (\big\lfloor \frac{2m}{q+1} \big\rfloor+1)$ and $x=m- y$, respectively.
  \item \emph{Grouping of inputs of the set $A$}: Group the $y$ inputs into $u = \big\lceil\frac{y}{q-\lceil q/2\rceil}\big\rceil$ disjoint groups, where each group holds $\big\lceil\frac{q-1}{2}\big\rceil$ inputs. (We consider each of the $u$ ($=\big\lceil\frac{y}{q-\lceil q/2\rceil}\big\rceil$) disjoint groups as a single input that we call the \textit{derived input}. By making $u$ disjoint groups\footnote{We suppose that $u$ is a power of 2. In case $u$ is not a power of 2 and $u>q$, we add dummy inputs each of size $\big\lceil\frac{q-1}{2}\big\rceil$ so that $u$ becomes a power of 2. Consider that we require $d$ dummy inputs. If groups of inputs of the set $B$ each of size $\big\lceil\frac{q-1}{2}\big\rceil$ are less than equal to $d$ dummy inputs, then we use inputs of the set $B$ in place of dummy inputs, and the set $B$ will be empty.} (or derived inputs) of $y$ inputs of the set $A$, we turn the case of any odd value of $q$ to a case where a reducer can hold only three inputs, the first two inputs are pairs of the derived inputs and the third input is from the set $B$.)
  \item \emph{Assigning groups (inputs of the set $A$) to some reducers}: Organize $(u-1)\times \big\lceil \frac{u}{2}\big\rceil$ reducers in the form of $u-1$ teams of $\big\lceil\frac{u}{2}\big\rceil$ reducers in each team. Assign every two groups to one of $(u-1)\times \big\lceil\frac{u}{2}\big\rceil$ reducers. To do so, we will prove the following Lemma~\ref{lm:recursive_def_odd}.
\begin{lemma}\label{lm:recursive_def_odd}
Let $q$ be the reducer capacity. Let the size of an input is $\big\lceil\frac{q-1}{2}\big\rceil$. Each pair of $u=2^i$, $i>0$, inputs can be assigned to $2^i - 1$ teams of $2^{i-1}$ reducers in each team.\footnote{The proof appears in Appendix~\ref{app:proof of th}.}
\end{lemma}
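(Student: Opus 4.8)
The plan is to reduce the statement to the reducer-capacity-two construction already established in Section~\ref{q2-sec}. The crucial observation is a capacity calculation: since each input has size $\lceil\frac{q-1}{2}\rceil$, any two such inputs together occupy at most $2\lceil\frac{q-1}{2}\rceil \le q$, so a single reducer of capacity $q$ holds exactly a pair of these inputs (a third would overflow whenever $q>3$). Consequently, assigning each pair of the $u$ inputs to a reducer is precisely the all-pairs problem in which every reducer accommodates two inputs --- the $q=2$ case. The lemma is therefore the team-decomposition property of that construction, specialized to $u=2^i$, and I would prove it by induction on $i$.

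For the base case $i=1$ we have $u=2$, a single pair assigned to a single reducer, which is $2^1-1=1$ team containing $2^0=1$ reducer and one occurrence of each of the two inputs; the claim holds trivially. For the inductive step, assume the statement for $2^{i-1}$ inputs. Given $u=2^i$ inputs, split them into sublists $A_1=\{g_1,\ldots,g_n\}$ and $A_2=\{h_1,\ldots,h_n\}$ with $n=2^{i-1}$, and build the two kinds of teams described in Section~\ref{q2-sec}. The teams of kind I are the $n$ teams obtained by cyclic shifts: for $t=0,1,\ldots,n-1$, team $t$ consists of the $n$ reducers holding the pairs $(g_a,h_{(a+t)\bmod n})$ for $a=1,\ldots,n$. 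Each such team has $n=2^{i-1}$ reducers, and within it every $g_a$ and every $h_b$ appears exactly once.

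The teams of kind II come from recursion: by the induction hypothesis the within-$A_1$ pairs decompose into $2^{i-1}-1$ teams of $2^{i-2}$ reducers, and likewise for $A_2$, and I would merge the $j$-th team of $A_1$ with the $j$-th team of $A_2$ into a single team of $2^{i-2}+2^{i-2}=2^{i-1}$ reducers. Because each $A_1$-team already contains every input of $A_1$ exactly once and each $A_2$-team every input of $A_2$ exactly once, every merged team contains each of the $u$ inputs exactly once. This produces $2^{i-1}-1$ teams of kind II, so the total is $n+(2^{i-1}-1)=2^i-1$ teams, each of $2^{i-1}$ reducers, as required; the $(2^i-1)2^{i-1}=\binom{u}{2}$ reducers exactly match the number of pairs.

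The main obstacle is verifying that the kind-I teams cover all cross pairs between $A_1$ and $A_2$ exactly once while using each input exactly once per team --- the ``Latin-square'' property of the cyclic shift. This follows because the pair $(g_a,h_b)$ lies in exactly the team $t=(b-a)\bmod n$, so each of the $n^2$ cross pairs is covered once, accounting for all $n$ teams of $n$ reducers; combined with the recursively covered within-$A_1$ and within-$A_2$ pairs, every pair of the $u$ inputs meets in exactly one reducer. Everything else is routine counting, so I expect the verification of this covering property to be the only delicate point.
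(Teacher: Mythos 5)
Your proposal is correct and follows essentially the same route as the paper's own proof: induction on $i$, splitting the $2^i$ inputs into two halves, merging the $j$-th recursive teams of the two halves to get the $2^{i-1}-1$ ``kind II'' teams, and adding $2^{i-1}$ cyclic-shift ``kind I'' teams for the cross pairs. Your explicit verification that the pair $(g_a,h_b)$ lands in exactly the team $t=(b-a)\bmod n$ is a welcome sharpening of a step the paper's appendix leaves implicit.
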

  \item \emph{Assigning inputs of the set $B$) to the reducers}: Once every pair of the derived inputs are assigned, then assign $i^{th}$ input of the set $B$ to all the reducers of $i^{th}$ team.
  \item \emph{Use previous steps on the inputs of the set $B$}: Apply (the above mentioned) steps 1-4 on the set $B$ until there is a solution to the \emph{A2A mapping schema problem} for the $x$ inputs.
\end{enumerate}

\medskip\begin{theorem}[The communication cost obtained using Algorithm~\ref{alg:2stepmethodforoddq}]\label{th:a2a_2-step-The total communication cost1}
For a given reducer capacity $q>1$, $k>3$, and a list of $m$ inputs whose sum of sizes is $s$, the communication cost, for the \textit{A2A mapping schema problem}, is at most $\frac{q}{2k}\big\lceil\frac{sk}{q(k-1)}\big\rceil(\big\lceil\frac{sk}{q(k-1)}\big\rceil-1)$.
\end{theorem}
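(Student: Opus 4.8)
The plan is to reduce the communication-cost accounting to two separate tasks: counting how many bins/units the recursive construction produces (driven by the bin-packing and grouping), and counting how many reducers each such unit is replicated to, and then to multiply by the per-unit size in bits. First I would bound the number of bins created in Step~1. Since all inputs have size at most $\frac{q}{k}$ and are packed into bins of size $\frac{q}{k}$, the FFD/BFD guarantee that every bin except one is at least half-full gives each bin a content of at least $\frac{q}{2k}$, so the number of bins is on the order of $\frac{sk}{q}$. Feeding this into the grouping of Step~3 (groups of $\lceil\frac{q-1}{2}\rceil$ bins, treated as derived inputs) and the $A/B$ split of Step~2, I would establish that the number of units the construction actually pairs is controlled by $N:=\lceil\frac{sk}{q(k-1)}\rceil$. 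Pinning down this clean bound on the unit count from the half-full and grouping guarantees is the first milestone.

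Second, I would analyze the recursive team construction itself. By the design of Steps~2--5 together with the Aside, each reducer is filled to capacity $q$ by two derived inputs plus one input of the set $B$, so the problem restricted to the derived inputs is exactly a ``$q=3$-style'' all-pairs problem. There Lemma~\ref{lm:recursive_def_odd} supplies the team decomposition ($2^i-1$ teams of $2^{i-1}$ reducers for $u=2^i$ units), so every pair of derived inputs meets in some reducer and the number of unit-copies produced at the top level is of the $q=3$ shape, roughly $\frac{N(N-1)}{2}$. The recursion of Step~6 then incurs the all-pairs cost for the strictly smaller set $B$, so the total communication obeys a recurrence of the same form as the solved recurrence $r(m,3)=\binom{n}{2}+r(n-1,3)$; I would solve it in the power-of-two case (using the dummy-input padding of the footnote to make $u$ a power of two) and argue that the top-level quadratic term dominates.

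Finally, I would translate the unit-copy count into bits: each copy transmits a bin, or a group of bins, whose size is a fixed multiple of $\frac{q}{k}$, so multiplying the copy count by that size and substituting the bin-packing and grouping bounds should collapse the expression into the stated closed form $\frac{q}{2k}\lceil\frac{sk}{q(k-1)}\rceil(\lceil\frac{sk}{q(k-1)}\rceil-1)$. The main obstacle I anticipate lies precisely in controlling the constants through this chain: the half-full guarantee, the $\lceil\frac{q-1}{2}\rceil$-sized grouping, and the several floors and ceilings in Steps~2--4 each contribute factors that must line up to yield exactly the coefficient $\frac{q}{2k}$ and the argument $\lceil\frac{sk}{q(k-1)}\rceil$. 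Equally delicate is summing the cost across the $A/B$ recursion of Step~6 and showing, as in the clean solution of the $q=3$ recurrence, that the geometrically shrinking contributions from the recursive calls on $B$ telescope without inflating the leading quadratic term beyond the claimed bound.
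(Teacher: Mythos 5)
Your plan follows the same route as the paper's proof: bound the number of bins via the FFD/BFD half-full guarantee, group the bins into derived inputs of $\big\lceil\frac{k-1}{2}\big\rceil$ bins each so that the number of units becomes $g=\big\lceil\frac{sk}{q(k-1)}\big\rceil$, charge each derived input to at most $g-1$ reducers, and multiply the resulting $g(g-1)/2$ pair count by the per-unit size to obtain the stated closed form. The only difference is that the paper dispenses with your recurrence over the $A$/$B$ recursion by counting all $g(g-1)/2$ pairs of derived inputs in one shot (which already subsumes the recursive calls on the set $B$), so no telescoping argument is needed.
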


\begin{proof}
Since the FFD or BFD bin-packing algorithm ensures that all the bins (except only one bin) are at least half-full, each bin of size $\frac{q}{k}$ has at least inputs whose sum of the sizes is at least $\frac{q}{k/2}$. Thus, all the inputs can be placed in at most $x=s/(q/(k/2))=\frac{sk}{2q}$ bins of size $\frac{q}{k}$. Now, each bin can be considered as an identical sized input.

According to the construction given in Algorithm~\ref{alg:2stepmethodforoddq}A, there are at most $g=\big\lceil\frac{2x}{k-1}\big\rceil$ groups (derived inputs) of the given $x$ bins. In order to assign each pair of the derived inputs, each derived input is required to assign to at most $g-1$ reducers. In addition, the size of each input (bin) is $\frac{q}{k}$, therefore we have at most
$$\frac{q}{k}\times g(g-1)/2=\frac{q}{k}\times \Big\lceil\frac{2x}{k-1}\Big\rceil\Big(\Big\lceil\frac{2x}{k-1}\Big\rceil-1\Big)/2$$
$$=\frac{q}{2k}\times \Big\lceil\frac{sk}{q(k-1)}\Big\rceil\Big(\Big\lceil\frac{sk/q}{k-1}\Big\rceil-1\Big)> \frac{4s^2}{q}$$
communication cost.
\end{proof}

\medskip\medskip \noindent \emph{Algorithm correctness.} The algorithm correctness appears in Appendix~\ref{app:algo_correctness_1}.

\medskip\medskip \noindent \textit{Approximation factor.} The optimal communication cost (from Theorem \ref{th:a2a_2-step-The total communication cost}) is $s\lfloor(\frac{sk}{q}-1)/k-1\rfloor \approx \frac{s^2}{q}\cdot\frac{k}{k-1}$ and the communication cost of the algorithm (from Theorem \ref{th:a2a_2-step-The total communication cost1}) is $\frac{q}{2k}\big\lceil\frac{sk}{q(k-1)}\big\rceil(\big\lceil\frac{sk}{q(k-1)}\big\rceil-1)\approx s^2k/q(k-1)^2$. Thus, the ratio between the optimal communication and the communication of our mapping schema is approximately $\frac{1}{k-1}$.

\medskip
\subsection{Algorithm~\ref{alg:2stepmethodforoddq}B}

For the sake of completeness, we include the pseudocode of the algorithm for handling the case when $k$ is an even number. We call it Algorithm~\ref{alg:2stepmethodforoddq}B and pseudocode is given in Appendix~\ref{app:algo_correctness_2}. In this algorithm, we are given $m^{\prime}$ inputs of size less than or equal to $\frac{q}{k}$ and $k\geq 4$ is an even number.

Similar to Algorithm~\ref{alg:2stepmethodforoddq}A, Algorithm~\ref{alg:2stepmethodforevenq}B first places all the $m^{\prime}$ inputs to $m$ bins, each of size $\frac{q}{k}$, $k>2$ is an even number. Thus, a reducer can hold an even number of bins. After placing all the $m^{\prime}$ inputs to $m$ bins, we can treat each of the $m$ bins as a single input of size one and the reducer capacity to be $k$. Now, we easily turn this problem to a case similar to the case of $q=2$. Hence, we divide the $m$ bins into two set $A$ and $B$, and follow a similar approach as given in Section~\ref{q2-sec}.

\begin{example}
If $q=30$ and $k=6$, then we can pack given inputs to some bins of size $5$. Hence, a reducer can hold 6 bins. Equivalently, we may consider each of the bins as a single input of size 1 and $q=6$.
\end{example}

\medskip\noindent\textbf{Note.} Algorithms~\ref{alg:2stepmethodforoddq}A and~\ref{alg:2stepmethodforoddq}B are based on a fact that how do we pack inputs in a \textit{well} manner to bins of even or odd size. To understand this point, consider $q=30$ and $m^{\prime}=46$. For simplicity, we assume that all the inputs are of size three. Now, consider $k=5$, so we will use 23 bins each of size $6$ and apply Algorithm~\ref{alg:2stepmethodforoddq}A. On the other, consider $k=6$, so we will use 46 bins each of size $5$ and apply Algorithm~\ref{alg:2stepmethodforoddq}B.

\medskip
\section{Generalizing the \emph{AU method}}
\label{subsec:Generalizing Techniques from the AU method}
In this section, we extend the \emph{AU method} (Section~\ref{subsec:AU method}) to handle more than $q^2$ inputs, when $q$ is a prime number, Algorithms 3 and 4. Recall that the \emph{AU method} can assign each pair of $q^2$ inputs to reducers of capacity $q$. We provide two extensions: (\textit{i}) take $m=p^2+p\cdot l+l$ identical-sized inputs and assign these inputs to reducers of capacity $p+l=q$, where $p$ is the nearest prime number to $q$, in Section~\ref{subsubsec:When m is anything and we take the nearest prime to m}, and (\textit{ii}) take $m=q^l$ inputs, where $l>2$, and assign inputs to reducers of capacity $q$, in Section~\ref{subsec:second ext to the au method}.

\medskip
\subsection{When we consider the nearest prime to $q$}
\label{subsubsec:When m is anything and we take the nearest prime to m}
We provide an extension to the \emph{AU method} that handles $m=p^2+p\cdot l+l$ identical-sized inputs and assigns them to reducers of capacity $p+l=q$, where $p$ is the nearest prime number to $q$. We call it the \emph{first extension to the AU method} (Algorithm 2).

\medskip\medskip\noindent \textbf{Algorithm 2: The \emph{First Extension of the AU method}.} We extend the \emph{AU method} by increasing the reducer capacity and the number of inputs. Consider that the \emph{AU method} assigns $p^2$ identical-sized inputs to reducers of capacity $p$, where $p$ is a prime number. We add $l(p+1)$ inputs and increase the reducer capacity to $p+l$ ($=q$).

In other words, $m$ identical-sized inputs and the reducer capacity $q$ are given. We select a prime number, say $p$, that is near most to $q$ such that $p+l=q$ and $p^2+l(p+1)\leq m$. Also, we divide the $m$ inputs into two disjoint sets $A$ and $B$, where $A$ holds at most $p^2$ inputs and $B$ holds at most $l(p+1)$ inputs.

Algorithm 2 consists of six steps, where $m$ inputs and the reducer capacity $q$ are inputs to Algorithm 2, as follows:
\begin{enumerate}
  \item Divide the given $m$ inputs into two disjoint sets $A$ of $y=p^2$ inputs and $B$ of $x=m-y$ inputs, where $p$ is the nearest prime number to $q$ such that $p+l=q$ and $p^2+l(p+1)\leq m$.
  \item Perform the \emph{AU method} on the inputs of the set $A$ by placing $y$ inputs to $p+1$ teams of $p$ bins in each team, where the size of each bin is $p$.
  \item Organize $p(p+1)$ reducers in the form of $p+1$ teams of $p$ reducers in each teams, and assign $j^{th}$ bin of $i^{th}$ team of bins to $j^{th}$ reducer of $i^{th}$ team of reducers.
  \item Group the $x$ inputs of the set $B$ into $u=\big\lceil \frac{x}{q-p}\big\rceil$ disjoint groups.
  \item Assign $i^{th}$ group to all the reducers of $i^{th}$ team.
  \item Use Algorithm~\ref{alg:2stepmethodforoddq}A or Algorithm~\ref{alg:2stepmethodforoddq}B to make each pair of inputs of the set $B$, depending on the case of the value of $q$, which is either an odd or an even number, respectively.
\end{enumerate}

Note that when we perform the above mentioned step 3, we assign each pair of inputs of the set $A$ to $p(p+1)$ reducers, and such an assignment uses $p$ capacity of each reducer. Now, each of $p(p+1)$ reducers has $q-p$ remaining capacity that is used to assign $i^{th}$ group of inputs of the set $B$. In this manner, all the inputs of the set $A$ are assigned with all the $m$ inputs.

\medskip\medskip \noindent \emph{Algorithm correctness.} The algorithm correctness appears in Appendix~\ref{app:algo_correctness_3}.

\medskip\begin{theorem}[The communication cost obtained using Algorithm 2]\label{th:first ext to the au method reducer_communication_cost}
Algorithm 2 requires at most $p(p+1)+z$ reducers, where $z=\frac{2l^2(p+1)^2}{q^2}$, and results in at most $qp(p+1)+z^{\prime}$ communication cost, where $z^{\prime}=\frac{2l^2(p+1)^2}{q}$, $q$ is the reducer capacity, and $p$ is the nearest prime number to $q$.
\end{theorem}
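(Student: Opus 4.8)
The plan is to count reducers in two groups corresponding to the two sets $A$ and $B$, then convert reducer counts into communication cost by multiplying by the reducer occupancy. First I would handle the inputs of the set $A$. By the construction in step 3, the \emph{AU method} applied to the $y=p^2$ inputs of $A$ uses exactly $p(p+1)$ reducers, organized as $p+1$ teams of $p$ reducers. These reducers already accommodate every pair within $A$, and (via steps 4--5) also every pair consisting of one input from $A$ and one from $B$, because the $u=\lceil x/(q-p)\rceil = \lceil l(p+1)/l\rceil = p+1$ groups of $B$-inputs are assigned one per team into the remaining $q-p=l$ capacity of each reducer. So these $p(p+1)$ reducers are charged with pairs of the two types $A$--$A$ and $A$--$B$, contributing the leading term $p(p+1)$ to the reducer count.

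Next I would bound the reducers needed to cover the $B$--$B$ pairs, which is exactly the quantity $z$. The $x = l(p+1)$ inputs of $B$ are recursively handled by Algorithm~\ref{alg:2stepmethodforoddq} (step 6), so I would invoke Theorem~\ref{th:a2a_2-step-The total communication cost1}, or rather its reducer-count analogue, with the $B$-inputs playing the role of the input list. The claimed value $z = \frac{2l^2(p+1)^2}{q^2}$ should follow by substituting $s_B = x = l(p+1)$ (treating each $B$-input as unit-sized, so the sum of sizes is $l(p+1)$) and simplifying the bin-packing-based reducer bound of the form $\frac{s_B^2}{(q/2)^2}$-type expression derived analogously to Theorem~\ref{th:our_bounds}. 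The key identity to verify is that the quadratic-in-$x$ reducer count from the recursive call collapses to $\frac{2x^2}{q^2}\cdot q^2/\!\cdots$ giving precisely $\frac{2l^2(p+1)^2}{q^2}$.

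For the communication cost I would then multiply reducer counts by occupancy. Each of the $p(p+1)$ reducers for the $A$-portion is filled to capacity $q$ (it holds $p$ inputs from $A$ plus one group of size $l$ from $B$, totalling $p+l=q$), so these contribute $q \cdot p(p+1) = qp(p+1)$, the leading term. The $z$ reducers for the $B$--$B$ pairs each hold at most $q$ units, so they contribute at most $qz = q \cdot \frac{2l^2(p+1)^2}{q^2} = \frac{2l^2(p+1)^2}{q} = z'$. Summing the two contributions yields the claimed bound $qp(p+1)+z'$.

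The main obstacle I anticipate is the bookkeeping in the second step: establishing cleanly that the recursive invocation on the $B$-inputs yields exactly the reducer count $z=\frac{2l^2(p+1)^2}{q^2}$ rather than a bound that is merely of the same order. This requires pinning down which variant of Algorithm~\ref{alg:2stepmethodforoddq} is called, what effective reducer capacity it sees, and how the $\lceil\cdot\rceil$ operators in Theorem~\ref{th:a2a_2-step-The total communication cost1} simplify when $s_B=l(p+1)$; I would treat the inputs as equal-sized to eliminate the ceilings and obtain the exact closed form, and would note that the bound is an upper bound so constant-factor slack from the bin-packing half-fullness guarantee is acceptable.
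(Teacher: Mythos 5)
Your proposal follows essentially the same route as the paper's proof: the $p(p+1)$ reducers from the \emph{AU method} cover all $A$--$A$ and $A$--$B$ pairs (with the $p+1$ groups of $B$-inputs filling the residual capacity $l=q-p$ of each team), the $B$--$B$ pairs are delegated to Algorithm~1A/1B contributing $z$ further reducers, and the communication bound is obtained by charging each reducer its full capacity $q$. The only difference is that the paper simply cites its technical report for the value $z=\frac{2l^2(p+1)^2}{q^2}$, whereas you sketch the substitution $s_B=l(p+1)$ into the quadratic reducer bound to recover it; your honest flagging of that step as the delicate part matches exactly what the paper leaves unproven.
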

When $l=q-p$ equals to one, we have provided an extension of the \emph{AU method} in Section~\ref{subsec:AU method}, and in this case, we have an optimum mapping schema for $q$ and $m=q^2+q+1$ inputs.
\begin{proof}
In case of $l>1$, a single reducer cannot be used to assign all the inputs of the set $B$. Since Algorithm 2 is based on the \emph{AU method}, Algorithm~\ref{alg:2stepmethodforoddq}A, and Algorithm~\ref{alg:2stepmethodforoddq}B, we always use at most $p(p+1)+z$ reducers, where $z$ ($=\frac{2l^2(p+1)^2}{q^2}$) reducers are used to assign each pair of inputs of the set $B$ based on Algorithms~\ref{alg:2stepmethodforoddq}A or~\ref{alg:2stepmethodforoddq}B (for the value of $z$, the reader may refer to Theorem 11 of the technical report~\cite{A2A}). Thus, the communication cost is at most $qp(p+1)+z^{\prime}$, where $z^{\prime}$ ($=\frac{2l^2(p+1)^2}{q}$) is the maximum communication cost required by Algorithm~\ref{alg:2stepmethodforoddq}A or~\ref{alg:2stepmethodforoddq}B for assigning $(p+1)l$ inputs of the set $B$.
\end{proof}

\medskip\medskip \noindent \textit{Approximation factor.} The optimal communication cost using the \emph{AU method} is $q^2(q+1)$. Thus, the difference between the communication of our mapping schema ($q^2(q+1)+z^{\prime}$, when assuming $p$ is equal to $q$) and the optimal communication is $z^{\prime}$. We can see two cases, as follows:
\begin{enumerate}
  \item When $q$ is large. Consider that $q$ is greater than square or cube of the maximum difference between any two prime numbers. In this case, $z^{\prime}$ will be very small, and we will get almost optimal ratio.
  \item When $q$ is very small. In this case, then $z^{\prime}$ plays a role as follows: here, the number of inputs in the set $B$ will be at most $(p+1)l < q^2$. Thus, the ratio becomes $q/(q+1)$.
\end{enumerate}

\medskip
\subsection{For input size $m=q^l$ where $q$ is a prime number}
\label{subsec:second ext to the au method}
We also provide another extension to the \emph{AU method} that handles $m=q^l$ identical-sized inputs and assigns them to reducers of capacity $q$, where $q$ is a prime number and $l>2$. We call it the \emph{second extension to the AU method} (Algorithm 3).

\medskip\medskip \noindent \textbf{Algorithm 3: The \emph{Second Extension of the AU method}.} The second extension to the \emph{AU method} (Algorithm 3) handles a case when $m=q^l$, where $l>2$ and $q$ is a prime number. We present Algorithm 3 for $m=q^l$, $l>2$, inputs and the reducer capacity $q$, where $q$ is a prime number. Nevertheless, $m$ inputs that are less than but close to $q^l$ can also be handled by Algorithm 3 by adding dummy inputs such that $m=q^l$, $l>2$.

Algorithm 3 consists of two phases, as follows:

\medskip\noindent\textbf{\textit{The first phase: creation of a bottom up tree.}} Here, we present a simple example for the bottom-up tree's creation for $q=3$ and $m=3^4$; see Figure~\ref{fig:fig_second_ext_au_example}.

\begin{example}[Bottom-up tree creation]
A bottom-up tree for $m=q^l=3^4$ identical-sized inputs and $q=3$ is given in Figure~\ref{fig:fig_second_ext_au_example}. Here, we explain how we constructed it.

\begin{figure*}[h!]
\begin{center}
  \begin{minipage}{.99\textwidth}
  \centering
  \includegraphics[scale=0.20]{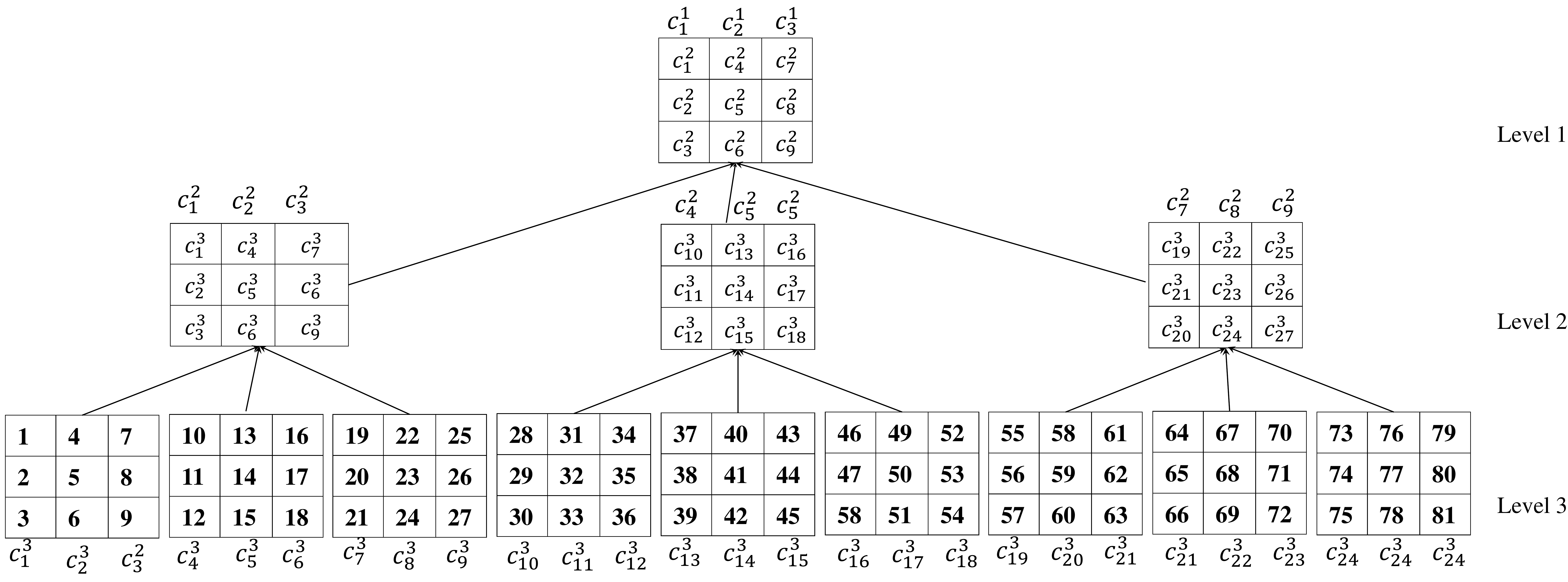}
  \end{minipage}
\caption{The \emph{second extension of the AU method} (Algorithm 3): Phase 1 -- Creation of the bottom-up tree.}
\label{fig:fig_second_ext_au_example}
\end{center}
\end{figure*}

The height of the bottom up tree is $l-1$, and the last $(l-1)^{th}$ level has $m$ inputs in the form of $\frac{m}{q^2}$ matrices of size $q\times q$. Note that we have $\frac{m}{q}$ columns at the last level, which holds $m$ inputs; and these $\frac{m}{q}$ columns are called the \emph{input columns}. We create the tree in bottom-up fashion, where $(l-2)^{th}$ level has $\frac{m}{q^3}$ matrices, whose each cell value refers to a \emph{input column} of $(l-1)^{th}$ level. We use a notation to refer a column of $i^{th}$ level by $c_j^i$, where $j$ is column index. Note that each column, $c_j^i$, at level $i$ holds $q$ columns ($c_{(j-1)q+1}^{i+1},c_{(j-1)q+2}^{i+1},\ldots c_{jq}^{i+1}$) of $(i+1)^{th}$ level. In general, there are $\frac{m}{q^{l-i+2}}$ matrices at level $i$, whose each cell value, $c_j^i$, refers to a column, $c_j^{i+1}$, of $(i+1)^{th}$ level.

Following that the bottom-up tree for $m=3^4$ identical-sized inputs and $q=3$ has height 3. The last level ($(l-1)^{th}=3^{rd}$) has $81$ inputs in the form of $\frac{m}{q^2}=9$ matrices of size $3\times 3$. Note that we have $\frac{m}{q}=24$ columns at $3^{rd}$ level; called the \emph{input columns}. The $l-2=2^{ed}$ level has $\frac{m}{q^3}=3$ matrices, whose each column, $c_j^2$, refers to $q=3$ columns ($c_{(j-1)q+1}^3,c_{(j-1)q+2}^3,\ldots c_{jq}^3$) of $3^{rd}$ level. Further, the root node is at level 1, whose each column, $c_j^1$, refers to $q=3$ columns ($c_{(j-1)q+1}^{2},c_{(j-1)q+2}^{2},\ldots c_{jq}^{2}$) of $2^{ed}$ level.
\end{example}

\medskip\noindent\textbf{\textit{The second phase: creation of an assignment tree.}} The assignment tree is created in top-down fashion. Our objective is to assign each pair of inputs to a reducer, where inputs are arranged in the \emph{input columns} of the bottom-up tree. If we can assign each pair of \emph{input columns} (of the bottom-up tree) in the form of $(q\times q)$-sized matrices, then the implementation of the \emph{AU method} on each such matrices results in an assignment of every pair of inputs to reducers. Hence, we try to make pairs of all the \emph{input column}s by creating a tree called the \emph{assignment tree}.

Here, we present a simple assignment tree for $m=3^4$ and $q=3$ (see Figure~\ref{fig:fig_second_ext_au_example_assignment_tree}).

\begin{example}[Assignment tree creation]
The root node of the bottom-up tree becomes the root node the assignment tree. Recall that the root node of the bottom-up tree is a $q\times q$ matrix. First, consider the root node to understand the working of the \emph{AU method} to create the assignment tree. Consider that each cell value of the root node matrix is of size one, and we have $(q+1)$ teams of $q$ bins (of size $q$) in each team. Our objective to use the \emph{AU method} on the root node matrix is to assign each pair of cell values ($\langle c_x^2,c_y^2\rangle$) in $q(q+1)$ bins that results in an assignment of every pair of cell values $\langle c_x^2,c_y^2\rangle$ at a bin.

\begin{figure*}[t]
\begin{center}
  \begin{minipage}{.99\textwidth}
  \centering
  \includegraphics[scale=.13]{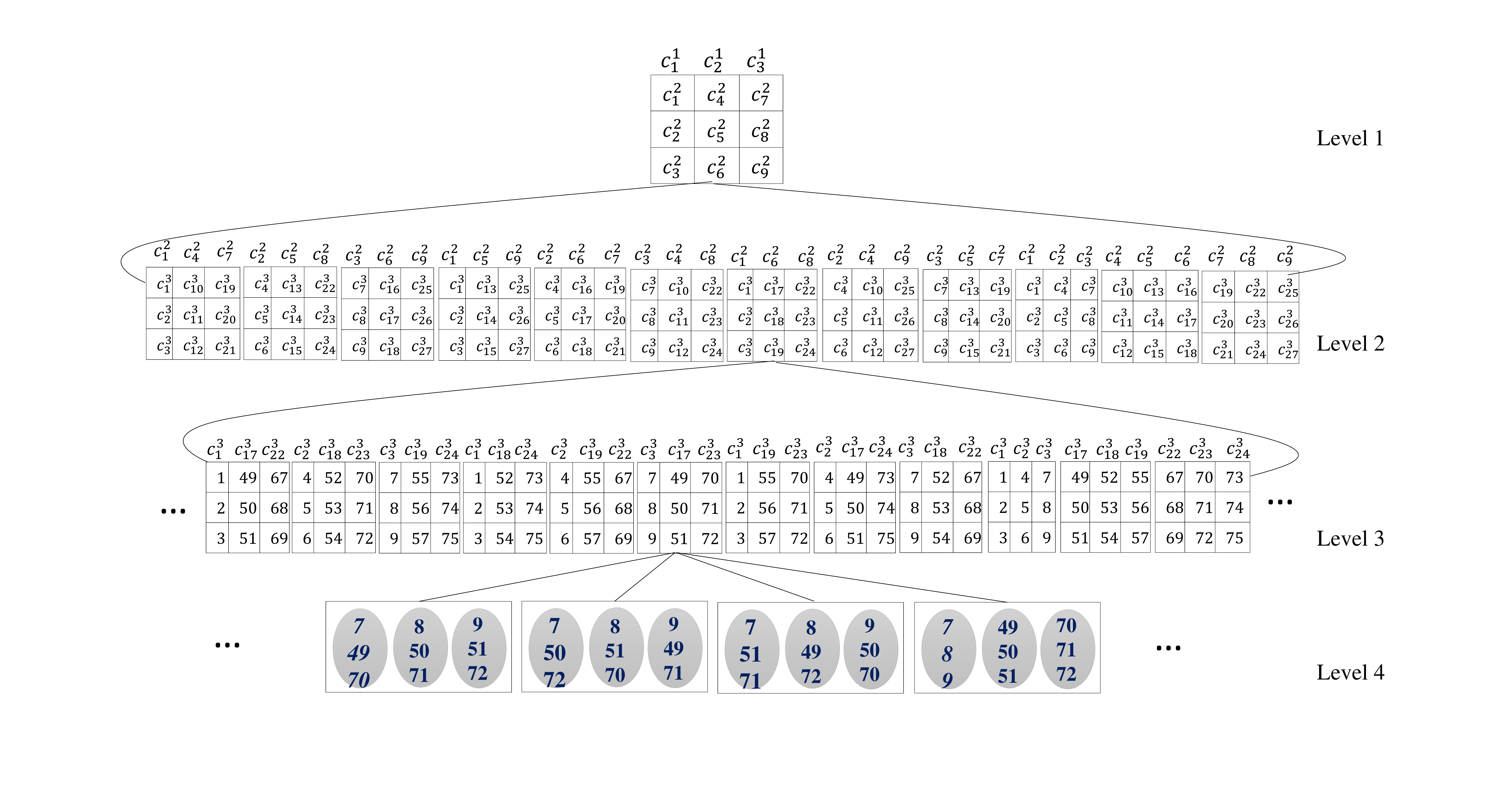}
  \end{minipage}
\caption{The \emph{second extension of the AU method} (Algorithm 3): Phase 2 -- Creation of the assignment tree.}
\label{fig:fig_second_ext_au_example_assignment_tree}
\end{center}
\end{figure*}

Now, we create matrices by using these bins (the bins created by the \emph{AU method}'s implementation on the root node) that are holding the indices of columns of the second level ($c_x^2$) of the bottom-up tree. We take each bin and its $q$ indices $c_j^2,c_{j+1}^2,\ldots c_{j+q}^2$. We replace each $c_j^2$ with $q$ columns as: $c_{(j-1)q+1}^3,c_{(j-1)q+2}^3,\ldots c_{jq}^3$ that results in $q(q+1)$ matrices of size $q\times q$, and these $q(q+1)$ matrices become child nodes of the root node. Now, we consider each such matrix separately and perform a similar operation as we did for the root node.

In this manner, the \emph{AU method} creates $(q(q+1))^{i-1}$ child nodes (that are matrices of size $q\times q$) at $i^{th}$ level of the assignment tree, and they create $(q(q+1))^i$ child nodes (matrices of size $q\times q$) at $(i+1)^{th}$ level of the assignment tree.

Recall that there are $\frac{m}{q}$ \emph{input column}s at $(l-1)^{th}$ level of the bottom-up tree that hold the original $m$ inputs. The implementation of the \emph{AU method} on each node ($q\times q$-sized) matrix of $(l-2)^{th}$ level of the assignment tree assigns each pair of \emph{input columns} at $(l-1)^{th}$ level of the assignment tree. Further the \emph{AU method}'s implementation on each matrix of $(l-1)^{th}$ level assigns every pairs of the original inputs to $q^l\times (q+1)^{l-1}$ reducers at $l^{th}$ level, which have reducers in the form of $(q(q+1))^{l-1}$ teams of $q$ reducers in each team.

For $m=3^4$ identical-sized inputs and $q=3$, we take the root node of the bottom-up tree (Figure~\ref{fig:fig_second_ext_au_example}) that becomes the root node of the assignment tree. We implement the \emph{AU method} on the root node and assign each pair of cell values ($c_j^2$, $1\leq j\leq9$) to a bin of size $q$. Each cell value of the bins ($c_j^2$) is then placed by $q=3$ columns $c_{(j-1)q+1}^3,c_{(j-1)q+2}^3,\ldots c_{jq}^3$ that results in an assignment of each pair of columns of the second level of the bottom-up tree. For clarity, we are not showing bins. For the next $3^{rd}$ level, we again implement the \emph{AU method} on all 12 matrices at $2^{nd}$ level and get 144 matrices at the third level. The matrices at $3^{rd}$ level are pairs of each \emph{input columns} (of the bottom-up tree). The \emph{AU method}'s implementation on each matrix of $3^{rd}$ level assigns each pair of original inputs to reducers. For clarity, we are only showing all the matrixes and teams at levels 3 and 4, respectively.
\end{example}

\begin{figure}
\begin{center}
\includegraphics[scale=0.35]{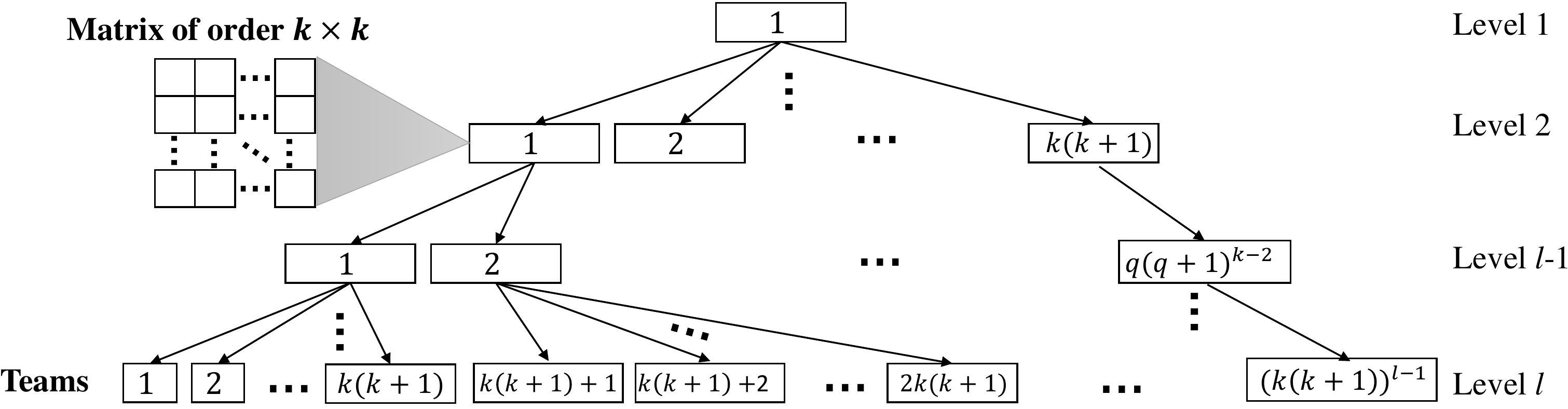}
\end{center}
\caption{An assignment tree created using Algorithm 3.}
\label{fig:fig_tree}
\end{figure}

The assignment tree uses the root node of the bottom up tree, and we implement the \emph{AU method} on the root node that results in $q(q+1)$ child nodes at level two. Each child node is a $q\times q$ matrix, and the columns of all the $q(q+1)$ matrices provide all-pairs of the cell values of the root node matrix. At level $i$, the assignment tree has $(q(q+1))^{i-1}$ nodes, see Figure~\ref{fig:fig_tree}. The height of the assignment tree is $l$, where $(l-1)^{th}$ level has all-pairs of \emph{input column}s and $l^{th}$ level has a solution to the \emph{A2A mapping schema problem} for $m$ inputs.

\medskip\medskip \noindent \emph{Algorithm correctness.} Algorithm 3 satisfies the following Lemma~\ref{lemma:assignment_tree_height}:
\begin{lemma}\label{lemma:assignment_tree_height}
The height of the assignment tree is $l$, and $l^{th}$ level of the assignment tree assigns each pairs of inputs to reducers.
\end{lemma}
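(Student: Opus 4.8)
The plan is to prove both assertions by induction on the levels of the assignment tree, using as the single engine the meeting guarantee of the \emph{AU method}: applied to any $q\times q$ matrix, it sends the $q^2$ cells to $q(q+1)$ bins so that every pair of cells lands together in at least one bin. I would treat the cells of a node at assignment-tree level $j$ as column indices of the bottom-up tree (or, at the deepest matrix level, as the original inputs), and show that successive AU applications march the cell labels down the column hierarchy of the bottom-up tree while the meeting guarantee accumulates the required pairings.

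For the height, I would track what the cells of a node contain. The root (level $1$) is the single $q\times q$ matrix whose cells are the level-$2$ columns $c_j^2$. Each AU application to a level-$j$ node produces its children at level $j+1$ and replaces every cell $c_j^i$ by the $q$ columns $c_{(j-1)q+1}^{i+1},\ldots,c_{jq}^{i+1}$ it refers to, so the cells descend exactly one column-level of the bottom-up tree. Starting from level-$2$ columns and stopping when the cells are the original inputs (one level below the input columns, which sit at bottom-up level $l-1$) requires $l-1$ AU applications; this produces assignment-tree levels $1,2,\ldots,l$, so the height is $l$ and the final (leaf) level $l$ consists of reducers, each a single bin of $q$ inputs.

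For correctness, the key inductive claim I would establish is: \emph{for every level, any two columns (of the appropriate bottom-up level) co-occur as cells in at least one node of the preceding assignment-tree level.} The base case is immediate, since the root holds all $q^2$ level-$2$ columns. For the inductive step, take two columns $c,c'$ one level finer, with bottom-up parents $P$ and $P'$. If $P=P'$, I use the auxiliary fact that every column appears in some node (which itself follows from the inductive claim applied to any pair containing it): AU puts $P$ into some bin, and expanding that bin reproduces all children of $P$, in particular $c$ and $c'$. If $P\neq P'$, the inductive hypothesis places $P,P'$ together in some node, and the AU meeting guarantee puts them in a common bin, whose expansion then contains both $c$ and $c'$. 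Pushing this down to the input columns and performing one final AU application on the leaf $q\times q$ matrices (distinguishing, as above, whether the two target inputs lie in the same input column or in two columns that the induction has already brought together) places any two inputs in a common reducer at level $l$, which is exactly the second assertion.

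The main obstacle I expect is the bookkeeping across the two interleaved hierarchies -- the assignment-tree levels and the bottom-up-tree column indices -- and getting the off-by-one between them right, together with the two-case (same-parent versus different-parent, equivalently same-column versus different-column) split that must be threaded through every inductive step. Because the correctness of each \emph{single} AU application is already guaranteed by the \emph{AU method}, the real work is not any new combinatorial identity but verifying that the meeting guarantee composes cleanly down the tree, and confirming the auxiliary ``every column occurs somewhere'' fact that the same-parent case relies on.
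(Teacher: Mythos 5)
Your proposal is correct, and it is worth noting that the paper itself never actually proves this lemma: it is asserted under the heading ``Algorithm correctness'' with only the narrative description of the bottom-up and assignment trees (and the worked example for $q=3$, $m=3^4$) as justification. What you supply -- the level-by-level induction whose invariant is ``any two columns of the appropriate bottom-up level co-occur as cells in some node of the current assignment-tree level,'' closed under one AU application via the same-parent/different-parent case split -- is exactly the formalization that the paper's construction implicitly relies on but omits. Your height count also checks out against the paper's example: the root's cells are the level-$2$ columns, each AU application pushes the cells down one column level, so $l-1$ applications yield assignment-tree levels $1$ through $l$, with level $l-2$ holding the input columns, level $l-1$ holding matrices of original inputs, and level $l$ holding the reducers. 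The auxiliary fact that every column occurs in some node is, as you say, an immediate consequence of the co-occurrence invariant at the same level (pair the column with any other column), so there is no circularity in the same-parent case. In short, your argument is a faithful and complete proof of a statement the paper leaves at the level of construction-by-example.
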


\medskip\begin{theorem}[The communication cost obtained using Algorithm 3]\label{th:second ext to the au method reducer_communication_cost}
Algorithm 3 requires at most $q\times(q(q+1))^{l-1}$ reducers and results in at most $q^2\times(q(q+1))^{l-1}$ communication cost, where $q$ is the reducer capacity and $l>2$.
\end{theorem}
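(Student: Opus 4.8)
The plan is to read off both bounds directly from the structure of the assignment tree, treating the whole argument as a two-stage counting problem: first bound the number of reducers produced at the last level, and then convert that reducer count into a communication cost by multiplying by the per-reducer load. Throughout I will rely on the fact, already recorded in Lemma~\ref{lemma:assignment_tree_height}, that the assignment tree has height exactly $l$ and that its $l$-th level is precisely where every pair of the $m=q^l$ inputs is assigned to a reducer; the cost analysis then only needs to \emph{count} what sits at that last level, not re-verify coverage.

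First I would establish, by induction on the level index $i$, that the assignment tree carries exactly $(q(q+1))^{i-1}$ nodes at level $i$, each being a $q\times q$ matrix. The base case is the single root matrix at level $1$. For the inductive step I would invoke the defining operation of Algorithm~3: applying the \emph{AU method} (Section~\ref{subsec:AU method}) to one $q\times q$ matrix produces $q(q+1)$ bins, and each bin, after its $q$ column indices are expanded into $q$ sub-columns apiece, becomes one $q\times q$ child matrix. Hence every node at level $i$ spawns $q(q+1)$ children, so the count is multiplied by $q(q+1)$ from one level to the next, giving $(q(q+1))^{i-1}$ at level $i$.

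Next I would pass from matrices to reducers and then to cost. At the last level the cells are no longer placeholders but the original unit-sized inputs, and a final application of the \emph{AU method} to each matrix of the $(l-1)$-st level deposits the inputs into reducers organized as $(q(q+1))^{l-1}$ teams of $q$ reducers each. Multiplying the number of teams by the team size gives at most $q\cdot(q(q+1))^{l-1}=q^{l}(q+1)^{l-1}$ reducers, which is the first claimed bound. Finally, since all inputs are of unit size and each reducer is filled to its capacity $q$ by the \emph{AU method}, every reducer contributes exactly $q$ to the communication cost; hence the total communication cost is the number of reducers times $q$, namely at most $q^{2}\cdot(q(q+1))^{l-1}$, which is the second claimed bound.

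The main obstacle will be the bookkeeping in the inductive step rather than the arithmetic, which is routine once the branching factor is pinned down. Concretely, I must argue that the branching factor is exactly $q(q+1)$ at \emph{every} level and that the ``teams of $q$'' grouping is preserved all the way down: each expansion of a bin into a genuine $q\times q$ matrix has to be well defined, i.e. each cell must indeed refer to $q$ columns of the next level of the bottom-up tree so that the \emph{AU method} can be reapplied unchanged, and this is exactly what Lemma~\ref{lemma:assignment_tree_height} underwrites. A secondary point to verify is that no reducer is left underfull, since the cost bound charges the full capacity $q$ to each reducer; if some reducers received fewer than $q$ inputs, the communication cost would only be smaller, so the stated upper bound would still hold.
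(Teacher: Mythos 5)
Your argument is correct and follows essentially the same route as the paper: both invoke Lemma~\ref{lemma:assignment_tree_height} for the height of the assignment tree, count $(q(q+1))^{i-1}$ nodes at level $i$ via the branching factor $q(q+1)$ of the \emph{AU method}, conclude that the last level has $(q(q+1))^{l-1}$ teams of $q$ reducers, and charge at most $q$ units of communication per reducer. Your write-up merely makes explicit the induction on the level count and the underfull-reducer remark that the paper leaves implicit in its description of Algorithm 3.
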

\begin{proof}
For a given $m=q^l$, $l>2$, the assignment tree has height $l$ (Lemma~\ref{lemma:assignment_tree_height}), and (according to Algorithm 3) $l^{th}$ level has $q\times (q(q+1))^{l-1}$ reducers providing an assignment of each pairs of inputs. Hence, Algorithm 3 uses $q(q(q+1))^{l-1}$ reducers, and the communication cost is at most $q^2\times(q(q+1))^{l-1}$.
\end{proof}

\medskip\medskip \noindent \textit{Approximation factor.} The optimal communication is $\frac{m(m-1)}{q-1}$ (see Theorem~\ref{th:a2a_2-step-The total communication cost equal input}). Replacing $m$ with $q^l$ we get $q^l(q^l-1)/(q-1)$. Thus, the ratio between the optimal communication and the communication of our mapping schema is $(q^l-1)/q(q-1)(q+1)^{l-1}$.
We can see two cases:
\begin{enumerate}
  \item When $q$ is large. Then we drop the constant 1 and the ratio is approximately equal to $\frac{1}{q}$.
  \item When $q$ is very small compared to $q^l$. Then the ratio is $q^l/q(q-1)(q+1)^{l-1}$.

  For $q=5$, the inverse of the ratio is approximately $(6/5)^{l-1}$. This is already acceptable for practical applications if we think that the size of data is $5^l$, thus $l$ may as well be $l=9$, in which case this ratio is approximately 4.3. For $q=2$ and $q=3$ we already have optimal mappings schemas. Our conjecture is that there are optimal schemas for $q=4$ and $q=5$ even by using the techniques developed and presented here.
\end{enumerate}

\medskip \noindent{\bf Open problem:} In this section, we provided two algorithms for two different cases extending the \emph{AU method}. However, this is an open problem of finding good approximation algorithms for the subcases that are not covered here.

\medskip
\section{A Hybrid Algorithm for the \textit{A2A Mapping Schema Problem}}
\label{subsec:A hybrid approach}
In the previous sections, we provide algorithms for different-sized and almost equal-sized inputs. The hybrid approach considers both different-sized and almost equal-sized inputs together. The objective of the hybrid approach is to place inputs to two different-sized bins, and then consider each of the bins as a single input.

Specifically, the hybrid approach uses the previously given algorithms (bin-packing-based approximation algorithm) and Algorithms~\ref{alg:2stepmethodforoddq}A,~\ref{alg:2stepmethodforoddq}B, 2, 3. We divide the given $m$ inputs into two disjoint sets according to their input size, and then use the bin-packing-based approximation algorithm and Algorithms~\ref{alg:2stepmethodforoddq}A,~\ref{alg:2stepmethodforoddq}B, 2, or 3 depending on the size of inputs.

\medskip\medskip \noindent \textbf{Algorithm 4.} We divide $m$ inputs into two sets $A$ that holds the input $i$ of size $\frac{q}{3}<w_i\leq\frac{q}{2}$, and $B$ holds all the inputs of sizes less than or equal to $\frac{q}{3}$. Algorithm 4 consists of four steps, as follows:

\begin{enumerate}
    \item Use the bin-packing-based approximation algorithm to place all the inputs of:
    \begin{enumerate}[noitemsep,nolistsep]
        \item the set $A$ to bins of size $\frac{q}{2}$, and each such bin is considered as a single input of size $\frac{q}{2}$ that we call the \emph{big input}. Consider that $x$ big inputs are obtained.
        \item the set $B$ twice, first to bins of size $\frac{q}{2}$, where each bin is considered as a single input of size $\frac{q}{2}$ that we call the \emph{medium input}, and second, to bins of size $\frac{q}{3}$, where each bin is also considered as a single input of size $\frac{q}{3}$ that we call the \emph{small input}. Consider that $y$ medium and $z$ small inputs are obtained.
    \end{enumerate}
    \item Use $\frac{x(x-1)}{2}$ reducers to assign each pair of big inputs.
    \item Use $x\times y$ reducers to assign each big input with each medium input.
    \item Use the \emph{AU method}, Algorithm~\ref{alg:2stepmethodforoddq}, 2, or 3 on the $z$ small inputs, depending on the case, to assign each pair of small inputs.
\end{enumerate}

\begin{figure}
\begin{center}
\includegraphics[scale=0.42]{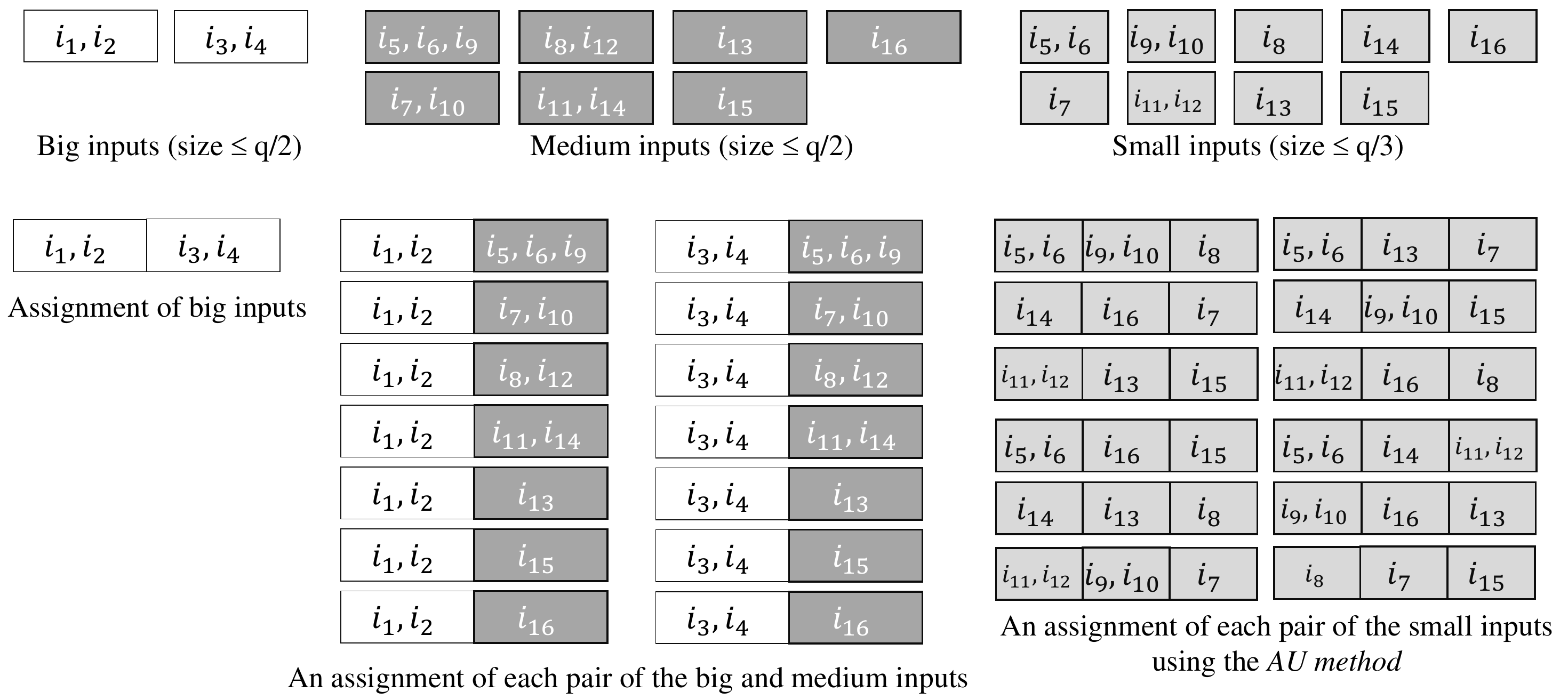}
\end{center}
\caption{An example to show the working of Algorithm 4. We are given 15 inputs, where inputs $i_1$ to $i_4$ are of sizes greater than $\frac{q}{3}$, and all the other inputs are of sizes less than or equal to $\frac{q}{3}$.}
\label{fig:hybrid_q3}
\end{figure}

We present an example to illustrate Algorithm 4 in Figure~\ref{fig:hybrid_q3}. Note that the use of $\frac{x(x-1)}{2}$ reducers assigns each pair of original inputs whose size between $\frac{q}{3}$ and $\frac{q}{2}$. Also by using $x\times y$ reducers, we assign each big input (or original inputs whose size is between $\frac{q}{3}$ and $\frac{q}{2}$) with each original input whose size is less than $\frac{q}{3}$. Further, the \emph{AU method}, Algorithm~\ref{alg:2stepmethodforoddq}, 2, or 3 assigns each pair of original inputs whose size is less than or equal to $\frac{q}{3}$.

\medskip\medskip \noindent \emph{Algorithm correctness.} The algorithm correctness shows that every pair of inputs is assigned to reducers. Specifically, the algorithm correctness shows that each pair of the big inputs is assigned to reducers, each of the big inputs is assigned to reducers with each of the medium inputs, and each pair of the small inputs is assigned to reducers.

\medskip
\section{Approximation Algorithms for the \emph{A2A Mapping Schema Problem} with an Input $>q/2$}
\label{sec:A big input of size greater than q2}
In this section, we consider the case of an input of size $w_i$, $\frac{q}{2}<w_i<q$; we call such an input as a \emph{big input}. Note that if there are two big inputs, then they cannot be assigned to a single reducer, and hence, there is no solution to the \emph{A2A mapping schema problem}. We assume $m$ inputs of different sizes are given. There is a big input and all the remaining $m-1$ inputs, which we call the \emph{small inputs}, have at most size $q-w_i$. We consider the following three cases in this section:
\begin{enumerate}
\item The big input has size $w_i$, where $\frac{q}{2}<w_i\leq \frac{2q}{3}$,
\item The big input has size $w_i$, where $\frac{2q}{3}<w_i\leq \frac{3q}{4}$,
\item The big input has size $w_i$, where $\frac{3q}{4}<w_i<q$.
\end{enumerate}

\begin{figure}
\begin{center}
\includegraphics[scale=0.4]{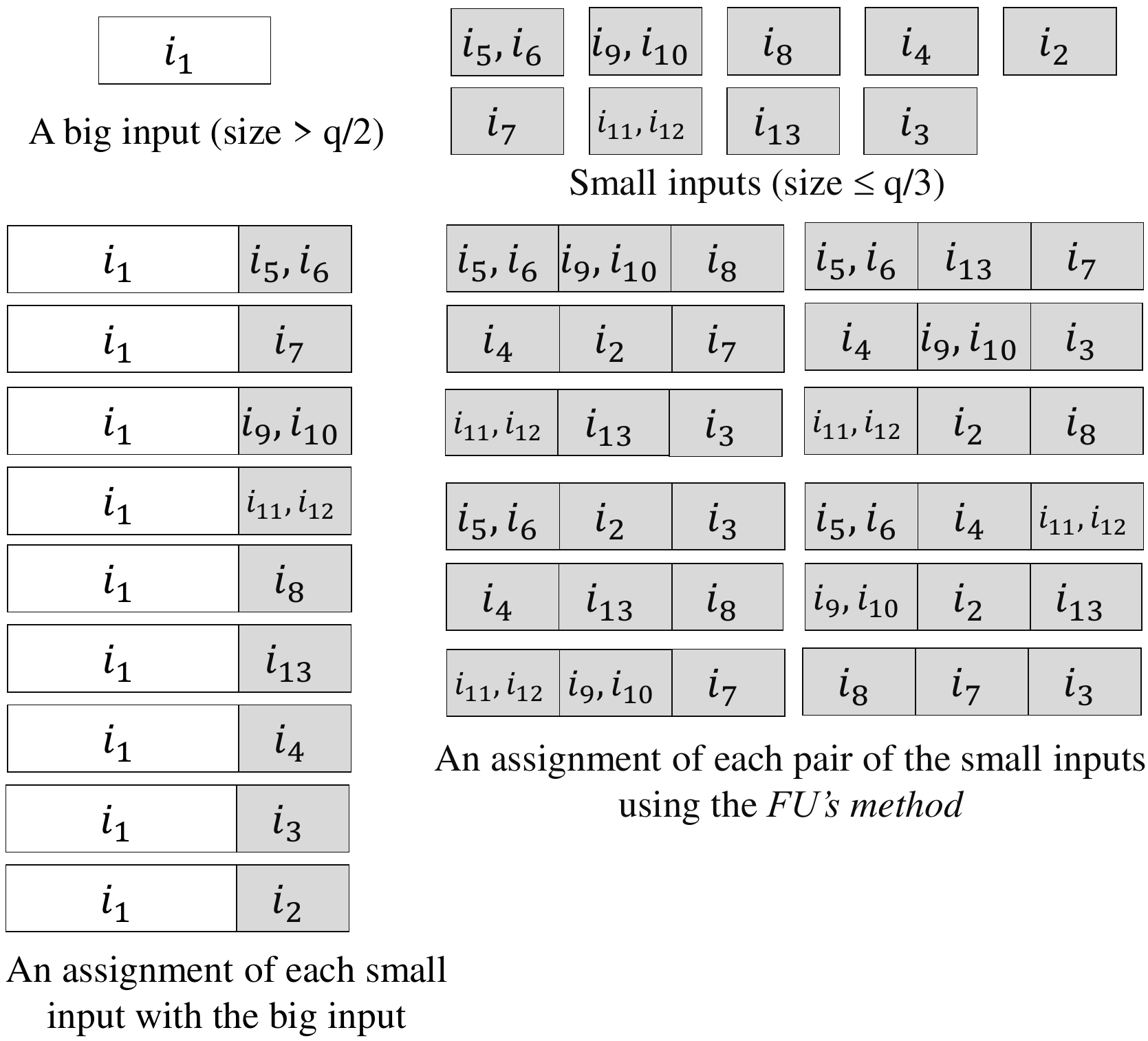}
\end{center}
\caption{An example to show an assignment of a big input of size $\frac{q}{2}< w_i\leq \frac{2q}{3}$ with all the remaining inputs of sizes less than or equal to $\frac{q}{3}$.}
\label{fig:fig_one_big_input_q3_case.pdf}
\end{figure}

The communication cost is dominated by the big input. We consider three different cases of the big input to provide efficient algorithms in terms of the communication cost, where the first two cases can assign inputs to almost an optimal number of reducers, which results in almost minimum communication cost. We use the previously given algorithms (bin-packing-based approximation algorithm) and Algorithms 1-4 to provide a solution to the \emph{A2A mapping schema problem} for the case of a big input.

A \textit{simple solution} is to use FFD or BFD bin-packing algorithm to place the small inputs to bins of size $q-w_i$. Now, we consider each of the bins as a single input of size $q-w_i$. Let $x$ bins are used. We assign each of the $x$ bins to one reducer with a copy of the big input. Further, we assign the small inputs to bins of size $\frac{q}{2}$, and consider each of such bins as a single input of size $\frac{q}{2}$. Now, we can assign each pair of bins (each of size $\frac{q}{2}$) to reducers. In this manner, each pair of inputs is assigned to reducers.

\medskip\medskip \noindent \textbf{The big input of size $\frac{q}{2}< w_i\leq \frac{2q}{3}$.} In this case, we assume that the small inputs have at most $\frac{q}{3}$ size. We use First-Fit Decreasing (FFD) or Best-Fit Decreasing (BFD) bin-packing algorithm, the \emph{AU method} (Section~\ref{subsec:AU method}), and Algorithms 2, 3 (Section~\ref{subsec:Generalizing Techniques from the AU method}). We proceed as follows:

\begin{enumerate}
  \item First assign the big input with the small inputs.
    \begin{enumerate}[noitemsep,nolistsep]
        \item Use a bin-packing algorithm to place the small inputs to bins of size $\frac{q}{3}$. Now, we consider each of the bins as a single input of size $\frac{q}{3}$.
        \item Consider that $x$ bins are used. Assign each of the bins to one reducer with a copy of the big input.
    \end{enumerate}
  \item Depending on the number of bins, we use the \emph{AU method}, and Algorithms 2, 3 to assign each pair of the small inputs to reducers.
\end{enumerate}
An example is given in Figure~\ref{fig:fig_one_big_input_q3_case.pdf}, where we place the small inputs to 9 bins of size $\frac{q}{3}$ and assign each of the bins to one reducer with a copy of the big input. Further, we implement the \emph{AU method} on 9 bins to assign each pair of the small inputs.

\medskip\medskip \noindent \textbf{The big input of size $\frac{2q}{3}< w_i\leq \frac{3q}{4}$.} In this case, we assume that the small inputs have at most $\frac{q}{4}$ size. We use a bin-packing algorithm and Algorithms~\ref{alg:2stepmethodforoddq}B (Sections~\ref{subsec:Generalizing the Technique from section q is equal to 3}). We proceed as follows:
\begin{enumerate}
  \item First assign the big input with the small inputs.
    \begin{enumerate}[noitemsep,nolistsep]
        \item Use a bin-packing algorithm to place the small inputs to bins of size $\frac{q}{4}$.
        \item Consider that $x$ bins are used. Assign each of the bins to one reducer with a copy of the big input.
    \end{enumerate}
  \item Depending on the number of bins, we use Algorithm~\ref{alg:2stepmethodforoddq}B to assign each pair of small inputs.
\end{enumerate}

\medskip\medskip \noindent \textbf{The big input of size $\frac{3q}{4}<w_i<q$.} In this case, we assume that the small inputs have at most $q-w_i$ size. In this case, we use a bin-packing algorithm and place the small inputs to bins of size $q-w_i$. We then place each of the bins to one reducer with a copy of the big input. Note that, we have not assigned each pair of small inputs. In order to assign each pair of small inputs, we use the bin-packing-based approximation algorithm (Section~\ref{subsubsec:Different sized inputs}) or Algorithms 1-4 depending on size of the small inputs.

\medskip
\begin{theorem}[Upper bounds from algorithm]\label{th:a2a_big_input}
For a list of $m$ inputs where a big input, $i$, of size $\frac{q}{2}<w_i<q$ and for a given reducer capacity $q$, $q<s^{\prime}<s$, an input is replicated to at most $m-1$ reducers for the \textit{A2A mapping schema problem}, and the number of reducers and the communication cost are at most $m-1+\frac{8{s^2}}{q^2}$ and $(m-1)q+\frac{4s^2}{q}$, respectively, where $s^{\prime}$ is the sum of all the input sizes except the size of the big input and $s$ is the sum of all the input sizes.
\end{theorem}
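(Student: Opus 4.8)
The plan is to analyze the two-stage ``simple solution'' stated immediately before the theorem: stage one covers every pair that contains the big input $i$, stage two covers every pair of small inputs, and the final bounds follow by summing the two stages. Throughout I will lean on the bin-packing guarantee of Theorem~\ref{th:our_bounds} and on the elementary fact that a reducer's load never exceeds $q$, so the communication cost contributed by any collection of reducers is at most (number of reducers) $\times q$.

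First I would settle the replication count and the $m-1$ term. Because $w_i > q/2$ the big input cannot share a reducer with a second big input, but there is only one big input, so the sole constraint on it is that it meet each of the $m-1$ small inputs. Packing the small inputs (each of size at most $q-w_i$) into bins of size $q-w_i$ yields at most $m-1$ bins, and each such bin together with one copy of the big input fits within capacity $q$. Hence stage one uses at most $m-1$ reducers and the big input is replicated at most $m-1$ times. Every small input appears once in stage one and, as I argue below, in at most $m-2$ stage-two reducers, so no input is replicated more than $m-1$ times; this is the claimed replication bound and it supplies the first summand of both the reducer count and the communication cost.

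Next I would bound stage two by invoking Theorem~\ref{th:our_bounds} on the small inputs alone. Their sizes are each at most $q-w_i < q/2$ and their total $s'$ satisfies $s' < s$, so the bin-packing-based algorithm of Section~\ref{subsubsec:Different sized inputs} makes every pair of small inputs meet using at most $\frac{8s'^2}{q^2} \le \frac{8s^2}{q^2}$ reducers with communication cost at most $\frac{4s'^2}{q} \le \frac{4s^2}{q}$. Relaxing $s'$ to the larger $s$ preserves validity and matches the (looser) form in the statement; the hypothesis $q < s'$ simply guarantees that this stage genuinely needs more than one reducer.

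Finally I would add the two stages: the reducer count is at most $(m-1)+\frac{8s^2}{q^2}$, and the communication cost is at most $(m-1)q$ from stage one (at most $m-1$ reducers, each loaded to at most $q$) plus $\frac{4s^2}{q}$ from stage two, giving $(m-1)q+\frac{4s^2}{q}$. Correctness splits cleanly, with big-with-small pairs covered in stage one and small-with-small pairs in stage two, so every required pair meets. The step most open to the charge of looseness---and the one I would flag as the crux of the accounting---is bounding the stage-one cost by (number of reducers) $\times q$ rather than by the sharper $x\,w_i+s'$ (with $x \le m-1$ the number of $(q-w_i)$-bins); I adopt the coarser bound precisely because it is the expression appearing in the theorem and because the $(m-1)q$ term dominates, yielding the approximation ratio $\frac{s^2}{mq^2}$ recorded in Table~\ref{table:Bounds}.
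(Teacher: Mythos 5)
Your proposal is correct and follows essentially the same two-stage argument as the paper: the big input is paired with the small inputs using at most $m-1$ reducers (contributing at most $(m-1)q$ to the communication cost), and the small-with-small pairs are handled by the bin-packing scheme of Theorem~\ref{th:our_bounds} applied to total size $s'<s$, after which the two stages are summed. The only difference is that you spell out the per-input replication bound and the coarseness of the $(m-1)q$ accounting a bit more explicitly than the paper does, but the decomposition and the invoked lemma are identical.
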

\begin{proof}
The big input $i$ can share a reducer with inputs whose sum of the sizes is at most $q-w_i$. In order to assign the input $i$ with all the remaining $m-1$ small inputs, it is required to assign a sublist of $m-1$ inputs whose sum of the sizes is at most $q-w_i$. If all the small inputs are of size almost $q-w_i$, then a reducer can hold the big input and one of the small inputs. Hence, the big input is required to be sent to at most $m-1$ reducers that results in at most $(m-1)q$ communication cost.

Also, each pair of all the small inputs is assigned to reducers (by first placing them to bins of size $\frac{q}{2}$ using FFD or BFD bin-packing algorithm). The assignment of all the small inputs results in at most $\frac{8{s^{\prime}}^2}{q^2}<\frac{8{s^2}}{q^2}$ reducers and at most $\frac{4{s^{\prime}}^2}{q}<\frac{4s^2}{q}$ communication cost (Theorem~\ref{th:our_bounds}). Thus, the number of reducers are at most $m-1+\frac{8{s^2}}{q^2}$ and the communication cost is at most $(m-1)q+\frac{4s^2}{q}$.
\end{proof}

\medskip\medskip \noindent \textit{Approximation factor.} The optimal communication cost (from Theorem \ref{th:a2a_communication_cost}) is $s^2/q$ and the communication cost of the algorithm (from Theorem \ref{th:a2a_big_input}) is $(m-1)q+4s^2/q$. Thus, the ratio between the optimal communication and the communication of our mapping schema is approximately $\frac{s^2}{mq^2}$.

\medskip
\section{An Approximation Algorithm for the \emph{X2Y Mapping Schema Problem}}
\label{sec:a heuristic for the X-meets-Y Mapping Schema Problem}
We propose an approximation algorithm for the \emph{X2Y mapping schema problem} that is based on bin-packing algorithms. Two lists, $X$ of $m$ inputs and $Y$ of $n$ inputs, are given. We assume that the sum of input sizes of the lists $X$, denoted by $sum_x$, and $Y$, denoted by $sum_y$, is greater than $q$. We analyze the algorithm on criteria (number of reducers and the communication cost) given in Section~\ref{sec:Approximation Algorithms Preliminary Results}. We look at the lower bounds in Theorem~\ref{th:x2y_communication_cost}, and Theorem~\ref{th:our_bounds_x2y} gives an upper bound from the algorithm. The bounds are given in Table~\ref{table:Bounds}.

\medskip\begin{theorem}\label{th:x2y_communication_cost}
\textnormal{\textsc{(Lower bounds on the communication cost and number of reducers)}} For a list $X$ of $m$ inputs, a list $Y$ of $n$ inputs, and a given reducer capacity $q$, the communication cost and the number of reducers, for the \textit{X2Y mapping schema problem}, are at least $\frac{2\cdot sum_x\cdot sum_y}{q}$ and $\frac{2\cdot sum_x\cdot sum_y}{q^2}$, respectively, where $q$ is the reducer capacity, $sum_x$ is the sum of input sizes of the list $X$, and $sum_y$ is the sum of input sizes of the list $Y$.
\end{theorem}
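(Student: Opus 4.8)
The plan is to mirror the lower-bound argument for the A2A problem (Theorem~\ref{th:a2a_communication_cost}), but to account for the two lists $X$ and $Y$ separately; this separation is exactly what produces the extra factor of $2$ in the X2Y bound.

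First I would fix an input $x_i \in X$ of size $w_i$ and count how many reducers must carry a copy of it. Since $x_i$ has to meet every input of $Y$, and any reducer holding $x_i$ has only $q - w_i$ capacity left for inputs of $Y$, the total size of $Y$-inputs that can accompany $x_i$ in a single reducer is at most $q - w_i$. Because the $Y$-inputs appearing together with $x_i$ across all its reducers must jointly cover all of $Y$ (total size $sum_y$), the number of reducers holding $x_i$ is at least $\frac{sum_y}{q - w_i}$, and since $q - w_i \le q$ this is at least $\frac{sum_y}{q}$. Hence the copies of $x_i$ contribute at least $w_i \cdot \frac{sum_y}{q}$ bits, and summing over all $x_i \in X$ the whole list $X$ contributes at least $\sum_i w_i \frac{sum_y}{q} = \frac{sum_x \cdot sum_y}{q}$.

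The key observation is that the identical counting applies symmetrically to $Y$: each $y_j \in Y$ of size $w_j'$ must meet all of $X$, a reducer holding $y_j$ leaves at most $q - w_j'$ capacity for $X$-inputs, so $y_j$ is replicated to at least $\frac{sum_x}{q - w_j'} \ge \frac{sum_x}{q}$ reducers, and list $Y$ contributes at least $\frac{sum_x \cdot sum_y}{q}$ bits. Adding the two contributions — which count disjoint copies, since a copy of an $X$-input is never a copy of a $Y$-input — gives a total communication cost of at least $\frac{2 \cdot sum_x \cdot sum_y}{q}$. The reducer bound then follows immediately: a reducer carries at most $q$ bits of content, so the number of reducers is at least the communication cost divided by $q$, i.e. at least $\frac{2 \cdot sum_x \cdot sum_y}{q^2}$.

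The one point that needs care — the main (though mild) obstacle — is the replication-counting step: I must argue that the summed $Y$-content over the reducers containing $x_i$ is at least $sum_y$ even though the same $y_j$ may occur in several of those reducers. This holds because overlaps only inflate the per-reducer sums, so $\sum_{\text{reducers} \ni x_i} (\text{$Y$-content}) \ge sum_y$, and combined with each term being at most $q - w_i$ this yields the claimed replication bound. I would also note that a solution exists only when every $w_i < q$ and every $w_j' < q$, so that $q - w_i$ and $q - w_j'$ are positive and the divisions are legitimate; under that standing assumption the argument needs no further hypotheses.
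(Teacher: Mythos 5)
Your proposal is correct and follows essentially the same argument as the paper: lower-bound the replication of each $x_i$ by $\frac{sum_y}{q}$ and of each $y_j$ by $\frac{sum_x}{q}$, sum the weighted contributions of the two disjoint lists to get $\frac{2\cdot sum_x\cdot sum_y}{q}$, and divide by $q$ for the reducer count. Your explicit justification of the replication bound via $\frac{sum_y}{q-w_i}\ge\frac{sum_y}{q}$ is a welcome elaboration of a step the paper merely asserts.
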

\begin{proof}
Since an input $i$ of the list $X$ and an input $j$ of the list $Y$ are replicated to at least $\frac{sum_y}{q}$ and $\frac{sum_x}{q}$ reducers, respectively, the communication cost for the inputs $i$ and $j$ are $w_i\times \frac{sum_y}{q}$ and $w_j\times \frac{sum_x}{q}$, respectively. Hence, the communication cost will be at least $\sum_{i=1}^m w_i \frac{sum_y}{q}+\sum_{j=1}^n w_j \frac{sum_x}{q} = \frac{2\cdot sum_x\cdot sum_y}{q}$.

Since the number of bits to be assigned to reducers is at least $\frac{2\cdot sum_x\cdot sum_y}{q}$ and a reducer can hold inputs whose sum of the sizes is at most $q$, the number of reducers must be at least $\frac{2\cdot sum_x\cdot sum_y}{q^2}$.
\end{proof}

\medskip\medskip \noindent \textbf{Bin-packing-based approximation algorithm for the \emph{X2Y mapping schema problem.}} A solution to the \emph{X2Y mapping schema problem} for different-sized inputs can be achieved using bin-packing algorithms. Let two lists $X$ of $m$ inputs and $Y$ of $n$ inputs are given. The algorithm will not work when a list holds an input of size $w_i$ and the another list holds an input of size greater than $q-w_i$, because these inputs cannot be assigned to a single reducer in common. Let the size of the largest input, $i$, of the list $X$ is $w_i$; hence, all the inputs of the list $Y$ have at most size $q-w_i$. We place inputs of the list $X$ to bins of size $w_i$, and let $x$ bins are used to place $m$ inputs. Also, we place inputs of the list $Y$ to bins of size $q-w_i$, and let $y$ bins are used to place $n$ inputs. Now, we consider each of the bins as a single input, and a solution to the \emph{X2Y mapping schema problem} is obtained by assigning each of the $x$ bins with each of the $y$ bins to reducers. In this manner, we require $x\cdot y$ reducers.

\medskip\begin{theorem}[Upper bounds from the algorithm]\label{th:our_bounds_x2y}
For a bin size $b$, a given reducer capacity $q=2b$, and with each input of lists $X$ and $Y$ being of size at most $b$, the number of reducers and the communication cost, for the \textit{X2Y mapping schema problem}, are at most $\frac{4\cdot sum_x \cdot sum_y}{b^2}$, and at most $\frac{4\cdot sum_x \cdot sum_y}{b}$, respectively, where $sum_x$ is the sum of input sizes of the list $X$, and $sum_y$ is the sum of input sizes of the list $Y$.
\end{theorem}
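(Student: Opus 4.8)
The plan is to mirror the proof of Theorem~\ref{th:our_bounds} (the A2A bin-packing bound), adapting it to the bipartite structure of the X2Y problem described in the algorithm preceding the statement. Since the theorem specializes to the setting $q=2b$ with every input of both lists having size at most $b$, I would first bin-pack the list $X$ into bins of size $b$ and, separately, bin-pack the list $Y$ into bins of size $b$, in both cases using FFD or BFD. Because $q=2b$, any one $X$-bin together with any one $Y$-bin has total size at most $2b=q$ and hence fits in a single reducer; this is exactly what makes the subsequent assignment legal, and it is where the hypothesis $q=2b$ is used.

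Next I would bound the number of bins produced on each side. The FFD/BFD guarantee that every bin except possibly one is at least half-full means each $X$-bin holds inputs whose sizes sum to at least $b/2$, so the number of $X$-bins satisfies $x \le 2\,sum_x/b$; symmetrically $y \le 2\,sum_y/b$. Treating each bin as a single input and placing every $X$-bin in a common reducer with every $Y$-bin uses exactly $x\cdot y$ reducers, which is therefore at most $(2\,sum_x/b)(2\,sum_y/b)=4\,sum_x\,sum_y/b^2$, giving the claimed reducer bound. For the communication cost I would count replication: each $X$-bin is shipped to $y$ reducers (one per $Y$-bin) and each $Y$-bin to $x$ reducers, so the total number of bits sent is $sum_x\cdot y + sum_y\cdot x \le sum_x(2\,sum_y/b)+sum_y(2\,sum_x/b)=4\,sum_x\,sum_y/b$, matching the second bound.

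The only delicate point, and the step I would watch most carefully, is the half-full guarantee: strictly, the single possibly-underfull bin on each side yields $x \le 2\,sum_x/b + 1$ rather than $2\,sum_x/b$, so I would either absorb the additive constant into the bound (as is done in Theorem~\ref{th:our_bounds}) or observe that it is negligible in the intended regime where $sum_x$ and $sum_y$ are large compared to $b$. No other step presents difficulty, since the two-bins-per-reducer packing is immediate from $q=2b$ and the replication counts are exact.
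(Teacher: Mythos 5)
Your proposal follows essentially the same route as the paper's proof: bin-pack each list separately into bins of size $b$ with FFD/BFD, invoke the half-full guarantee to bound the bin counts by $2\,sum_x/b$ and $2\,sum_y/b$, pair every $X$-bin with every $Y$-bin (legal since $q=2b$), and count replication to get both bounds. Your remark about the single possibly-underfull bin contributing an additive $+1$ is in fact a point the paper's own proof glosses over, so your version is, if anything, slightly more careful.
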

\begin{proof}
A bin $i$ can hold inputs whose sum of the sizes is at most $b$. Hence, it is required to divide inputs of the lists $X$ and $Y$ into at least $\frac{sum_x}{b}$ and $\frac{sum_y}{b}$ bins, respectively. Since the FFD or BFD bin-packing algorithm ensures that all the bins (except only one bin) are at least half-full, each bin of size $b$ has at least inputs whose sum of the sizes is at least $\frac{b}{2}$. Thus, all the inputs of the lists $X$ and $Y$ can be placed in at most $\frac{sum_x}{b/2}$ and $\frac{sum_y}{b/2}$ bins of size $b$, respectively.

Let $x$ (=$\frac{2\cdot sum_x}{b}$) and $y$ (=$\frac{2\cdot sum_y}{b}$) bins are used to place inputs of the lists $X$ and $Y$, respectively. Since each bin is considered as a single input, we can assign each of the $x$ bins with each of the $y$ bins at reducers, and hence, we require at most $\frac{4\cdot sum_x \cdot sum_y}{b^2}$ reducers. Since each bin that is containing inputs of the list $X$ (resp. $Y$) is replicated to at most $\frac{2\cdot sum_y}{b}$ (resp. at most $\frac{2\cdot sum_x}{b}$) reducers, the replication of individual inputs of the list $X$ (resp. $Y$) is at most $\frac{2\cdot sum_y}{b}$ (resp. at most $\frac{2\cdot sum_x}{b}$) and the communication cost is at most $\sum_{1\leq i\leq m} w_i \times \frac{2\cdot sum_y}{b} + \sum_{1\leq j\leq n} w_j \times \frac{2\cdot sum_x}{b} = \frac{4\cdot sum_x \cdot sum_y}{b}$.
\end{proof}

\medskip\medskip \noindent \textit{Approximation factor.} The optimal communication is $\frac{2\cdot sum_x\cdot sum_y}{q}$. Thus, the ratio between the optimal communication and the communication of our mapping schema is $\frac{1}{4}$.

\medskip
\section{Conclusion}
\label{sec:conclusion}
Two new important practical aspects in the context of MapReduce, namely different-sized inputs and the reducer capacity, are introduced for the first time. The capacity of a reducer is defined in terms of the reducer's memory size. We note that processing time is typically proportional to the memory capacity. All reducers have an identical capacity, and any reducer cannot hold inputs whose input sizes are more than the reducer capacity. We demonstrated the importance of the capacity aspect by considering two common mapping schema problems of MapReduce, \emph{A2A mapping schema problem} -- every two inputs are required to be assigned to at least one common reducer -- \emph{X2Y mapping schema problem} -- every two inputs, the first input from a list $X$ and the second input from a list $Y$ -- is required to be assigned to at least one common reducer. Unfortunately, it turned out that finding solutions to the \emph{A2A} and the \emph{X2Y} mapping schema problems that use the minimum number of reducers is not possible in polynomial time. On the positive side, we present near optimal approximation algorithms for the \emph{A2A} and the \emph{X2Y} mapping schema problems.

Mapping schemes for the case of reducers with different capacities are left for future research. Nevertheless, there exist a reduction to our proposed algorithms that may yield a reasonable performance in some cases. In particular, we can consider a common divisor of all the non-identical reducer capacity as a unit-sized reducer capacity. Then, we can follow our proposed algorithms to solve problems while regarding non-identical reducer capacities.

\appendix
\medskip
\section{Proofs of Theorems 1, 2, and Lemma 1}\label{app:proof of th}
\medskip\noindent \textbf{Theorem 1} \textit{The problem of finding whether a mapping schema of $m$ inputs of different input sizes exists, where every two inputs are assigned to at least one of $z\geq 3$ identical-capacity reducers, is NP-hard.}

\smallskip
\begin{proof}
The proof is by a reduction from the partition problem~\cite{DBLP:books/fm/GareyJ79} that is a known NP-complete problem. The partition problem is defined as follows: given a set $I = \{i_1, i_2, \ldots, i_m\}$ of $m$ positive integer numbers, it is required to find two disjoint subsets, $S_1 \subset I$ and $S_2 \subset I$, so that the sum of numbers in $S_1$ is equal to the sum of numbers in $S_2$, $S_1 \cap S_2 = \emptyset$, and $S_1 \cup S_2 =I$.

We are given $m$ inputs whose input size list is $W=\{w_1, w_2, \ldots, w_m\}$, and the sum of the sizes is $s= \Sigma_{1\leq i \leq m}w_i$. We add $z-3$ additional inputs, $ai_1, ai_2, \ldots, ai_{z-3}$, each of size $\frac{s}{2}$. We call these new $z-3$ ($ai_1, ai_2, \ldots, ai_{z-3}$) inputs the \emph{medium inputs}. In addition, we add one more additional input, $ai^{\prime}$, of size $\frac{(z-2)s}{2}$ that we call the \emph{big input}. Further, we assume that the reducer capacity is $\frac{(z-1)s}{2}$.

The proof proceeds in two steps: (\textit{i}) we prove that in case the $m$ original inputs can be partitioned, then all the inputs can be assigned to the $z$ reducers such that every two inputs are assigned to at least one reducer, (\textit{ii}) we prove that in case the mapping schema for all the inputs over the $z$ reducers is successful, then there are two disjoint subsets $S_1$ and $S_2$ of the $m$ original inputs that satisfy the partition requirements. We can assume that if the sum is not divisible by 2, then the answer to the partition problem is surely \enquote{no,} so the reduction of the partition problem to the \emph{A2A mapping schema problem} is trivial.

\begin{figure}[h!]
\begin{center}
\includegraphics{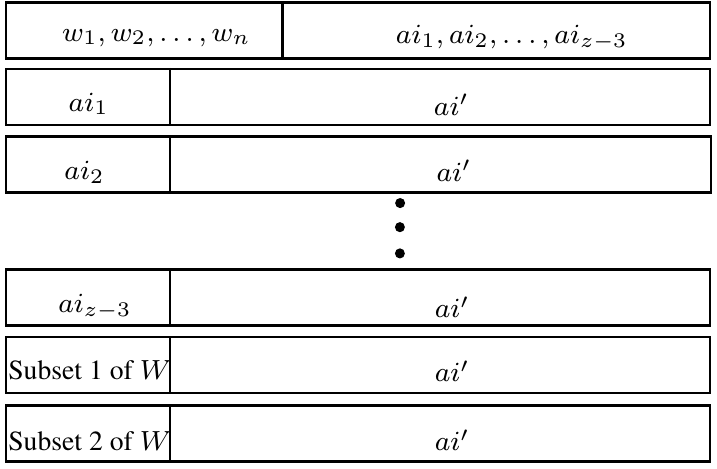}
\end{center}
\caption{Proof of NP-hardness of the \emph{A2A mapping schema problem} for $z>2$ identical-capacity reducers, Theorem~\ref{th:all-to-all}.}
\label{fig:new_proof}
\end{figure}

We first show that if there are two disjoint subsets $S_1$ and $S_2$ of equal size of the $m$ original inputs, then there must exist a solution to the \emph{A2A mapping schema problem}. Recall that any of the reducers can hold a set of inputs whose sum of the sizes is at most $\frac{(z-1)s}{2}$, and the sum of the sizes of the new $z-3$ medium inputs is exactly $\frac{(z-3)s}{2}$. Hence, all the $m$ original inputs ($i_1, i_2, \ldots, i_m$) and a list of the $z-3$ medium inputs can be assigned to a single reducer (out of the $z$ reducers), and this assignment uses $s+\frac{(z-3)s}{2}$ capacity, which is exactly the capacity of any reducer. Further, the big input, $ai^{\prime}$, of size $\frac{(z-2)s}{2}$ can share the same reducer with only one medium input $ai_i$ (it could also share with original inputs). Thus, the big input, $ai^{\prime}$, and all the medium inputs are assigned to $z-3$ reducers (out of the remaining $z-1$ reducers). In addition, the remaining two reducers can be used for the following assignment: the first reducer is assigned the set $S_1$ and the big input, $ai^{\prime}$, and the second reducer is assigned the set $S_2$ and the big input, $ai^{\prime}$. The above assignment is a solution to the \emph{A2A mapping schema problem} for the given $m$ original inputs, the $z-3$ medium inputs, and the big input using $z$ reducers, see Figure~\ref{fig:new_proof}.

Now, we show that a solution to the \emph{A2A mapping schema problem} --- for all the inputs over the $z$ reducers --- results in a partition of the $m$ original inputs into two equal-sized blocks. We also show that in a solution to the \emph{A2A mapping schema problem}, each of the $m$ original inputs and every medium input, $ai_i$, are assigned to exactly two reducers, and the big input, $ai^{\prime}$, is assigned to exactly $z-1$ reducers. Recall that the total sum of the sizes is $s+\frac{(z-3)s}{2}+\frac{(z-2)s}{2} = \frac{(2z-3)s}{2}$.

Due to the reducer capacity of a single reducer, all the inputs cannot be assigned to a single reducer; only a subset of the inputs, whose sum of the sizes is at most $\frac{(z-1)s}{2}$, can be assigned to one reducer. Thus, each input is assigned to at least two reducers in order to be coupled with all the other inputs.

Moreover, the big input, $ai^{\prime}$, can share the same single reducer with only a sublist, $S^{\prime}$, whose sum of the sizes is at most $\frac{s}{2}$. Hence, the big input, $ai^{\prime}$, is required to be assigned to at least $z-3$ reducers in order to be paired with the medium inputs $ai_i$. Furthermore, the big input, $ai^{\prime}$, can share the same reducer with a sublist of the $m$ original inputs whose sum of the sizes is at most $\frac{s}{2}$. This fact means that the big input, $ai^{\prime}$, must be assigned to two more reducers. On the other hand, all the medium inputs can share the same reducer with the original $m$ inputs. Thus, here, the total reducer capacity occupied by all the inputs is $2 \times \Sigma_{1 \leq i \leq m} \: w_i + 2 \times \frac{(z-3)s} {2} + (z-1)\times \frac{(z-2)s}{2} = 2s+(z-3)s+\frac{(z-1)(z-2)s}{2} = \frac{(z-1)zs}{2}$, which is exactly the total capacity of all the $z$ reducers. Thus, each of the $m$ original inputs and each medium input $ai_i$ cannot be assigned more than twice, and hence, each is assigned exactly twice. In addition, the big input, $ai^{\prime}$, is assigned to exactly $z-1$ reducers. This fact also shows that all the reducers are entirely filled with distinct inputs. Thus, a solution to the \emph{A2A mapping schema problem} yields partitions of the $m$ original inputs to $S_1$ and $S_2$ blocks, where the sum of the input sizes of any block is exactly $\frac{s}{2}$. Therefore, if there is a polynomial-time algorithm to construct the mapping schema, where every input is required to be paired with every other input, then the mapping schema finds the partitions of the $m$ original inputs in polynomial time.
\end{proof}

\medskip
\noindent\textbf{Theorem 2} \textit{The problem of finding whether a mapping schema of $m$ and $n$ inputs of different input sizes that belongs to list $X$ and list $Y$, respectively, exists, where every two inputs, the first from $X$ and the second from $Y$, are assigned to at least one of $z\geq2$ identical-capacity reducers, is NP-hard.}

\smallskip\begin{proof}
The proof is by a reduction from the partition problem~\cite{DBLP:books/fm/GareyJ79} that is a known NP-complete problem. We are given a list of inputs $I=\{i_1,i_2,\ldots,i_m\}$ whose input size list is $W=\{w_1, w_2, \ldots, w_m\}$, and the sum of the sizes is $s= \Sigma_{1\leq i \leq m} w_i$. We add $z-2$ additional inputs, $ai_1, ai_2, \ldots, ai_{z-2}$, each of size $\frac{s}{2}$. We call these new $z-2$ ($ai_1, ai_2, \ldots, ai_{z-2}$) inputs the \emph{big inputs}. In addition, we add one more additional input, $ai^{\prime}$, of size 1 that we call the \emph{small input}. Further, we assume that the reducer capacity is $1+\frac{s}{2}$. Now, the list $I$ holds $m+z-1$ inputs.

For the \emph{X2Y mapping schema problem}, we consider $m$ original inputs and the $z-2$ big inputs as a list $X$, and the small input as a list $Y$. A solution to the \emph{X2Y mapping schema problem} assigns each of the $m$ original inputs and each big input (of the list $X$) with the small input of the list $Y$.

\begin{figure}[h!]
\begin{center}
\includegraphics{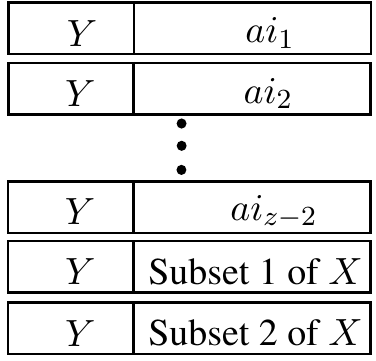}
\end{center}
\caption{Proof of NP-hardness of the \emph{X2Y mapping schema problem} for $z>1$ identical-capacity reducers, Theorem~\ref{th:x-meets-y}.}
\label{fig:x-meets-y}
\end{figure}

The proof proceeds in two steps: (\textit{i}) we prove that in case the $m$ original inputs can be partitioned, then all the $m$ original inputs, the $z-2$ big inputs, and the small input can be assigned to the $z$ reducers such that they satisfy the \emph{X2Y mapping schema problem}, (\textit{ii}) in case the \emph{X2Y mapping schema problem} is successful, then there are two disjoint subsets, $S_1$ and $S_2$, of the $m$ original inputs that satisfy the partition requirements.

We first show that if there are two disjoint subsets $S_1$ and $S_2$ of equal size of the $m$ original inputs, then there must exist a solution to the \emph{X2Y mapping schema problem}. Recall that any of the reducers can hold a set of inputs whose sum of sizes is at most $1+\frac{s}{2}$, and the sum of the sizes of the new $z-2$ big inputs is exactly $\frac{s}{2}$. Hence, the small input, $ai^{\prime}$, of size 1 and each big input, $ai_i$, can be assigned to $z-2$ reducers (out of the $z$ reducers), and this assignment uses $1+\frac{s}{2}$ capacity, which is exactly the capacity of any reducer. In addition, the remaining two reducers can be used for the following assignment: the first remaining reducer is assigned the set $S_1$ and the small input, $ai^{\prime}$, and the second remaining reducer is assigned the remaining original inputs, $S_2$, and the small input, $ai^{\prime}$. The above assignment is a solution to the \emph{X2Y mapping schema problem} (for the given $m+z-2$ inputs of the list $X$ and the one input of the list $Y$ using $z$ reducers, see Figure~\ref{fig:x-meets-y}).

Now, we prove the second claim that a solution to the \emph{X2Y mapping schema problem} results in a partition of the $m$ original inputs into two equal-sized blocks. Recall that the total sum of the sizes is $s+\frac{(z-2)s}{2}+1 = \frac{z\times s}{2}+1$.

Due to the reducer capacity of a single reducer, all the inputs cannot be assigned to a single reducer; only a sublist of the inputs, whose sum of the sizes is at most $1+\frac{s}{2}$, can be assigned to a single reducer. We show that the small input, $ai^{\prime}$, must be assigned to all the $z$ reducers. The small input, $ai^{\prime}$, of size one can share the same single reducer with only a subset, $S^{\prime}$, whose sum of the sizes is at most $\frac{s}{2}$. Hence, the small input, $ai^{\prime}$, is required to be assigned to $z-2$ reducers (out of $z$ reducers) in order to be paired with all the big inputs $ai_i$. and the remaining two reducers in order to be paired with all the $m$ original inputs. This fact results in that a solution to the \emph{X2Y mapping schema problem} yields partitions of the $m$ original inputs to $S_1$ and $S_2$ blocks, where the sum of the input sizes of any block is exactly $\frac{s}{2}$. Therefore, if there is a polynomial-time algorithm to construct the mapping schema, where every input of one list is required to be paired with every other input of another list, then the mapping schema finds the partitions of the $m$ original inputs in polynomial time.
\end{proof}

\medskip
\noindent\textbf{Lemma 1} \textit{Let $q$ be the reducer capacity. Let the size of an input is $\big\lceil\frac{q-1}{2}\big\rceil$. Each pair of $u=2^i$, $i>0$, inputs can be assigned to $2^i - 1$ teams of $2^{i-1}$ reducers in each team.}
\smallskip
\begin{proof}
The proof is by induction on $i$.

\noindent \textbf{Basis case.} For $i=1$, we have $u=2$ inputs, and we can assign them to a team of one reducer of capacity $q$. Hence, Lemma~\ref{lm:recursive_def_odd} holds for ($i=1$) two inputs.

\noindent \textbf{Inductive step.} Assume that the inductive hypothesis --- there is a solution for $u=2^{i-1}$ inputs, where all-pairs of $u=2^{i-1}$ inputs are assigned to $2^{i-1} - 1$ teams of $2^{i-2}$ reducers in each team and have the team property (each team has one occurrence of each input, which we will prove in algorithm correctness) --- is true. Now, we can build a solution for $u=2^i$ inputs, as follows:

\begin{enumerate}[label=(\alph*),noitemsep,nolistsep]
  \item Divide $u=2^i$ inputs into two groups of $2^{i-1}$ inputs in each group,
  \item Recursively create teams for each of the two groups,
  \item Create some of the teams for the $2^i$ inputs by combining the $j^{th}$ team from the first group with the $j^{th}$ team from the second group. Since by the inductive hypothesis we have a solution for $u=2^{i-1}$ inputs, we can assign inputs of these two groups to $2\cdot(2^{i-1} - 1)$ teams of $2^{i-2}$ reducers in each team. And, by combining $j^{th}$, where $j = 1, 2, \ldots, (2^{i-1} - 1)$, teams of each group, there are $2^{i-1} - 1$ teams of $2^{i-1}$ reducers in each team; see Teams 5-7 for 8 inputs in Figure~\ref{fig:The 2-step method for q 3}.
  \item Create $2^{i-1}$ additional teams that pair the inputs from the first group with inputs from the second group. In each team, the $j^{th}$ input from the first group is assigned to the $j^{th}$ reducer. In the first team, the $j^{th}$ input from the second group is also assigned to the $j^{th}$ reducer. In subsequent teams, the assignments from the second group rotate, so in the $t^{th}$ team, the $j^{th}$ input from the second group is assigned to reducer $k+j-(2^{i-1}-1) (\mathit{modulo} 2^{i-1})$; see Teams 1-4 for 8 inputs in Figure~\ref{fig:The 2-step method for q 3}.
\end{enumerate}
By steps (c) and (d), there are total $2^{i-1} - 1+2^{i-1}=2^i-1$ teams of $2^{i-1}$ reducers in each team, and these teams holds each pair of the $u=2^i$ inputs.
\end{proof}

\medskip
\section{Pseudocode and Correctness of Algorithm 1A}
\label{app:algo_correctness_1}

\LinesNotNumbered \begin{algorithm*}[t]
\textbf{Inputs:} $m$: the number of bins obtained after placing all the given $m^{\prime}$ inputs (of size $\leq \frac{q}{k}$, $k>3$ is an odd number) to bins each of size $\frac{q}{k}$,

 $q$: the reducer capacity.

{\bf Variables:}

$A$: A set $A$, where the total inputs in the set $A$ is $y = \big\lfloor \frac{q}{2} \big\rfloor (\big\lfloor \frac{2m}{q+1} \big\rfloor+1) $

$B$: A set $B$, where the total inputs in the $B$ is $x=m-y$

$\mathit{Team}[i,j]:$ represents teams of reducers, where index $i$ indicates $i^{th}$ team and index $j$ indicates $j^{th}$ reducer in $i^{th}$ team. Consider $u = \big\lceil\frac{y}{q-\lceil q/2\rceil}\big\rceil$. There are $u-1$ teams of $v=\big\lceil \frac{u}{2}\big\rceil$ reducers in each team.

$groupA[]:$ represents disjoint groups of inputs of the set $A$, where $groupA[i]$ indicates $i^{th}$ group of $\big\lceil\frac{q-1}{2}\big\rceil$ inputs of the set $A$.

\nl{\bf Function $create\_group(y)$} \nllabel{ln:function_create_group}
\Begin{

\nl \lFor{$i\leftarrow 1$ \KwTo $u$}{$groupA[i]\leftarrow \langle i, i+1\ldots, i+\frac{q-1}{2}-1\rangle, i\leftarrow i+ \frac{q-1}{2}$ \nllabel{ln:make_group}}

\nl $2\_step\_odd\_q(1, u)$, $\mathit{Last\_Team(groupA[])}$, $Assign\_input\_from\_B(Team[])$ \nllabel{ln:call_43-method}}


\nl{\bf Function $2\_step\_odd\_q(lower, upper)$} \nllabel{ln:function_2step_method}
\Begin{

\nl \lIf{$\big\lfloor \frac{upper-lower}{2}\big\rfloor < 1$}{\KwRet}

\nl \Else{

\nl $mid\leftarrow \big\lceil\frac{upper-lower}{2}\big\rceil$, $\mathit{Assignment(lower, mid, upper)}$ \nllabel{ln:midvalue}


\nl $2\_step\_odd\_q(lower, mid)$, $2\_step\_odd\_q(mid+1, upper)$ \nllabel{ln:first_second_divide}}}

\nl \textbf{Function $\mathit{Assignment(lower, mid, upper)}$} \nllabel{ln:assignment_function_body}
\Begin{

\nl \While{$mid >1$}{

\nl \lForEach {$(a, t) \in [lower,lower+mid-1] \times [0,mid-1] $} {

$\mathit{Team}\big[(u-2\cdot mid+1)+t,a-\big\lfloor \frac{a-1}{mid}\big\rfloor\cdot\frac{mid}{2}\big]\leftarrow \langle groupA[a], groupA[value\_b(a, t, mid,upper)]\rangle$ \nllabel{ln:assign_each_pair_of_derived_input}}}}

\nl \textbf{Function $value\_b(a, t,mid, upper)$}
\Begin{

\nl \lIf{$a+t+mid <upper +1$}{\KwRet($a+t+mid$) \nllabel{ln:b_value_for_upper_plus_one}}

\nl \lElseIf{$a+t+mid >upper$}{\KwRet($a+t$) \nllabel{ln:b_value_for_greater_than_upper}}}

\nl \textbf{Function $\mathit{Last\_Team(lower, mid, upper)}$} \nllabel{ln:function_last_team}
\Begin{

\nl \lForEach {$i\in [1,v]$}{$\mathit{Team}[u-1,i]\leftarrow groupA[2\times i-1],groupA[2\times i]$ \nllabel{ln:assign_last_team}}}

\nl \textbf{Function $Assign\_input\_from\_B(Team[])$} \nllabel{ln:function_assign_b}
\Begin{

\nl \lForEach{$(i,j)\in[1,u-1]\times[1,v]$}{$\mathit{Team}[i,j]\leftarrow B[i]$ \nllabel{ln:assign_b}}}
\caption{\textbf{Part A}}
\label{alg:2stepmethodforoddq}
\end{algorithm*}

\medskip\noindent \textit{Algorithm~\ref{alg:2stepmethodforoddq}A description.} First, we divide $m$ inputs (that are actually bins of size $\frac{q}{k}$, $k>3$, after placing all the given $m$ inputs to $m^{\prime}$ bins, each of size $\frac{q}{k}$) into two sets $A$ of $y$ inputs and $B$ of $x$ inputs. Then, we make $u=\big\lceil\frac{y}{q-\lceil q/2\rceil}\big\rceil$ disjoint groups of $y$ inputs of the set $A$ such that each group holds $\frac{q-1}{2}$ inputs, lines~\ref{ln:function_create_group},~\ref{ln:make_group}. (Now, each of the groups is considered as a single input that we call the \textit{derived input}.) We do not show the addition of dummy inputs and assume that $u$ is a power of 2. Function $2\_step\_odd\_q(lower, upper)$ recursively divides the derived inputs into two halves, line~\ref{ln:function_2step_method}. Function $\mathit{Assignment(lower, mid, upper)}$ (line~\ref{ln:assignment_function_body}) pairs every two derived inputs and assigns them to the respective reducers (line~\ref{ln:assign_each_pair_of_derived_input}). Each reducer of the last team is assigned using function $\mathit{Last\_Team(groupA[])}$, lines~\ref{ln:function_last_team},~\ref{ln:assign_last_team}.

Note that functions $2\_step\_odd\_q(lower, upper)$, $\mathit{Assignment(lower, mid, upper)}$, and $value\_b(lower, t, mid,upper)$ take two common parameters, namely $lower$ and $upper$ where $lower$ is the first derived input and $upper$ is the last derived input (\textit{i}.\textit{e}., $u^{th}$ group) at the time of the first call to functions, line~\ref{ln:call_43-method}. Once all-pairs of the derived inputs are assigned to reducers, line~\ref{ln:assign_each_pair_of_derived_input}, function $Assign\_input\_from\_B(Team[])$ assigns $i^{th}$ input of the set $B$ to all the $\big\lceil\frac{u}{2}\big\rceil$ reducers of $i^{th}$ team, lines~\ref{ln:function_assign_b},~\ref{ln:assign_b}. After that, Algorithm~\ref{alg:2stepmethodforoddq}A is invoked over inputs of the set $B$ to assign each pair of the remaining inputs of the set $B$ to reducers until every pair to the remaining inputs is assigned to reducers.

The algorithm correctness proves that every pair of inputs is assigned to reducers. Specifically, we prove that all those pairs of inputs, $\langle i,j\rangle$ and $\langle i^{\prime},j^{\prime}\rangle$, of the set $A$ are assigned to a team whose $i\neq i^{\prime}$ and $j\neq j^{\prime}$ (Claim~\ref{claim:algo_odd_q_claim_1}). Then that all the inputs of the set $A$ appear exactly once in each team (Claim~\ref{claim:algo_odd_q_claim_2}). We then prove that the set $B$ holds $x\leq y-1$ inputs, when $q=3$ (Claim~\ref{claim:algo_odd_q_claim_3}). At last we conclude in Theorem~\ref{th:algo_odd_q_theorem} that Algorithm~\ref{alg:2stepmethodforoddq}A assigns each pair of inputs to reducers.

Note that we are proving all the above mentioned claims for $q=3$; the cases for $q>3$ can be generalized trivially where we make $u=\big\lceil \frac{y}{q-\lceil q/2\rceil}\big\rceil$ derived inputs from $y$ inputs of the set $A$ (and assign in a manner that all the inputs of the $A$ are paired with all the remaining $m-1$ inputs).

\medskip\begin{claim}\label{claim:algo_odd_q_claim_1}
Pairs of inputs $\langle i, j\rangle$ and $\langle i^{\prime}, j^{\prime}\rangle$, where $i=i^{\prime}$ or $j=j^{\prime}$, of the set $A$ are assigned to different teams.
\end{claim}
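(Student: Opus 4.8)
The plan is to recast the claim in graph-theoretic language and then prove it by induction, reusing the recursive $1$-factorization structure that the algorithm inherits from the $q=2$ construction of Section~\ref{q2-sec}. Regard the $u$ derived inputs of set $A$ as the vertices of the complete graph $K_u$, and identify each reducer---which holds a pair $\langle i,j\rangle$ of derived inputs---with the edge $\{i,j\}$. Under this dictionary a team is a set of $\lceil u/2\rceil$ edges, and the claim says exactly that \emph{every team is a matching}: no two of its edges share an endpoint. Indeed, if two pairs $\langle i,j\rangle$ and $\langle i',j'\rangle$ with $i=i'$ or $j=j'$ were placed in a common team, the shared derived input would sit in two distinct reducers of that team, so two edges of the team would meet, contradicting the matching property. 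Hence it suffices to show that the teams built by $2\_step\_odd\_q$ together with $\mathit{Last\_Team}$ form a $1$-factorization of $K_u$.

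First I would set up an induction on $i$, where $u=2^i$ (recall $u$ is assumed to be a power of two), exactly mirroring the recursion already used in the proof of Lemma~\ref{lm:recursive_def_odd}. For the basis $u=2$ there is a single reducer holding $\langle 1,2\rangle$, trivially a matching. For the inductive step I would assume that, for $2^{i-1}$ derived inputs, every team generated by the recursive call is a matching on its half, and then split the $2^i$ inputs into a first half $G_1$ and a second half $G_2$ at the midpoint computed in line~\ref{ln:midvalue}. The teams of kind~II (the ones obtained by combining the $j$-th recursive team of $G_1$ with the $j$-th recursive team of $G_2$) are matchings because each constituent sub-team is a matching by the inductive hypothesis and $G_1\cap G_2=\emptyset$, so no endpoint can be shared across the two halves. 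The teams of kind~I (the rotation teams pairing $G_1$ with $G_2$) reduce to checking that, for each fixed team index $t$, the partner selected by $value\_b$ in lines~\ref{ln:b_value_for_upper_plus_one}--\ref{ln:b_value_for_greater_than_upper} realizes the cyclic shift $a\mapsto a+t \pmod{2^{i-1}}$ on $G_2$; since a cyclic shift is a bijection, the $G_2$-partners within a team are pairwise distinct while each $G_1$-input is used once, so the team is a perfect matching between $G_1$ and $G_2$.

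The step I expect to be the real obstacle is this kind-I analysis, because it is precisely where the floor and modulo expressions in the pseudocode must be reconciled with the clean ``rotate the second half'' picture. Concretely, I must verify two things at once: (i) the two-branch definition of $value\_b$, which splits on whether $a+t+mid$ exceeds $upper$, genuinely implements a single cyclic permutation of $G_2$, so that no second-half input repeats within a team; and (ii) the reducer-index expression $a-\lfloor\frac{a-1}{mid}\rfloor\cdot\frac{mid}{2}$ in line~\ref{ln:assign_each_pair_of_derived_input} sends the $\lceil u/2\rceil$ pairs of a team to \emph{distinct} reducers, so that ``two edges in the same team'' genuinely forces ``two edges in the same reducer.'' Once this modular bookkeeping is in place, the matching property of each team, and hence the claim, follows immediately; and the same $1$-factorization is exactly the structure that Claim~\ref{claim:algo_odd_q_claim_2} will reuse to conclude that every input of $A$ occurs once per team.
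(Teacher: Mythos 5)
Your proposal is correct, and it takes a recognizably different route from the paper's own argument. The paper proves the claim directly from the pseudocode formulas: it fixes the shared coordinate ($i=i'$ or $j=j'$), observes that for a fixed first endpoint $a$ the partner is $a+t+mid$ or $a+t$, so two distinct partners force two distinct rotation parameters $t_1\neq t_2$, and then notes that the team index $(u-2\cdot mid+1)+t$ is determined by $t$ (for fixed $mid$), so the two pairs land in different teams. This is a local, non-inductive case analysis; it tacitly assumes the two pairs arise in the same recursive call (``may have an identical value of $lower$ and $mid$'') and leaves implicit that calls at different recursion depths occupy disjoint team-index ranges. Your reformulation --- each team is a matching in $K_u$, proved by induction along the recursive $1$-factorization inherited from Section~\ref{q2-sec} --- makes exactly those implicit cases explicit: kind-II teams are matchings because the two halves are vertex-disjoint and the sub-teams are matchings by the inductive hypothesis, and kind-I teams are matchings because the rotation $t$ is a bijection of the second half. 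What your approach buys is a cleaner treatment of pairs generated at different recursion levels or in different branches, plus the perfect-matching property that Claim~\ref{claim:algo_odd_q_claim_2} then gets for free; what the paper's buys is brevity. The one thing to note is that you flag but do not discharge the verification that $value\_b$ (lines~\ref{ln:b_value_for_upper_plus_one}--\ref{ln:b_value_for_greater_than_upper}) really is a cyclic shift and that the reducer-index expression in line~\ref{ln:assign_each_pair_of_derived_input} separates the pairs of a team; the paper's proof asserts essentially the same facts without detailed verification either, so this is not a gap relative to the published argument, but a complete write-up should include that bookkeeping.
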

\begin{proof}
First, consider $i=i^{\prime}$ and $j\neq j^{\prime}$, where $\langle i, j\rangle$ and $\langle i^{\prime}, j^{\prime}\rangle$ must be assigned to two different teams. If $j\neq j^{\prime}$, then both the $j$ values may have an identical value of $\mathit{lower}$ and $\mathit{mid}$ but they must have two different values of $t$ (see lines~\ref{ln:b_value_for_upper_plus_one},~\ref{ln:b_value_for_greater_than_upper} of Algorithm~\ref{alg:2stepmethodforoddq}A), where $j=lower+t+mid$ or $j=lower+t$. Thus, for two different values of $j$ , we use two different values of $t$, say $t_1$ and $t_2$, that results in an assignment of $\langle i,j\rangle$ and $\langle i^{\prime},j^{\prime}\rangle$ to two different teams $t_1$ and $t_2$, (note that teams are also selected based on the value of $t$, $(y-2\cdot mid+1)+t$, see line~\ref{ln:assign_each_pair_of_derived_input} of Algorithm~\ref{alg:2stepmethodforoddq}A, where for $q=3$, we have $u=y$).
Suppose now that $i\neq i^{\prime}$ and $j=j^{\prime}$, where $\langle i, j\rangle$ and $\langle i^{\prime}, j^{\prime}\rangle$ must be assigned to two different teams. In this case, we also have two different values of $t$, and hence, two different $t$ values assign $\langle i,j\rangle$ and $\langle i^{\prime}, j^{\prime}\rangle$ to two different teams ($(y-2\cdot mid+1)+t$, line~\ref{ln:assign_each_pair_of_derived_input} of Algorithm~\ref{alg:2stepmethodforoddq}A).

Hence, it is clear that pairs $\langle i, j\rangle$ and $\langle i^{\prime}, j^{\prime}\rangle$, where $i\neq i^{\prime}$ and $j\neq j^{\prime}$, are assigned to a team.
\end{proof}

\medskip\begin{claim}\label{claim:algo_odd_q_claim_2}
All the inputs of the set $A$ appear exactly once in each team.
\end{claim}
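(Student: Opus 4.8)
The plan is to prove Claim~\ref{claim:algo_odd_q_claim_2} by induction on $i$, where $u=2^i$ is the number of derived inputs, mirroring the recursive construction used in the proof of Lemma~\ref{lm:recursive_def_odd}. For $q=3$ each group holds a single input, so the $u$ derived inputs are exactly the inputs of the set $A$, and it suffices to show that the $q=2$-style team construction of Lemma~\ref{lm:recursive_def_odd} places each derived input exactly once in every team.

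First I would settle the base case $i=1$: here $u=2$ and there is a single team consisting of one reducer holding both inputs, so each input appears exactly once. For the inductive step I would assume the team property for $u=2^{i-1}$, split the $2^i$ derived inputs into two groups $G_1,G_2$ of $2^{i-1}$ inputs each, and recursively form their teams. The $2^i-1$ teams of the larger instance come in the two kinds of Lemma~\ref{lm:recursive_def_odd}. For a team of kind II (the combined teams of step (c)), the team is the union of the $j$-th team of $G_1$ with the $j$-th team of $G_2$; by the induction hypothesis the former contributes each input of $G_1$ exactly once and the latter each input of $G_2$ exactly once, so the union contains each of the $2^i$ inputs exactly once. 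For a team of kind I (the rotation teams of step (d)), the $j$-th input of $G_1$ is placed in reducer $j$, hence each input of $G_1$ occurs exactly once, one per reducer; the inputs of $G_2$ are distributed over the same reducers by the cyclic shift of step (d), so I would verify that this shift is a bijection on $\{1,\ldots,2^{i-1}\}$, whence each input of $G_2$ also occupies exactly one reducer. Thus every team, of either kind, contains each of the $2^i$ inputs exactly once, and since for $q=3$ the derived inputs coincide with the inputs of $A$, the claim follows.

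I expect the only genuine work to lie in the kind I teams: one must confirm that the modular rotation is a permutation of the reducer indices, which simultaneously guarantees that no reducer receives two inputs of $G_2$ and that no input of $G_2$ is skipped. A shorter-looking alternative would combine Claim~\ref{claim:algo_odd_q_claim_1} with counting --- each team owns $2^{i-1}$ reducers carrying $2^{i-1}\cdot 2=u$ input-slots, and were these slots pairwise distinct they would necessarily exhaust all $u$ derived inputs --- but Claim~\ref{claim:algo_odd_q_claim_1} forbids repetition only for pairs that share a coordinate position, so ruling out an input that appears once as a first coordinate and once as a second coordinate within the same team would still force one back into the construction; for that reason the self-contained induction above is the cleaner route.
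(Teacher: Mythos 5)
Your proof is correct, but it takes a genuinely different route from the paper's. The paper argues by counting: a team consists of $\lceil y/2\rceil$ reducers, each holding one pair; it asserts that no two pairs placed in the same team can share an input, so the pairs within a team are pairwise disjoint; and $\lceil y/2\rceil$ pairwise disjoint pairs drawn from $y$ inputs necessarily use every input exactly once. You instead run a structural induction that follows the recursive construction itself: teams of kind II inherit the property from the two sub-instances via the induction hypothesis, and teams of kind I have it because each input of the first group sits in its own reducer while the rotation distributing the second group over the reducer indices is a bijection. The trade-off is that the paper's argument is shorter but leans entirely on the disjointness assertion, which --- as you correctly observe --- is not literally supplied by Claim~\ref{claim:algo_odd_q_claim_1}: that claim only separates pairs agreeing in the \emph{same} coordinate slot, and says nothing about an input occurring once as a first coordinate and once as a second coordinate within one team. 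Your induction closes that gap directly and, as a bonus, establishes exactly the ``team property'' that the proof of Lemma~\ref{lm:recursive_def_odd} already invokes in its inductive hypothesis, so it dovetails cleanly with the surrounding appendix.
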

\begin{proof}
There are the same number of pairs of inputs of the set $A$ and the number of reducers ($(y-1)\big\lceil\frac{y}{2}\big\rceil$) that can provide a solution to the \emph{A2A mapping schema problem} for the $y$ inputs of the set $A$. Recall that $(y-1)\big\lceil\frac{y}{2}\big\rceil$ reducers are arranged in the form of $(y-1)$ teams of $\big\lceil\frac{y}{2}\big\rceil$ reducers in each team, when $q=3$. Note that if there is a input pair $\langle i,j\rangle$ in team $t$, then the team $t$ cannot hold any pair that has either $i$ or $j$ in the remaining $\big\lceil\frac{y}{2}\big\rceil-1$ reducers. For the given $y$ inputs of the set $A$, there are at most $\big\lceil\frac{y}{2}\big\rceil$ disjoint pairs $\langle i_1, j_1\rangle$, $\langle i_2, j_2\rangle$, $\ldots$, $\langle i_{\lceil y/2\rceil}, j_{\lceil y/2\rceil}\rangle$ such that $i_1\neq i_2\neq \ldots\neq i_{\lceil y/2\rceil}\neq j_1\neq j_2\neq \ldots\neq j_{\lceil y/2\rceil}$. Hence, all $y$ inputs of the set $A$ are assigned to a team, where no input is assigned twice in a team.
\end{proof}

\medskip\begin{claim}\label{claim:algo_odd_q_claim_3}
When the reducer capacity $q=3$, the set $B$ holds at most $x\leq y-1$ inputs.
\end{claim}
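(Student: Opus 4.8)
The plan is to prove this claim by direct substitution into the definitions of $y$ and $x$ specialized to $q=3$, followed by an elementary comparison of floor and ceiling functions. Since $y = \big\lfloor \frac{q}{2} \big\rfloor (\big\lfloor \frac{2m}{q+1} \big\rfloor+1)$ and $x = m-y$ are fixed by the algorithm's variable declarations, the claim is really an arithmetic identity once $q$ is pinned to $3$, so no combinatorial argument is needed.

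First I would substitute $q=3$. Then $\big\lfloor \frac{q}{2}\big\rfloor = 1$ and $\big\lfloor \frac{2m}{q+1}\big\rfloor = \big\lfloor \frac{2m}{4}\big\rfloor = \big\lfloor \frac{m}{2}\big\rfloor$, so that
\[
y = 1\cdot\left(\left\lfloor \frac{m}{2}\right\rfloor + 1\right) = \left\lfloor \frac{m}{2}\right\rfloor + 1 .
\]
Using the identity $m - \big\lfloor \frac{m}{2}\big\rfloor = \big\lceil \frac{m}{2}\big\rceil$, this gives
\[
x = m - y = m - \left\lfloor \frac{m}{2}\right\rfloor - 1 = \left\lceil \frac{m}{2}\right\rceil - 1 .
\]

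Next I would compare $x$ against $y-1 = \big\lfloor \frac{m}{2}\big\rfloor$. The desired inequality $x\le y-1$ becomes $\big\lceil \frac{m}{2}\big\rceil - 1 \le \big\lfloor \frac{m}{2}\big\rfloor$, i.e. $\big\lceil \frac{m}{2}\big\rceil \le \big\lfloor \frac{m}{2}\big\rfloor + 1$, which holds for every integer $m$ because the ceiling and floor of $\frac{m}{2}$ differ by at most one. To make this fully transparent I would split on the parity of $m$: if $m$ is even both quantities equal $\frac{m}{2}$, so $x = \frac{m}{2}-1 < y-1$; and if $m$ is odd then $x = \frac{m-1}{2} = y-1$, so the bound is tight.

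The computation itself is completely routine and presents no genuine obstacle. The only point deserving a sentence of interpretation rather than proof is \emph{why} the bound $x\le y-1$ is the right target: for $q=3$ the number of groups is $u = \big\lceil \frac{y}{q-\lceil q/2\rceil}\big\rceil = y$, hence Step~4 produces $u-1 = y-1$ teams and Step~5 assigns the $i$-th input of $B$ to the $i$-th team, so $B$ must fit inside $y-1$ slots. The claim certifies exactly this, with equality reached precisely when $m$ is odd.
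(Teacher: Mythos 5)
Your proof is correct, but it takes a genuinely different route from the paper's. The paper argues from the structure of the construction: for $q=3$ each reducer has one unit of spare capacity after receiving its pair from $A$, each of the $y-1$ teams contains every input of $A$ exactly once, so each team can absorb exactly one input of $B$, and hence the construction has room for at most $y-1$ inputs in $B$. Your proof instead verifies the inequality arithmetically from the algorithm's own parameter choices: substituting $q=3$ gives $y=\big\lfloor\frac{m}{2}\big\rfloor+1$ and $x=\big\lceil\frac{m}{2}\big\rceil-1$, and $\big\lceil\frac{m}{2}\big\rceil\le\big\lfloor\frac{m}{2}\big\rfloor+1$ closes the argument (consistent with the paper's $m=15$ example, where $x=7=y-1$). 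The two arguments are complementary rather than redundant: the paper's proof explains \emph{why} $y-1$ is the capacity of the team structure but never actually checks that the declared formula for $y$ makes the leftover $x=m-y$ fit within that capacity, whereas your computation supplies exactly that missing verification (and identifies when the bound is tight, namely for odd $m$), while relying on your final interpretive paragraph --- $u=y$ when $q=3$, so there are $y-1$ team slots --- to connect the arithmetic back to what the construction needs. Ideally the two halves would be combined, but as a proof of the stated inequality yours is sound and, if anything, more rigorous on the point the correctness theorem actually uses.
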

\begin{proof}
Since a pair of inputs of the set $A$ requires at most $q-1$ capacity of a reducer and each team holds all the inputs of the set $A$, an input from the set $B$ can be assigned to all the reducers of the team. In this manner, all the inputs of the set $A$ are also paired with an input of the set $B$. Since there are $y-1$ teams and each team is assigned an input of the set $B$, the set $B$ can hold at most $x\leq y-1$ inputs.
\end{proof}

\medskip\begin{theorem}\label{th:algo_odd_q_theorem}
Algorithm~\ref{alg:2stepmethodforoddq}A assigns each pair of the given inputs to at least one reducer in common.
\end{theorem}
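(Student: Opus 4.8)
The plan is to establish the covering property by splitting the set of all pairs of original inputs according to where their two endpoints land, discharging each resulting category with the claims already established, and closing with an induction on the recursive re-invocation on $B$. By the Aside, after bin-packing I may treat every bin as a unit-sized input and $q$ as an odd integer, so that a reducer holds exactly $q$ units. Writing $A$ for the first set (grouped into the $u$ derived inputs, each of size $\lceil\frac{q-1}{2}\rceil$) and $B$ for the second, every pair of inputs falls into exactly one of: both endpoints in $A$, one endpoint in $A$ and the other in $B$, or both endpoints in $B$.

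For pairs with both endpoints in $A$ I would argue two sub-cases. If the two inputs lie in the same derived group, they are assigned together as a single unit to every reducer receiving that group, so they trivially co-occur. If they lie in different groups, the two groups are distinct derived inputs, and I invoke Lemma~\ref{lm:recursive_def_odd} together with Claims~\ref{claim:algo_odd_q_claim_1} and~\ref{claim:algo_odd_q_claim_2}: the team construction assigns every pair of derived inputs to a common reducer, so the two original inputs meet there. Here I would also record the capacity bookkeeping that makes the reduction to the $q=3$ pattern legitimate, namely that a reducer carries two groups and one element of $B$ of total size $\lceil\frac{q-1}{2}\rceil + \lceil\frac{q-1}{2}\rceil + 1 = q$ for odd $q$, so nothing overflows.

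For pairs with one endpoint $a\in A$ and one endpoint $b\in B$, the argument rests on Claim~\ref{claim:algo_odd_q_claim_2}: each team contains every input of $A$ exactly once. Since the $i$-th input of $B$ is placed in all reducers of the $i$-th team (step 5 of the algorithm), that input shares a reducer with every input of $A$, in particular with $a$. What makes this assignment well defined is Claim~\ref{claim:algo_odd_q_claim_3}, which bounds $x=|B|$ by the number of teams, so the rule ``$i$-th $B$-input to $i$-th team'' never runs out of teams; hence every $A$--$B$ pair is covered.

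The remaining and hardest case is pairs with both endpoints in $B$, which the first pass does not cover and which are handled by the recursive re-invocation of the algorithm on $B$. The main obstacle is to make this recursion rigorous: I would set up an induction on $x=|B|$, with base case the point where $B$ is small enough to be solved directly (e.g.\ it fits within a single reducer, as in the $q=3$ terminal step), and inductive step the observation that the recursive call operates on a strictly smaller instance whose internal pairs are all covered by the inductive hypothesis. Two points demand care and constitute the real work: verifying that $|B|$ strictly decreases at each level so the recursion terminates, and confirming that the general odd-$q$ reduction truly behaves like the $q=3$ case at every level (the claims are phrased for $q=3$), which is precisely where Claim~\ref{claim:algo_odd_q_claim_3} and the capacity identity above must be re-checked for arbitrary odd $q$. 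Once all three categories are covered, I conclude that every pair of original inputs meets in at least one common reducer, which proves the theorem.
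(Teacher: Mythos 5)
Your proposal is correct and follows essentially the same route as the paper's proof: both cover the $A$--$A$ pairs via the team construction (Lemma~\ref{lm:recursive_def_odd} and Claims~\ref{claim:algo_odd_q_claim_1},~\ref{claim:algo_odd_q_claim_2}), the $A$--$B$ pairs via the rule assigning the $i$-th input of $B$ to every reducer of the $i$-th team (justified by Claims~\ref{claim:algo_odd_q_claim_2} and~\ref{claim:algo_odd_q_claim_3}), and the $B$--$B$ pairs by the recursive re-invocation on $B$. Your version is somewhat more explicit than the paper's --- in particular the capacity identity $2\lceil\frac{q-1}{2}\rceil+1=q$ and the termination of the recursion are stated rather than left implicit --- but the underlying argument is the same.
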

\begin{proof}
We have $(y-1)\big\lceil\frac{y}{2}\big\rceil$ pairs of inputs of the set $A$ of size $q-1$, and there are the same number of reducers; hence, each reducer can hold one input pair. Further, the remaining capacity of all the reducers of each team can be used to assign an input of $B$. Hence, all the inputs of $A$ are paired with every other input and every input of $B$ (as we proved in Claims~\ref{claim:algo_odd_q_claim_2} and~\ref{claim:algo_odd_q_claim_3}). Following the fact that the inputs of the set $A$ are paired with all the $m$ inputs, the inputs of the set $B$ is also paired by following a similar procedure on them. Thus, Algorithm~\ref{alg:2stepmethodforoddq}A assigns each pair of the given $m$ inputs to at least one reducer in common.
\end{proof}

\medskip
\section{Pseudocode and Correctness of Algorithm 1B}
\label{app:algo_correctness_2}

\LinesNotNumbered
\setcounter{algocf}{0}
\begin{algorithm}[h]
\textbf{Inputs:} $m$: the number of bins obtained after placing all the given $m^{\prime}$ inputs (of size $\leq \frac{q}{k}$, $k\geq 4$ is an even number) to bins each of size $\frac{q}{k}$,

$q$: the reducer capacity.

{\bf Variables:}

$\mathit{Team}[i,j]:$ represents teams of reducers, where index $i$ indicates $i^{th}$ team and index $j$ indicates $j^{th}$ reducer in $i^{th}$ team. Consider $u = \frac{2m}{q}$. There are $u-1$ teams of $\big\lceil \frac{u}{2}\big\rceil$ reducers in each team.

$groupA[]:$ represents disjoint groups of inputs of the set $A$, where $groupA[i]$ indicates $i^{th}$ group of $\big\lceil\frac{q}{2}\big\rceil$ inputs of the set $A$.

\nl{\bf Function $create\_group(m)$} \nllabel{ln:function_create_group_even}
\Begin{

\nl \lFor{$i\leftarrow 1$ \KwTo $u$}{$groupA[i]\leftarrow \langle i, i+1\ldots, i+\frac{q}{2}-1\rangle, i\leftarrow i+\frac{q}{2}$ \nllabel{ln:make_group_even1}}

\nl $2\_step\_even\_q(1, u)$, $\mathit{Last\_Team(1, \big\lceil \frac{u-1}{2}\big\rceil, u)}$ \nllabel{ln:call_43-method_even}}

\nl{\bf Function $2\_step\_even\_q(lower, upper)$} \nllabel{ln:function_2step_method_even}
\Begin{

\nl \lIf{$\big\lfloor \frac{upper-lower}{2}\big\rfloor < 1$}{\KwRet}

\nl \Else{

\nl $mid\leftarrow \big\lceil\frac{upper-lower}{2}\big\rceil$, $\mathit{Assignment(lower, mid, upper)}$ \nllabel{ln:call_recursion_even}

\nl $2\_step\_even\_q(lower, mid)$, $2\_step\_even\_q(mid+1, upper)$ \nllabel{ln:first_second_divide_even}}}

\caption{{\textbf{Part B}}}
\label{alg:2stepmethodforevenq}
\end{algorithm}

We show that every pair of inputs is assigned to reducers. Specifically, Algorithm~\ref{alg:2stepmethodforoddq}B satisfies two claims, as follows:

\medskip\begin{claim}\label{claim:algo_even_q_claim_1}
Pairs of derived inputs $\langle i, j\rangle$ and $\langle i^{\prime}, j^{\prime}\rangle$, where $i\neq i^{\prime}$ or $j\neq j^{\prime}$, are assigned to a team.
\end{claim}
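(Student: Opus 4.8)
The plan is to prove Claim~\ref{claim:algo_even_q_claim_1} by induction on $i$, where $u=2^{i}$ denotes the number of \emph{derived inputs} (the groups of size $\big\lceil\frac{q}{2}\big\rceil$ produced by $create\_group$) on which the recursion $2\_step\_even\_q$ operates. Since Algorithm~\ref{alg:2stepmethodforevenq} builds its teams by exactly the recursive ``mix-up'' used for $q=2$ in Section~\ref{q2-sec}, the statement is the even-$q$ analogue of the team property already isolated in Lemma~\ref{lm:recursive_def_odd}, so the induction skeleton can be lifted directly from that lemma's proof. First I would recast the target in the contrapositive form used for the odd case in Claim~\ref{claim:algo_odd_q_claim_1}: it suffices to show that whenever two assigned pairs $\langle i,j\rangle$ and $\langle i',j'\rangle$ share a coordinate (that is, $i=i'$ or $j=j'$) they are placed in \emph{different} teams. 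Equivalently, any two pairs occupying a common team are coordinate-disjoint, so each derived input occurs at most once per team, and the two families of teams together cover every unordered pair of derived inputs.

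For the base case $i=1$ there is a single pair of derived inputs assigned to one reducer forming one team, and the statement holds trivially. For the inductive step I would split the $u=2^{i}$ derived inputs into two halves of $2^{i-1}$ each and appeal to the construction's two families of teams, exactly as illustrated for $m=8$ in Figure~\ref{fig:team for 8 inputs}. The \emph{teams of kind II} are formed by fusing the $t$-th team of the recursive solution on the first half with the $t$-th team on the second half; the inductive hypothesis supplies coordinate-disjointness inside each half, and because the two halves range over disjoint index sets, the fused team remains coordinate-disjoint and together these teams realize all pairs internal to either half. The \emph{teams of kind I} pair each first-half input with a cyclically shifted second-half input (the rotation $\{(g_1,h_1),\dots\}$, $\{(g_1,h_2),\dots,(g_{m/2},h_1)\}$, etc.\ of Section~\ref{q2-sec}); here I would verify that within a single kind-I team the partner map is a bijection between the halves, so no index repeats, and that as the rotation parameter ranges over its $2^{i-1}$ values every cross-pair is realized exactly once.

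From these two facts the count follows: there are $2^{i-1}-1$ teams of kind II and $2^{i-1}$ teams of kind I, totalling $2^{i}-1$ teams of $2^{i-1}$ reducers each, matching Lemma~\ref{lm:recursive_def_odd}, and every pair of the $2^{i}$ derived inputs appears in one of them with no coordinate collision inside any team. The main obstacle I anticipate is the kind-I rotation argument: one must check that the cyclic shift used for even $u$ runs through all second-half indices so that the $2^{i-1}$ kind-I teams jointly cover all $(2^{i-1})^{2}$ cross-pairs while each individual team stays a perfect matching between the halves. This is the even-$q$ counterpart of the modulo bookkeeping carried out in the proof of Claim~\ref{claim:algo_odd_q_claim_1}, and once it is pinned down the remaining steps are routine. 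Finally, I would remark that when $u$ is not a power of two the dummy inputs introduced in the grouping step restore the power-of-two hypothesis, so no generality is lost.
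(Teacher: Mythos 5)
Your proposal is correct, but it is worth noting that the paper does not actually prove this claim at all: the text following Claims~\ref{claim:algo_even_q_claim_1} and~\ref{claim:algo_even_q_claim_2} explicitly says they are not proved and merely asserts that Claim~\ref{claim:algo_even_q_claim_1} ``follows'' Claim~\ref{claim:algo_odd_q_claim_1}, whose own proof is carried out by tracking the shift parameter $t$ and the $\mathit{mid}$ bookkeeping in the pseudocode of Algorithm~\ref{alg:2stepmethodforoddq}A. You instead give a self-contained induction on $i$ with $u=2^{i}$, structured exactly like the proof of Lemma~\ref{lm:recursive_def_odd} and the $q=2$ construction of Section~\ref{q2-sec}: teams of kind II inherit coordinate-disjointness from the inductive hypothesis on each half (the halves having disjoint index sets), and teams of kind I are perfect matchings between the halves whose cyclic rotation covers each cross-pair exactly once, yielding the count $2^{i-1}+(2^{i-1}-1)=2^{i}-1$ teams of $2^{i-1}$ reducers. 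This is a genuinely different and more transparent route than the paper's implicit one: the paper's argument would require re-reading the line-level analysis of the odd-$q$ pseudocode and arguing it transfers, whereas your induction exposes the structural reason (disjoint halves plus a cyclic Latin-square of cross-pairs) and simultaneously delivers both halves of the intended statement --- coverage of every pair and at-most-one occurrence of each derived input per team, i.e.\ Claim~\ref{claim:algo_even_q_claim_2} as well. You also sensibly resolve the wording discrepancy between Claim~\ref{claim:algo_odd_q_claim_1} (pairs sharing a coordinate go to \emph{different} teams) and the statement here by passing to the contrapositive, and your remark about dummy inputs restoring the power-of-two hypothesis matches the paper's footnote for the odd case. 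The one step you flag as an obstacle --- that the rotation $h_{j+t \bmod 2^{i-1}}$ sweeps all second-half partners as $t$ varies --- is indeed the only computation required, and it is routine.
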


\medskip\begin{claim}\label{claim:algo_even_q_claim_2}
All the given $m$ inputs appear exactly once in each team.
\end{claim}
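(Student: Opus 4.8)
The plan is to reduce this claim to the team property of the $q=2$ recursive algorithm of Section~\ref{q2-sec}, applied to the \emph{derived inputs} (the groups) instead of to individual bins. Recall that before any team is built, Algorithm~\ref{alg:2stepmethodforevenq}B has already packed the original inputs into $m$ unit-sized bins and, in lines~\ref{ln:function_create_group_even}--\ref{ln:make_group_even1}, partitioned these $m$ bins into $u=\frac{2m}{q}$ disjoint groups $groupA[1],\ldots,groupA[u]$, each holding exactly $\frac{q}{2}$ bins. The first thing I would record is that these groups partition the whole input set: they are pairwise disjoint and $u\cdot\frac{q}{2}=\frac{2m}{q}\cdot\frac{q}{2}=m$, so every bin lies in exactly one group. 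Hence it suffices to show that each derived input occurs exactly once in each team; since a team is then the disjoint union of all $u$ groups, it contains all $m$ bins, each exactly once, which is the claim.

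First I would note that, once each group is viewed as an atomic unit-sized input and a reducer (of capacity $q$) is viewed as holding two such groups (using $2\cdot\frac{q}{2}=q$ capacity), the routine $2\_step\_even\_q$ is literally the recursive ``teams of kind~I / teams of kind~II'' construction of Section~\ref{q2-sec} run on the $u$ derived inputs. I can therefore appeal to the theorem stated there, which guarantees that the derived inputs are arranged into $u-1$ teams of $\lceil u/2\rceil$ reducers with every derived input appearing exactly once in every team and all derived inputs appearing. Composing this with the partition observation gives the claim directly.

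The hard part will be re-establishing that the recursive mixing preserves the invariant ``every derived input appears exactly once per team, and all appear,'' since the even-$k$ grouping has bins of size $\frac{q}{2}$ rather than the $\lceil\frac{q-1}{2}\rceil$ treated in Lemma~\ref{lm:recursive_def_odd}. I would prove it by induction on $i$ with $u=2^i$, exactly paralleling Lemma~\ref{lm:recursive_def_odd}. The base case $u=2$ is a single team of one reducer holding both groups. For the step, split the $2^i$ groups into two halves of $2^{i-1}$; by hypothesis each half yields $2^{i-1}-1$ teams with the invariant. Concatenating the $j$-th team of each half gives the $2^{i-1}-1$ teams of kind~II: since the halves share no group and each half-team already contains each of its groups once, each combined team contains all $2^i$ groups exactly once. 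The $2^{i-1}$ teams of kind~I pair the $j$-th group of the first half with a partner from the second half rotated modulo $2^{i-1}$, and the rotation ensures that within each such team no group repeats and all are used. This produces $2^{i-1}-1+2^{i-1}=2^i-1$ teams, all satisfying the invariant. Finally, when $u$ is not a power of two I would pad with dummy groups as in Step~3 of Algorithm~\ref{alg:2stepmethodforoddq}, observing that dummies never change the occurrence count of a genuine input in any team; this is the companion fact to Claim~\ref{claim:algo_even_q_claim_1}, whose ``every pair meets'' guarantee I would treat separately.
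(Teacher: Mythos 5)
Your proof is correct, but it takes a genuinely different route from the paper's. The paper does not actually prove Claim~\ref{claim:algo_even_q_claim_2}; it merely asserts that it ``follows'' Claim~\ref{claim:algo_odd_q_claim_2}, whose own proof is a counting argument: the number of pairs of (derived) inputs equals the number of reducers, overlapping pairs land in different teams by Claim~\ref{claim:algo_odd_q_claim_1}, and since a team of $\lceil u/2\rceil$ reducers can hold at most $\lceil u/2\rceil$ disjoint pairs, each team must consist of exactly $u$ distinct derived inputs. You instead prove the team invariant constructively, by induction on the recursive kind-I/kind-II splitting, mirroring the proof of Lemma~\ref{lm:recursive_def_odd} rather than the pigeonhole argument of Claim~\ref{claim:algo_odd_q_claim_2}. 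Both are valid; the counting argument is shorter but leans on Claim~\ref{claim:algo_odd_q_claim_1} as a prerequisite, whereas your induction is self-contained and makes the per-level preservation of the invariant explicit. You also supply a step the paper elides: in the even case the recursion operates on $u=\frac{2m}{q}$ derived inputs (groups of $\frac{q}{2}$ bins), not on the $m$ bins themselves, so one must observe that the groups partition the bin set in order to lift ``each group appears exactly once per team'' to ``each of the $m$ inputs appears exactly once per team.'' The paper's remark that ``the set $A$ is considered as a set of $m$ inputs'' glosses over exactly this point, and your partition observation ($u\cdot\frac{q}{2}=m$, groups pairwise disjoint) is the right way to close it. Your handling of non-power-of-two $u$ by dummy padding is likewise consistent with what the paper does for the odd case.
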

We do not prove Claims~\ref{claim:algo_even_q_claim_1} and~\ref{claim:algo_even_q_claim_2}. Note that Claim~\ref{claim:algo_even_q_claim_1} follows Claims~\ref{claim:algo_odd_q_claim_1}, where Claims~\ref{claim:algo_odd_q_claim_1} shows that all the pairs of inputs of the set $A$ (in case $q=3$) and all the pairs of derived inputs of the set $A$ (in case $q>3$) $\langle i,j\rangle$ and $\langle i^{\prime},j^{\prime}\rangle$, where $i\neq i^{\prime}$ or $j\neq j^{\prime}$ are assigned to a team. Also, Claim~\ref{claim:algo_even_q_claim_2} follows Claim~\ref{claim:algo_odd_q_claim_2}, where Claim~\ref{claim:algo_odd_q_claim_2} shows that all the inputs of the set $A$ appear in each team only once, while in case of Algorithm~\ref{alg:2stepmethodforoddq}B the set $A$ is considered as a set of $m$ inputs.

\medskip\begin{theorem}\label{th:algo_even_q_theorem}
Algorithm~\ref{alg:2stepmethodforoddq}B assigns each pair of the given inputs to at least one reducer in common.
\end{theorem}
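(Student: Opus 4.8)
The plan is to mirror the proof of Theorem~\ref{th:algo_odd_q_theorem}, exploiting that the even-$k$ case reduces cleanly to the $q=2$ construction of Section~\ref{q2-sec}. First I would fix the reduction. After the bin-packing step we have $m$ bins, each treated as a unit-sized input, and (in the shifted notation of the \textbf{Aside}) a reducer can hold $q$ of them, with $q$ even. Grouping the $m$ bins into $u=\frac{2m}{q}$ \emph{derived inputs} of $\big\lceil\frac{q}{2}\big\rceil=\frac{q}{2}$ bins each, two derived inputs exactly saturate a reducer of capacity $q$. Thus the assignment problem on the $u$ derived inputs is literally the $q=2$ all-pairs problem, and the functions $2\_step\_even\_q$ and $Last\_Team$ of Algorithm~\ref{alg:2stepmethodforevenq} are precisely the recursive team algorithm of Section~\ref{q2-sec} applied to these $u$ derived inputs. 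The key arithmetic point to record is that $\frac{q}{2}+\frac{q}{2}=q$, so combining any two derived inputs never overflows a reducer.

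Next I would invoke the two claims. By Claim~\ref{claim:algo_even_q_claim_1}, every two distinct derived inputs are placed together in some reducer; by Claim~\ref{claim:algo_even_q_claim_2}, each derived input occurs exactly once per team, so the $u-1$ teams of $\big\lceil\frac{u}{2}\big\rceil$ reducers realize all derived-input pairs without waste. Since these two claims are stated to follow from Claims~\ref{claim:algo_odd_q_claim_1} and~\ref{claim:algo_odd_q_claim_2} (equivalently from Lemma~\ref{lm:recursive_def_odd} specialized to the team structure), I would treat them as given and conclude that every pair of derived inputs meets in at least one reducer.

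It then remains to lift ``derived inputs meet'' to ``bins meet,'' which splits into two cases. For two bins lying in \emph{different} derived inputs, the reducer that holds both groups contains all $\frac{q}{2}+\frac{q}{2}=q$ bins of the two groups, so the pair of bins meets there. For two bins lying in the \emph{same} derived input, it suffices that the group is placed in at least one reducer---which holds because every group is paired with some other group---and then all $\frac{q}{2}$ bins of that group (in particular the given pair) occupy that reducer simultaneously. Hence every pair of bins meets; and since two original inputs share a reducer whenever their bins do (or when they lie in a common bin), every pair of the given inputs is assigned to a common reducer, which is the claim of the theorem.

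The main obstacle is not located in this theorem but in the results it rests on: the genuine work is Claim~\ref{claim:algo_even_q_claim_1} together with Lemma~\ref{lm:recursive_def_odd}, namely verifying that the recursive ``teams of kind I and II'' construction covers every pair of derived inputs. Given those, the theorem is a short assembly, and the only points requiring care are the bookkeeping that no reducer overflows when two derived inputs are combined and the observation that within-group bin pairs are handled for free whenever a group is used at all.
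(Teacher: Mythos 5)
Your proposal is correct and follows essentially the same route as the paper's (very terse) proof: both reduce the even case to the $q=2$ team construction on the $u=\frac{2m}{q}$ derived inputs of $\frac{q}{2}$ bins each and lean on Claims~\ref{claim:algo_even_q_claim_1} and~\ref{claim:algo_even_q_claim_2} (which the paper itself leaves unproved, deferring to the odd-case claims). Your explicit lifting step---treating within-group and cross-group bin pairs separately and then passing from bins to original inputs---is a detail the paper's one-sentence argument leaves implicit, and including it makes the proof more complete.
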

\begin{proof}
Since there are the same number of pairs of the derived inputs and the number of reducers, it is possible to assign one pair to each reducer that results in all-pairs of the $m$ inputs.
\end{proof}

\medskip
\section{Correctness of Algorithm 2}
\label{app:algo_correctness_3}
The correctness shows that all-pairs of inputs are assigned to reducers. Specifically, we show that each pair of inputs of the set $A$ is assigned to $p(p+1)$ reducers that use only $p$ capacity of each reducer (Claims~\ref{claim:first_ext_AU_claim_1} and~\ref{claim:first_ext_AU_claim_2}). Then, we prove that the set $B$ holds $x\leq m-p^2$ inputs. At last we conclude that Algorithm 2 assigns each pair of inputs to reducers.

\medskip\begin{claim}\label{claim:first_ext_AU_claim_1}
All the inputs of the set $A$ are assigned to $p(p+1)$ reducers, and the assignment of the inputs of the set $A$ uses only $p$ capacity of each reducer.
\end{claim}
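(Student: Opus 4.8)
The plan is to derive Claim~\ref{claim:first_ext_AU_claim_1} directly from the structural guarantees of the \emph{AU method} recalled in Section~\ref{subsec:AU method}, since Steps~2 and~3 of Algorithm~2 are nothing more than an instance of that method run on the $y=p^2$ unit-size inputs of the set $A$, with the prime $p$ playing the role of the capacity. First I would recall that the \emph{AU method} arranges the $p^2$ subsets in a $p\times p$ square $S_{i,j}$ ($0\le i,j<p$) and organizes exactly $p(p+1)$ reducers into $p+1$ teams of $p$ reducers each. Because Step~3 assigns the $j$th bin of the $i$th team of bins to the $j$th reducer of the $i$th team of reducers, each of these $p(p+1)$ reducers receives exactly one bin; this immediately yields the reducer count of $p(p+1)$ claimed for the inputs of $A$.

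The remaining task is to show that every such bin carries exactly $p$ inputs, so that --- since each input of $A$ has unit size --- each reducer uses exactly $p$ units of capacity. I would argue this team by team. For the team indexed $p$, the \emph{AU method} places one entire column of the $p\times p$ square into each reducer, and a column contains exactly $p$ subsets, hence $p$ inputs. For a team $t$ with $0\le t<p$, reducer $r$ receives the subsets $\{S_{i,j}:(i+tj)\bmod p=r\}$. The key counting step is that for each fixed column index $j\in\{0,\ldots,p-1\}$ there is exactly one row index $i$ satisfying $(i+tj)\equiv r\pmod p$, namely $i=(r-tj)\bmod p$; consequently reducer $r$ of team $t$ receives exactly $p$ subsets, one per value of $j$. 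As $m=p^2$ makes every subset a single unit-size input, each reducer thus holds exactly $p$ inputs and uses exactly $p$ of its $q=p+l$ capacity.

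I would close by noting the consequence that matters for the rest of the algorithm: each of the $p(p+1)$ reducers is filled to only $p$ of its capacity $q=p+l$, leaving exactly $l=q-p$ free slots, which is precisely the room Step~5 needs to drop in one group of inputs from the set $B$. The main obstacle is the uniform-partition counting for the modular teams --- verifying that the congruence $(i+tj)\equiv r\pmod p$ distributes the $p^2$ subsets evenly, $p$ per reducer, across the $p$ reducers of each team. This is where primality of $p$ is invoked in the \emph{AU method}; here one only needs the weaker fact that the affine map $i\mapsto i+tj$ is a bijection modulo $p$ for each fixed $j$, so the step is routine once stated carefully, and no subtlety beyond bookkeeping remains.
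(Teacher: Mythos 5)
Your proof is correct and follows the same route as the paper: both derive the claim from the structure of the \emph{AU method} applied to the $p^2$ unit-size inputs of $A$, organized into $p+1$ teams of $p$ reducers. The paper in fact gives no detailed argument --- it states only that the claim ``follows the correctness of the \emph{AU method}'' --- so your explicit count (each reducer receives exactly one bin, and each bin holds exactly $p$ subsets because for each column $j$ the congruence $(i+tj)\equiv r \pmod p$ determines a unique row $i$) supplies precisely the bookkeeping the paper leaves implicit, including the correct observation that only bijectivity of translation modulo $p$, not primality, is needed for this particular claim.
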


\medskip\begin{claim}\label{claim:first_ext_AU_claim_2}
All the inputs of the set $A$ appear in each team exactly once.
\end{claim}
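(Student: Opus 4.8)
The plan is to prove the claim directly from the assignment rule of the \emph{AU method} as it is applied to the $p^2$ inputs of set $A$. Recall that these inputs are arranged in a $p\times p$ square with cells $S_{i,j}$, $0\le i,j<p$, and that the method produces $p+1$ teams of $p$ reducers each: for $t=0,1,\ldots,p-1$ the input $S_{i,j}$ is sent to reducer $r=(i+tj)\bmod p$ of team $t$, while the distinguished team $p$ assigns the whole column $j$ to reducer $j$. I would argue separately for these two kinds of teams, but in both cases reduce the claim to the observation that, within a fixed team, the assignment is a function whose codomain is the set of $p$ reducers and which is injective on each column.

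First I would dispose of team $p$. Here reducer $j$ receives exactly the $p$ cells $S_{0,j},S_{1,j},\ldots,S_{p-1,j}$ of column $j$, and since the $p$ columns partition the $p^2$ cells of the square, every input lands in exactly one reducer of this team. Next I would handle a modulo-team $t\in\{0,\ldots,p-1\}$. For a fixed $t$ the rule $S_{i,j}\mapsto(i+tj)\bmod p$ is a well-defined function into $\{0,\ldots,p-1\}$, so every input is placed in exactly one reducer of team $t$; this already yields both ``at most once'' and ``at least once,'' i.e.\ exactly once. To also confirm that the assignment respects the bin capacity (each reducer holds exactly $p$ inputs, matching the bin size $p$ used in Algorithm~2), I would fix a reducer index $r$ and a column $j$ and note that the congruence $i\equiv r-tj \pmod p$ has a unique solution $i\in\{0,\ldots,p-1\}$. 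Hence reducer $r$ of team $t$ receives exactly one cell from each of the $p$ columns, i.e.\ exactly $p$ cells; summing over the $p$ reducers recovers all $p^2$ inputs, so the team's reducers partition set $A$.

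I expect no serious obstacle here, since the statement is essentially a counting/bijection fact about the modular assignment. The only points requiring care are to treat the distinguished column-team $p$ on its own footing rather than through the modular formula, and to keep the two readings of the claim separate: ``each input occurs exactly once per team,'' which is immediate from the assignment being a function onto the set of reducers, versus the dual ``each reducer is filled to capacity $p$,'' which follows from the per-column uniqueness of $i$. Note that, unlike the pairing guarantee of the \emph{AU method}, this claim does \emph{not} require $p$ to be prime, because the congruence $i\equiv r-tj\pmod p$ is uniquely solvable for any modulus; primality is invoked only elsewhere, to ensure that every two inputs share a common team. This team property is exactly what the next step of Algorithm~2 relies upon, where a single input of set $B$ is added to all $p$ reducers of one team and thereby meets each input of $A$ there precisely once.
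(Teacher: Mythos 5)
Your proof is correct. It is worth noting that the paper itself does not actually prove this claim: in Appendix~D it explicitly declines to prove Claims~9 and~10 and simply asserts that they ``follow the correctness of the \emph{AU method}.'' What you have written is the argument that assertion is implicitly leaning on, made explicit. Your decomposition into the distinguished column-team (where the columns trivially partition the $p\times p$ square) and the modulo-teams (where $S_{i,j}\mapsto(i+tj)\bmod p$ being a well-defined function already gives ``exactly one reducer per team,'' and the unique solvability of $i\equiv r-tj\pmod p$ per column gives that each reducer is filled to exactly $p$ cells) is exactly the right way to verify the team property, and your remark that primality of $p$ is not needed for this particular claim --- only for the complementary pairing guarantee, which requires inverting $j-j'$ modulo $p$ --- is a correct and useful clarification that the paper never makes. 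The only caveat is one of framing rather than substance: the paper states the \emph{AU method} for $m=q^2$ inputs arranged as subsets $S_{i,j}$ each of size $m/q^2$, so in the general statement a ``cell'' is a subset of inputs rather than a single input; since Algorithm~2 applies the method with $y=p^2$ inputs and cells of size one, your identification of cells with inputs is legitimate here, but you should say so explicitly so the claim as stated (about \emph{inputs} of $A$) follows from your cell-level bijection argument.
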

We are not proving Claims ~\ref{claim:first_ext_AU_claim_1} and~\ref{claim:first_ext_AU_claim_2} here. Claims~\ref{claim:first_ext_AU_claim_1} and~\ref{claim:first_ext_AU_claim_2} follow the correctness of the \emph{AU method}; hence, all the inputs of the set $A$ are placed to $p+1$ teams of $p$ bins (each of size $q$) in each team, and the assignment of each such bin only uses $p$ capacity of each reducer. Further two bins cannot be assigned to a reducer because $2\times p>q$. Claim~\ref{claim:first_ext_AU_claim_2} also follows the correctness of the \emph{AU method}, and hence, all the inputs of the set $A$ appear only once in each team.

\medskip\begin{claim}\label{claim:first_ext_AU_claim_3}
When the reducer capacity is $q$, the set $B$ holds $x\leq m-p^2$ inputs, where $p$ is the nearest prime number to $q$.
\end{claim}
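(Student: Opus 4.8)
The plan is to bound $|B|$ by a capacity-and-team-counting argument that parallels the proof of Claim~\ref{claim:algo_odd_q_claim_3} for Algorithm~\ref{alg:2stepmethodforoddq}A. Recall from Step~3 of Algorithm~2 that the $p(p+1)$ reducers are organized as $p+1$ teams of $p$ reducers each, with the inputs of $A$ assigned by the \emph{AU method}. The quantity to bound is the number of inputs of $B$ that these teams can absorb while still guaranteeing that every input of $A$ meets every input of $B$; I will show that this number is exactly $m-p^2$.

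First I would invoke Claim~\ref{claim:first_ext_AU_claim_1} to record that the assignment of $A$ consumes only $p$ of the $q$ units of capacity in each reducer, so that every reducer retains $q-p=l$ free units. Next I would invoke Claim~\ref{claim:first_ext_AU_claim_2}, which states that each of the $p+1$ teams contains every input of $A$ exactly once. The decisive observation is then that a single group of $l=q-p$ inputs drawn from $B$ (as formed in Step~4) can be copied into all $p$ reducers of one team: this exactly fills the leftover $l$ units, and because the team already holds every input of $A$, the whole group becomes paired with all of $A$. Since there are $p+1$ teams and Step~5 assigns one group to each team, the teams accommodate at most $p+1$ groups, i.e.\ at most $(p+1)\cdot l = l(p+1)$ inputs of $B$; equivalently, the grouping of Step~4 produces at most $\big\lceil\frac{x}{l}\big\rceil\le p+1$ groups precisely when $x\le l(p+1)$.

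It then remains to reconcile this capacity count with the stated form of the bound, which is the step I expect to require the most care since the substance is already carried by Claims~\ref{claim:first_ext_AU_claim_1} and~\ref{claim:first_ext_AU_claim_2}. Using the defining relation of this case, $m = p^2 + p\cdot l + l = p^2 + l(p+1)$, we get $l(p+1) = m - p^2$, so $x = |B| \le m - p^2$, as claimed. One point worth flagging is that this argument only certifies that each input of $B$ meets every input of $A$; pairing the inputs of $B$ among themselves—in particular across distinct groups, which never share a reducer under the team assignment—is deferred to Step~6 via Algorithm~\ref{alg:2stepmethodforoddq}A or~\ref{alg:2stepmethodforoddq}B, and hence lies outside the scope of this claim.
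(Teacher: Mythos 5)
Your proposal is correct and follows essentially the same route as the paper: both use Claims~\ref{claim:first_ext_AU_claim_1} and~\ref{claim:first_ext_AU_claim_2} to argue that each of the $p+1$ teams has $q-p=l$ units of spare capacity per reducer and already contains every input of $A$, so each team can absorb one group of $l$ unit-sized inputs of $B$, giving at most $(p+1)(q-p)=m-p^2$ inputs in $B$. The only cosmetic difference is that you close via the exact relation $m=p^2+l(p+1)$ while the paper states the bound as the inequality $m\leq p^2+(p+1)(q-p)$; the substance is identical.
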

\begin{proof}
There are $p+1$ teams of $p$ reducers in each team, and inputs of the set $A$ use $q-p$ capacity of each of the reducers. Hence, each reducer can hold $q-p$ additional unit-sized (almost identical-sized) inputs. Since inputs of the set $A$ appear in each team (Claim~\ref{claim:first_ext_AU_claim_2}), an assignment of $q-p$ additional unit-sized inputs to all the reducers of a team provides pairs of all the inputs of the set $A$ with additional inputs. In this manner, $p+1$ teams, which hold $p^2$ inputs of the set $A$, can hold at most $(p+1)\times(q-p)$ additional inputs. Since $p^2<m\leq p^2+(p+1)\times(q-p)$, the set $B$ can hold $x\leq m-p^2$ inputs.
\end{proof}

\medskip\begin{theorem}\label{th:first_ext_AU_theorem}
Algorithm 2 assigns each pair of inputs to reducers.
\end{theorem}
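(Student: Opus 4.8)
The plan is to partition the set of all unordered pairs of the $m$ inputs into three families according to where the two inputs lie --- both in $A$, one in $A$ and one in $B$, or both in $B$ --- and to show that each family is covered by a distinct stage of Algorithm 2. Since $A$ and $B$ together contain all $m$ inputs, establishing that every pair in each of the three families shares a common reducer proves the theorem.

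First I would dispose of the pairs with both inputs in $A$. By Claims~\ref{claim:first_ext_AU_claim_1} and~\ref{claim:first_ext_AU_claim_2}, the \emph{AU method} applied in steps 2 and 3 assigns the $p^2$ inputs of $A$ to $p(p+1)$ reducers (arranged as $p+1$ teams of $p$ reducers) so that every pair of $A$-inputs shares a reducer, while using only $p$ of the $q$ capacity of each reducer and placing each $A$-input exactly once in every team. This is precisely the guarantee inherited from the correctness of the \emph{AU method}, so nothing further is needed here beyond invoking these two claims.

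Next I would handle the mixed pairs, one input from $A$ and one from $B$; this is the step I expect to carry the crux of the argument. The key is the bookkeeping between the number of $B$-groups and the number of teams: step 4 partitions the $x$ inputs of $B$ into $u=\big\lceil\frac{x}{q-p}\big\rceil$ groups of size $q-p$, and Claim~\ref{claim:first_ext_AU_claim_3} bounds $x\leq m-p^2\leq (p+1)(q-p)$, which forces $u\leq p+1$, i.e.\ there are at most as many groups as teams. Since each reducer already consumes only $p$ capacity for its $A$-bin, the residual $q-p$ capacity exactly accommodates one $B$-group, so the copying in step 5 is capacity-feasible. Because each team contains every $A$-input exactly once (Claim~\ref{claim:first_ext_AU_claim_2}) and the $i^{th}$ $B$-group is placed on all reducers of the $i^{th}$ team, every $A$-input meets every input of the $i^{th}$ group; as the groups partition $B$, every mixed pair is covered. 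The main obstacle is making the counting and capacity inequality $u\leq p+1$ tight, which is exactly where the choice of $p$ as the nearest prime satisfying $p+l=q$ and $p^2+l(p+1)\leq m$ is used.

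Finally, the pairs with both inputs in $B$ are covered by step 6, which invokes Algorithm~\ref{alg:2stepmethodforoddq}A or Algorithm~\ref{alg:2stepmethodforoddq}B on $B$ according as $q$ is odd or even; by Theorems~\ref{th:algo_odd_q_theorem} and~\ref{th:algo_even_q_theorem} these assign every pair of $B$-inputs to a common reducer. Combining the three cases, every pair of the $m$ inputs meets in at least one reducer, completing the proof.
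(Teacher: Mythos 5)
Your proposal is correct and follows essentially the same route as the paper, which only sketches this proof: the same three-way split into $A$--$A$ pairs (via the \emph{AU method} and Claims~\ref{claim:first_ext_AU_claim_1}--\ref{claim:first_ext_AU_claim_2}), mixed $A$--$B$ pairs (via the team/group correspondence and the bound of Claim~\ref{claim:first_ext_AU_claim_3}), and $B$--$B$ pairs (via Theorems~\ref{th:algo_odd_q_theorem} and~\ref{th:algo_even_q_theorem}). If anything, your write-up is more complete than the paper's, since you make explicit the capacity and counting argument $x\leq(p+1)(q-p)$ hence $u\leq p+1$ that the paper leaves implicit.
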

We are not proving Theorem~\ref{th:first_ext_AU_theorem} here. The proof of Theorem~\ref{th:first_ext_AU_theorem} considers the fact that all the inputs of the set $A$ are paired with each other using the \emph{AU method}, and they are also paired with all the remaining inputs of the set $B$. Further, inputs of the set $B$ will be paired with each other by using Algorithm~\ref{alg:2stepmethodforoddq}A or~\ref{alg:2stepmethodforoddq}B (Theorems~\ref{th:algo_odd_q_theorem} or~\ref{th:algo_even_q_theorem}).


\end{document}